\documentclass[11pt]{article}

\usepackage[pagebackref,colorlinks,hypertexnames=false]{hyperref}
\usepackage{orcidlink}

\usepackage{amsmath} 
\usepackage{amsthm} 
\usepackage{amssymb}	
\usepackage{graphicx} 
\usepackage{multicol} 
\usepackage{multirow}
\usepackage{color}
\usepackage[dvips,letterpaper,margin=1in,bottom=1in]{geometry}
\usepackage[capitalize,noabbrev]{cleveref}

\usepackage[utf8]{inputenc}
\usepackage[english]{babel}

\usepackage{bm}

\usepackage{mathtools}
\makeatletter
\newcommand{\vast}{\bBigg@{3}}
\makeatother

\usepackage{thmtools}

\usepackage{adjustbox}

\newtheorem{theorem}{Theorem}[section]

\newtheorem{lemma}[theorem]{Lemma}
\newtheorem{corollary}[theorem]{Corollary}
\newtheorem{proposition}[theorem]{Proposition}

\newtheorem{definition}[theorem]{Definition}

\newcommand{\ketbra}[3][]{\left\lvert #2 \vphantom{#3} \right>_{#1}\!\!\left< #3 \vphantom{#2} \right\rvert}

\DeclarePairedDelimiter\rbra{\lparen}{\rparen}
\DeclarePairedDelimiter\sbra{\lbrack}{\rbrack}
\DeclarePairedDelimiter\cbra{\{}{\}}
\DeclarePairedDelimiter\abs{\lvert}{\rvert}
\DeclarePairedDelimiter\Abs{\lVert}{\rVert}

\DeclarePairedDelimiter\ket{\lvert}{\rangle}
\DeclarePairedDelimiter\bra{\langle}{\rvert}
\DeclarePairedDelimiter\ave{\langle}{\rangle}

\newcommand{\tr} {\operatorname{tr}}

\newcommand{\diag} {\operatorname{\mathsf{diag}}}
\newcommand{\polylog} {\operatorname{polylog}}

\newcommand{\Vol} {\operatorname{Vol}}
\newcommand{\Herm} {\operatorname{Herm}}

\newcommand{\norm}[2][]{\Abs*{#2}_{#1}}
\newcommand{\fnorm}[1]{\Abs*{#1}_F}

\newcommand{\Exps}[2][]{\mathop{\EE}_{#1}\sbra*{#2}}

\newcommand{\CC}{\mathbb{C}}
\newcommand{\EE}{\mathbb{E}}
\newcommand{\NN}{\mathbb{N}}
\newcommand{\RR}{\mathbb{R}}

\begin{document}

\title{Unitary Synthesis with Near-Optimal T-Count \\ for Near-Clifford Unitaries}

\author{
    Wang Fang\footnote{Wang Fang is with the School of Informatics, University of Edinburgh, EH8 9AB Edinburgh, United Kingdom (e-mail: \href{mailto:Wang.Fang@ed.ac.uk}{\nolinkurl{Wang.Fang@ed.ac.uk}}).}
    \and
    Chris Heunen\footnote{Chris Heunen is with the School of Informatics, University of Edinburgh, EH8 9AB Edinburgh, United Kingdom (e-mail: \href{mailto:Chris.Heunen@ed.ac.uk}{\nolinkurl{Chris.Heunen@ed.ac.uk}}).}
    \and
    Qisheng Wang\footnote{Qisheng Wang is with the School of Computer Science, Shanghai Jiao Tong University, Shanghai 200240, China (e-mail: \href{mailto:QishengWang1994@gmail.com}{\nolinkurl{QishengWang1994@gmail.com}}).}
}
\date{}

\maketitle

\begin{abstract}
    We present an approach to unitary synthesis that implements an arbitrary $n$-qubit unitary operator $U$ by a Clifford+T circuit with T-count $\widetilde{O}\rbra{2^n d_F^{\mathcal{C}}(U)}$,\footnote{In this paper, $\widetilde{O}\rbra{f} = O\rbra{f \polylog\rbra{f}}$.} where $d_F^{\mathcal{C}}(U)$ is the Frobenius norm distance of $U$ to the Clifford group. 
    The T-count is shown to be \textit{near-optimal} when $d_F^{\mathcal{C}}(U)$ is a constant. 
    Our approach improves the previous best upper bound $\widetilde{O}(2^{4n/3})$ due to \hyperlink{cite.Tan25}{Tan (2025)} for a large class of unitary operators $U$ as long as $d_F^{\mathcal{C}}(U) \ll 2^{n/3}$. 
\end{abstract}

\textbf{Keywords: Circuit synthesis, quantum computing, quantum circuits, unitary synthesis, T-count.}

\newpage

\tableofcontents
\newpage

\section{Introduction}
Unitary synthesis, also known as the synthesis of quantum circuits, is a fundamental problem in quantum computing. 
Early research \cite{BBC+95,Kni95,Cyb01,AS03,VMS04,MVBS04,SMB04,SBM06} focused on the gate count of exact synthesis, i.e., how many elementary quantum gates are needed to exactly implement an $n$-qubit unitary operator, and showed that $\Theta\rbra{2^{2n}}$ two-qubit gates are sufficient and necessary. 
For a universal discrete gate set, unitary synthesis to precision $\varepsilon$ can be done using $O\rbra{2^{2n}\polylog\rbra{1/\varepsilon}}$ two-qubit gates by the Solovay-Kitaev theorem \cite{DN06}.
Beyond total gate count, a growing body of work over the past several years has refined the unitary synthesis problem by considering circuit depth, ancilla qubits, and space-depth tradeoffs for CNOT circuits~\cite{JST+20}, state preparation~\cite{WZY24,STY+23,ZLY22,YZ23}, and general unitary synthesis~\cite{STY+23,YZ23}.
Related query-based formulations have also led to new upper and lower bounds for implementing arbitrary unitaries with query access to an appropriate oracle~\cite{Ros21,LMW23}.

Unitary synthesis by Clifford+T circuits is becoming more and more important, as a large number of quantum error correcting codes can be used to realize fault-tolerant quantum computation with the Clifford+T gate set. 
However, implementing T gates usually requires magic state distillation \cite{BK05,CTV17} and thus turns out to be much more costly than implementing Clifford gates. 
This difficulty is essentially inevitable due to the Eastin–Knill theorem \cite{EK09}. 
Despite recent progress in optimizing the cost of implementing T gates \cite{GSJ24}, it still dominates the total cost of implementing Clifford+T circuits.
On the other hand, an efficient classical simulation of Clifford circuits is known by the Gottesman-Knill theorem \cite{Got98} and was later extended to simulating Clifford+T circuits with few T gates \cite{BG16}. 
Therefore, T-count plays an important role in both fault-tolerant quantum computation and classical simulation of quantum circuits. 
This naturally raises the following question:
\begin{quotation}
    \emph{How many T gates are needed to implement an arbitrary unitary operator?}
\end{quotation}

Extensive study has focused on the case of one-qubit unitary operators \cite{KMM13,Sel15,ross2016optimal,beverland2020lower} and showed that $\Theta\rbra{\log\rbra{1/\varepsilon}}$ T gates are sufficient \cite{ross2016optimal} and necessary \cite{beverland2020lower}. 
Another special case is quantum state preparation, which specifies the first column of the unitary operator. 
An approach with T-count $O\rbra{\sqrt{2^n n \log\rbra{n/\varepsilon}} + \log^2\rbra{n/\varepsilon}}$ for quantum state preparation was proposed in \cite{Low2024tradingtgatesdirty} and was later improved to $\Theta\rbra{\sqrt{2^n \log\rbra{1/\varepsilon}}+\log\rbra{1/\varepsilon}}$ in \cite{gosset2024quantumstatepreparationoptimal}.
The result in \cite{gosset2024quantumstatepreparationoptimal} is obtained by an optimal diagonal unitary synthesis with the same T-count and further implies a T-count of $O\rbra{2^{3n/2}\sqrt{n\log\rbra{1/\varepsilon}} + 2^n n \log\rbra{1/\varepsilon}}$ for unitary synthesis by Clifford+T circuits. 
In addition, another approach to unitary synthesis with T-count $\widetilde{O}\rbra{2^{3n/2}}$ is implied in \cite[Theorem 1.5]{Ros21}.

A recent breakthrough \cite{Tan25} broke the $2^{3n/2}$-barrier of the T-count with an approach to unitary synthesis by Clifford+T circuits with T-count $O\rbra{2^{4n/3} L^{2/3} + 2^nL}$, where $L = n + \log\rbra{1/\varepsilon}$. 
However, only a lower bound of $\Omega\rbra{2^{n}\sqrt{\log\rbra{1/\varepsilon}}+\log\rbra{1/\varepsilon}}$ on the T-count was previously known in \cite[Theorem 4.3]{gosset2024quantumstatepreparationoptimal}. 

In this paper, we present a new approach for unitary synthesis. 
Specifically, we can implement an arbitrary $n$-qubit unitary operator $U$ by a Clifford+T circuit with T-count $\widetilde{O}\rbra{2^n d_F^{\mathcal{C}}(U) \polylog\rbra{1/\varepsilon}}$, where $d_F^{\mathcal{C}}(U)$ is the Frobenius norm distance of $U$ to the Clifford group $\mathcal{C}$ generated by the Hadamard gate $H$, the phase gate $S$, the CNOT gate. 
Moreover, 
\begin{enumerate}
    \item The T-count of our unitary synthesis is \textit{near-optimal} for a class of unitary operators $U$ that are constantly close to the Clifford group, i.e., $d_F^{\mathcal{C}}(U) = \Theta\rbra{1}$. 
    In comparison with the previous results, optimal approaches are only known for one-qubit \cite{ross2016optimal,beverland2020lower} and diagonal \cite{gosset2024quantumstatepreparationoptimal} unitary operators. 
    \item The T-count of our unitary synthesis improves the prior best $\widetilde{O}\rbra{2^{4n/3}\polylog\rbra{1/\varepsilon}}$ due to \cite{Tan25} for any unitary operator $U$ satisfying $d_F^{\mathcal{C}}(U) \ll 2^{n/3}$. 
\end{enumerate}
Technically, our approach suggests a new idea to unitary synthesis. 
In contrast to the previous approaches \cite{Low2024tradingtgatesdirty,gosset2024quantumstatepreparationoptimal,Tan25} that focus on the decomposition of unitary operators, we focus on the decomposition of Hamiltonians.
Specifically, we present a \textit{Low T-Count Decomposition} of any Hamiltonian $\mathsf{H}$, which allows us to implement the unitary $U = e^{-i\mathsf{H}}$ with low T-count. 
With this good decomposition of Hamiltonians, we can therefore synthesize any unitary operators by adopting quantum algorithms for Hamiltonian simulation (e.g., \cite{Low2019hamiltonian,GSLW19}) equipped with the Linear-Combination-of-Unitaries (LCU) algorithm \cite{childs2012hamiltonian}. 

\subsection{Main results}

We focus on the unitary synthesis by Clifford+T circuits. 
Here, a Clifford+T circuit is described by a sequence of the Hadamard gate $H$, the phase gate $S$, the CNOT gate, and the $\pi/8$ gate $T$ (also called the T gate). 
For the sake of rigor, we first clarify the notion of approximately implementing a unitary operator with another unitary operator that acts on a larger Hilbert space. 

\begin{definition}[Approximate implementation of unitaries]
    An $n$-qubit unitary operator $U$ is said to be implemented by an $\rbra{n+m}$-qubit unitary operator $V$ to precision $\varepsilon$ (in the diamond norm distance), if $\Abs{ \mathcal{U} - \mathcal{V} }_\diamond \leq \varepsilon$, where $\mathcal{U} \colon \rho \mapsto U\rho U^\dag$ and
    \[
        \mathcal{V} \colon \rho \mapsto \tr_{\textup{env}}\rbra[\Big]{ V \rbra[\big]{\rho \otimes \underbrace{\ketbra{0}{0}^{\otimes m}}_{\textup{env}}} V^\dag }.
    \]
\end{definition}

Our main theorem is a unitary synthesis with near-optimal T-count when the unitary operator is constantly close to the Clifford group. 

\begin{theorem} [Unitary synthesis with near-optimal T-count for near-Clifford unitaries] \label{corollary:near-Clifford-intro}
    Any $n$-qubit unitary $U$ can be implemented to precision $\varepsilon$ by a Clifford+T circuit using 
    \[
        \widetilde{O}\rbra*{2^n \max\cbra{d_F^{\mathcal{C}}(U), 1} \polylog\rbra{1/\varepsilon}}
    \]
    T gates and ancilla qubits, where 
    \[
    d_F^{\mathcal{C}}(U) = \min_{C \in \mathcal{C}} {} \min_{\theta \in \mathbb{R}} {} \Abs{U - e^{i\theta}C}_F
    \]
    is the distance of $U$ to the Clifford group and $\Abs{A}_F = \sqrt{\tr\rbra{A^\dag A}}$ is the Frobenius norm. 
    Furthermore, for $d_F^{\mathcal{C}}(U) \leq O\rbra{1}$, any such implementation of $U$ requires $\Omega\rbra{2^n \sqrt{\log\rbra{1/\varepsilon}}+\log\rbra{1/\varepsilon}}$ T gates (even with measurements and adaptivity). 
\end{theorem}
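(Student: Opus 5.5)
The plan has two halves: an upper bound obtained via Hamiltonian simulation, and a lower bound inherited from diagonal synthesis. For the upper bound, I first pick $C\in\mathcal{C}$ and $\theta$ attaining $d := d_F^{\mathcal{C}}(U)$, and set $W = e^{-i\theta}C^\dagger U$. Since the Frobenius norm is unitarily invariant, $\Abs{W - I}_F = d$. It therefore suffices to implement $W$ with the claimed T-count and then append the Clifford $C$, which costs no T gates; the global phase is irrelevant in the diamond norm.

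To implement $W$, I write $W = e^{-i\mathsf{H}}$ with $\mathsf{H}$ the principal logarithm, so the eigenphases $\phi_j\in(-\pi,\pi]$. Using $\abs{e^{i\phi}-1} = 2\abs{\sin(\phi/2)} \ge (2/\pi)\abs{\phi}$ on this range, I get
\[
\Abs{\mathsf{H}}_F \le \tfrac{\pi}{2}\, d .
\]
I then invoke the paper's Low T-Count Decomposition of a Hamiltonian, expressing $\mathsf{H}$ as a linear combination $\sum_k \alpha_k V_k$ of unitaries, where each $V_k$ is a Clifford or a diagonal unitary. For diagonal unitaries, \cite{gosset2024quantumstatepreparationoptimal} gives synthesis with T-count $\widetilde{O}(2^{n/2})$ per controlled use. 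The key requirements are that the $\ell_1$-weight satisfies $\lambda = \sum_k\abs{\alpha_k} \lesssim 2^{n/2}\Abs{\mathsf{H}}_F$ (up to polylog factors), and that the PREPARE/SELECT oracles of LCU cost $\widetilde{O}(2^{n/2})$ T gates. One natural route writes $\mathsf{H}$ in a Pauli-like or Weyl basis; Cauchy--Schwarz on the $4^n$ coefficients gives $\ell_1 \le 2^{n}\cdot\Abs{\mathsf{H}}_F/2^{n/2}$. Grouping the Paulis into $2^n$ commuting (stabilizer) families, each diagonalizable by a Clifford, makes SELECT cost about one diagonal synthesis.

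Next I run qubitization or QSVT Hamiltonian simulation \cite{Low2019hamiltonian,GSLW19} on the block encoding with normalization $\lambda$. This needs $O(\lambda + \log(1/\varepsilon))$ queries, giving a total T-count of
\[
\widetilde{O}\bigl((2^{n/2} d + 1)\cdot 2^{n/2}\polylog(1/\varepsilon)\bigr) = \widetilde{O}\bigl(2^n \max\{d,1\}\polylog(1/\varepsilon)\bigr).
\]
The amplitude-amplification and ancilla overhead is of the same order. The main obstacle is securing the bound $\lambda\cdot(\text{cost per query}) = \widetilde{O}(2^n\Abs{\mathsf{H}}_F)$ rather than the naive $2^{3n/2}$. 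My first attempt is the stabilizer-family grouping with Cauchy--Schwarz applied across families, which should be exactly what the Low T-Count Decomposition supplies. Any errors from approximate diagonal synthesis are absorbed by setting the per-query precision to $\varepsilon/\text{poly}(\lambda)$, which costs only polylog factors.

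For the lower bound, when $d = O(1)$ I reduce from diagonal synthesis. Consider diagonal unitaries $D = \mathrm{diag}(e^{i\phi_x})$ with $\abs{\phi_x}\le c\,2^{-n/2}$; then $\Abs{D-I}_F \le c$, so these are near-Clifford. The counting and volume argument of \cite[Theorem 4.3]{gosset2024quantumstatepreparationoptimal} already applies within such a small ball: the number of distinct circuits with $t$ T gates, including measurements and adaptivity, must exceed the $\varepsilon$-packing number of this family. That packing number is $(c\,2^{-n/2}/\varepsilon)^{\Omega(2^n)}$, which yields $t = \Omega(2^n\sqrt{\log(1/\varepsilon)})$ in their parametrization. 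The $\Omega(\log(1/\varepsilon))$ term follows from the single-qubit bound \cite{beverland2020lower} applied to $R_z(\delta)\otimes I$, which is also near-Clifford.
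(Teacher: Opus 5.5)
Your reduction to $W=e^{-i\theta}C^\dagger U=e^{-i\mathsf H}$ with $\Abs{\mathsf H}_F\le\frac{\pi}{2}d$ matches the paper, and your single-qubit $\Omega(\log(1/\varepsilon))$ term is fine. Both main halves, however, have a genuine gap.

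\textbf{Upper bound.} Your stabilizer-family LCU does not have per-query cost $\widetilde O(2^{n/2})$. SELECT must multiplex about $2^n$ different $n$-qubit diagonals, so it is one diagonal on roughly $2n$ qubits. By \cref{lem:optimal_diagonal} that costs $\widetilde{\Theta}(2^n)$ T gates in general, plus a multiplexed Clifford $\sum_j\ketbra{j}{j}\otimes C_j$, which is not itself Clifford. Your weight is also $\lambda\approx\sum_j\Abs{D_j}\le\sum_j\Abs{D_j}_F\le 2^{n/2}\Abs{\mathsf H}_F$, because any expansion of $D_j$ into unitaries has weight at least $\Abs{D_j}$. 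This is tight when each family's diagonal is concentrated on one entry, so the product is the naive $2^{3n/2}\Abs{\mathsf H}_F$.

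In fact you have the split the wrong way round. The rest of the simulation circuit depends only on $\lambda$ and $\varepsilon$, so the final unitary is determined by the oracle. Counting as in \cref{lem:state_count} against the $(\sqrt{2^n}\varepsilon)^{-\Omega(4^n)}$ targets of \cref{lem:hamiltonian_count} then shows that a fixed block encoding of a generic $\mathsf H$ must cost about $2^n$ T gates. The saving must therefore come from $\lambda$.

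The missing idea is the paper's decomposition with $\lambda\le 36\Abs{\mathsf H}_F$, with no $2^{n/2}$ factor. It uses $\widetilde O(2^n)$ terms $B_1H^{\otimes n}DH^{\otimes n}B_2$ with $B_1,B_2$ Boolean phase oracles, not Clifford-conjugated diagonals. The construction has three steps:
\begin{itemize}
\item Bonami hypercontractivity gives $\EE_{x,y}\abs{x^{\mathrm T}Ay}\ge\Abs{A}_F/18$ (\cref{lem:1_norm_xy}).
\item After conjugation by $H^{\otimes n}$ and a phase-matching $D$, this yields a term with Hilbert--Schmidt overlap at least $\Abs{A}_F/18$ with any residual $A$ (\cref{lem:tr}).
\item Greedy subtraction with coefficient $\Abs{A}_F/(18\cdot 2^n)$ shrinks the residual by $\sqrt{1-1/(324\cdot 2^n)}$ per step, and the coefficients form a geometric series (\cref{lem:arbitrary_operator_approximation}).
\end{itemize}
SELECT is then three $(n+a)$-qubit diagonals with $2^a=\widetilde O(2^n)$, costing $\widetilde O(2^n)$ per query. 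Multiplying by $O(\Abs{\mathsf H}_F+\log(1/\varepsilon))$ queries gives the bound.

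\textbf{Lower bound.} Your diagonal family has only $2^n$ real parameters. The counting $2^{O(n^2+t^2)}\ge(2^{-n/2}/\varepsilon)^{\Omega(2^n)}$ therefore yields only $t=\Omega\bigl(2^{n/2}\sqrt{\log(2^{-n/2}/\varepsilon)}\bigr)$, not $2^n$. No diagonal family can do better, since \cref{lem:optimal_diagonal} implements every diagonal with $O(\sqrt{2^n\log(1/\varepsilon)}+\log(1/\varepsilon))$ T gates.

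You need a $\Theta(4^n)$-dimensional family of near-identity unitaries, which the paper builds as follows:
\begin{itemize}
\item Pack Hermitian $\mathsf H$ with $\Abs{\mathsf H}_F\le 1/2$ in $\RR^{4^n-1}$ at Frobenius separation $\Theta(2^{n/2}\varepsilon)$.
\item Transfer the separation to $e^{-i\mathsf H}$ via $\Abs{e^{-i\mathsf H_j}-e^{-i\mathsf H_k}}_F\ge\Abs{\mathsf H_j-\mathsf H_k}_F/3$.
\item Transfer it to Choi states via \cref{lem:t2f_distance} plus a discretization of the global phase, giving $(\sqrt{2^n}\varepsilon)^{-\Omega(4^n)}$ pairwise $\varepsilon$-far Choi states.
\end{itemize}
Then \cref{lem:state_count} gives $t=\Omega(2^n\sqrt{\log(1/\varepsilon)-n})$.
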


The unitary synthesis given in \cref{corollary:near-Clifford-intro} is optimal up to a polylogarithmic factor in $\varepsilon$ for all unitary $U$ with $d_F^{\mathcal{C}}(U) \leq O\rbra{1}$. 
The current best unitary synthesis due to \cite{Tan25} has T-count $\widetilde{O}\rbra{2^{4n/3}\polylog\rbra{1/\varepsilon}}$. 
Our unitary synthesis given in \cref{corollary:near-Clifford-intro} can achieve a better T-count when $d_F^{\mathcal{C}}(U) \ll 2^{n/3}$. 
Combining the result of \cite{Tan25}, we obtain a unitary synthesis with T-count
\[
    \widetilde{O}\rbra*{ 2^n \polylog\rbra{1/\varepsilon} \cdot \min\cbra*{2^{n/3}, d_F^{\mathcal{C}}(U) + 1} }. 
\]

\subsection{Techniques}

We summarize our technical contributions as follows. 
\begin{itemize}
    \item For the upper bound, our technical contribution is a Hamiltonian-based framework for unitary synthesis, which is achieved by a low T-count decomposition of Hamiltonians (see \cref{lemma:Hamiltonian-decomposition-intro}). 
    This decomposition is discovered by adopting the hypercontractivity of Boolean functions \cite{Bon68,Bon70} (see \cref{lemma:u-overlap}). 
    \item For the lower bound, we strengthen the result of \cite{gosset2024quantumstatepreparationoptimal} to the case where the unitary operator is constantly close to the identity operator. 
    This is achieved by adopting the lower bound for sphere packing \cite{Rog47,Sch58,Bal92,Ven13,CJMS23,schildkraut2024lower}. 
\end{itemize}

For illustration, we provide a simpler version of \cref{corollary:near-Clifford-intro} below. 

\begin{theorem}[Hamiltonian-based unitary synthesis with near-optimal T-count for near-identity unitaries] \label{thm:near-identity-intro}
    For any $n$-qubit Hamiltonian $\mathsf{H}$, we can implement the unitary operator $e^{-i\mathsf{H}}$ to precision $\varepsilon$ by a Clifford+T circuit using 
    \[
    \widetilde{O}\rbra*{2^n \max\cbra{\Abs{\mathsf{H}}_F, 1} \polylog\rbra{1/\varepsilon}}
    \]
    T gates and ancilla qubits.
    Furthermore, for $\Abs{\mathsf{H}}_F \leq O\rbra{1}$, any such implementation of $e^{-i\mathsf{H}}$ requires $\Omega\rbra{2^n \sqrt{\log\rbra{1/\varepsilon}}+\log\rbra{1/\varepsilon}}$ T gates (even with measurements and adaptivity). 
\end{theorem}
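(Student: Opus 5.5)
The upper bound will go through Hamiltonian simulation applied to an LCU of Pauli operators. Expand $\mathsf{H} = \sum_{P} \alpha_P P$ in the Pauli basis. Then $\Abs{\mathsf{H}}_F^2 = 2^n \sum_P \abs{\alpha_P}^2$. If we used the Pauli LCU naively, the block-encoding normalization would be $\lambda = \sum_P \abs{\alpha_P}$. Cauchy–Schwarz gives $\lambda \le \sqrt{4^n}\cdot \Abs{\mathsf{H}}_F/2^{n/2} = 2^{n/2}\Abs{\mathsf{H}}_F$. Hamiltonian simulation by qubitization/QSVT \cite{Low2019hamiltonian,GSLW19} then uses $O(\lambda + \log(1/\varepsilon))$ calls to the block-encoding, each call preparing a state of $4^n$ amplitudes. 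That costs about $2^{n}$ T gates via \cite{gosset2024quantumstatepreparationoptimal}, for a total of $2^{3n/2}$, which is too much.

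The plan is therefore to find a \emph{low T-count decomposition}: write $\mathsf{H} = \sum_j \beta_j W_j$, where each $W_j$ is a unitary (or block-encodable operator) that costs $\widetilde{O}(2^n)$ T gates in total to select, and $\sum_j\abs{\beta_j} = O(\Abs{\mathsf{H}}_F)$ up to polylogs. One choice is to group the Paulis as $X^a Z^b$ by their $X$-part $a$. For fixed $a$, the operator $D_a=\sum_b \alpha_{a,b} Z^b$ is diagonal, with entries $f_a(x)$. Using hypercontractivity (Bonami) on the Boolean functions $f_a$, split the diagonal values into a bounded part with $\abs{f_a}\le$ polylog times $\Abs{f_a}_2$, plus tails. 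Each diagonal unitary piece is then synthesized with optimal diagonal synthesis \cite{gosset2024quantumstatepreparationoptimal}, at $O(\sqrt{2^n\log(1/\varepsilon)})$ T gates per diagonal. Controlling all $2^n$ values of $a$ at once should then cost $\widetilde{O}(2^n)$ T gates per block-encoding call, via a multiplexed lookup with the data register of size $2^n$ sharing the diagonal structure. The subnormalization becomes $\sum_a \Abs{f_a}_2\cdot\polylog \le 2^{n/2}\sqrt{\sum_a\Abs{f_a}_2^2}\cdot\polylog = \Abs{\mathsf{H}}_F\cdot\polylog$.

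With $\lambda=\widetilde{O}(\Abs{\mathsf{H}}_F)$ and $\widetilde{O}(2^n\polylog(1/\varepsilon))$ T gates per block-encoding query, QSVT needs $O(\lambda+\log(1/\varepsilon))$ queries. This gives $\widetilde{O}(2^n\max\{\Abs{\mathsf{H}}_F,1\}\polylog(1/\varepsilon))$ T gates. Ancillas come from the state-preparation and lookup registers, and the errors of the approximate block-encoding propagate to the diamond norm linearly.

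The lower bound uses counting via packing. For constant $c$, consider the set $\{e^{-i\mathsf{H}} : \Abs{\mathsf{H}}_F\le c\}$. This set contains a Frobenius ball of real dimension $4^n$ around the identity in the unitary group. By sphere-packing lower bounds (Minkowski–Hlawka type \cite{Rog47,Bal92}), it contains about $(c'/\varepsilon')^{4^n}$ unitaries that are pairwise $\varepsilon$-distinguishable in the diamond norm. Here we relate the Frobenius distance scaled by $2^{-n/2}$ to the operator/diamond norm; the relevant scale is a per-entry precision of $\varepsilon$. Following \cite{gosset2024quantumstatepreparationoptimal}, a Clifford+T circuit with measurements, adaptivity, and $t$ T gates can be normalized to Clifford circuits interleaved with $t$ magic-state injections. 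This yields at most $2^{O(t\cdot\text{poly})}$ distinct channels up to Clifford equivalence, and more precisely a bound on the number of $\varepsilon$-distinct channels. Comparing the counts gives $t^2\gtrsim 4^n\log(1/\varepsilon)$, i.e.\ $t=\Omega(2^n\sqrt{\log(1/\varepsilon)})$. The additive $\log(1/\varepsilon)$ comes from the single-qubit case \cite{beverland2020lower}, using $\mathsf{H}=\theta Z_1$.

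The main obstacle is the upper-bound decomposition: showing that hypercontractivity really yields polylog-bounded sup norms after splitting, and that the $2^n$ controlled diagonals can be selected with only $\widetilde{O}(2^n)$ T gates total rather than $2^n\cdot 2^{n/2}$. My first attempt would be to apply Bonami to degree-truncated parts of $f_a$, and to handle the high-sup-norm tail by a separate sparse block-encoding.
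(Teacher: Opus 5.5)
\textbf{Upper bound: the step you flag does not go through.} Hypercontractivity gives $\norm[q]{f}\le (q-1)^{d/2}\norm[2]{f}$ for a degree-$d$ polynomial. At best this yields $\norm[\infty]{f}\lesssim n^{d/2}\norm[2]{f}$, which is polylogarithmic only for constant $d$. Your $f_a(x)=\mathsf{H}_{x\oplus a,x}$ are arbitrary functions whose high-degree part can carry all of $\norm[2]{f_a}$, so neither Bonami nor degree truncation gives a polylog sup-norm bound. Splitting by value needs no hypercontractivity, and the bounded part is indeed cheap: controlled-$X^a$ is Clifford and $\sum_a\ketbra{a}{a}\otimes U_a$ is a $2n$-qubit diagonal, so that SELECT costs $\widetilde{O}(2^n)$. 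The problem is that the tail can be all of $\mathsf{H}$. Take $\mathsf{H}=\ketbra{\psi}{0^n}+\ketbra{0^n}{\psi}$ with $\ket{\psi}=H^{\otimes n}\ket{0^n}$. Then $\Abs{\mathsf{H}}_F\le 2$, but each $f_a$ with $a\neq 0$ has exactly two nonzero values, both $2^{-n/2}$. Hence $\norm[\infty]{f_a}/\norm[2]{f_a}=2^{n/2}/\sqrt{2}$ and every entry lies in the tail. That tail has a dense row and column, so a sparse-access block-encoding has normalization $\sqrt{s_r s_c}\,\Abs{T}_{\max}=\Theta(2^{n/2})$, which puts you back at $\widetilde{O}(2^{3n/2})$. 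In general, Markov only tells you that at most a $1/C^2$ fraction of each generalized diagonal lies in the tail. So the tail is an essentially arbitrary matrix, and block-encoding it with normalization $O(\Abs{\mathsf{H}}_F)$ at cost $\widetilde{O}(2^n)$ is the original problem. Any generic method for that would apply to $\mathsf{H}$ directly and make the split pointless; one candidate is a state-preparation encoding of $T/\Abs{T}_F$ built from a multiplexed column-state preparation.

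\textbf{What the paper does instead.} It applies Bonami not to the entries of $\mathsf{H}$ but to a degree-$2$ form in auxiliary random signs. Take random Boolean phase oracles $B_1=\diag(x)$ and $B_2=\diag(y)$. Each Hadamard-basis diagonal entry $\bra{i}H^{\otimes n}B_1\mathsf{H}B_2H^{\otimes n}\ket{i}$ is then a bilinear form in $(x,y)$. Its coefficient matrix has Frobenius norm exactly $\Abs{\mathsf{H}}_F/2^n$, because the Hadamards spread the mass evenly. Degree-$2$ Bonami plus H\"older then give $\EE\sum_i\abs{\bra{i}\cdots\ket{i}}\ge\Abs{\mathsf{H}}_F/18$. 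Choosing the diagonal $D$ to cancel the phases yields a unitary $V=B_1H^{\otimes n}DH^{\otimes n}B_2$ with $\ave{V,\mathsf{H}}_{\mathrm{HS}}\ge\Abs{\mathsf{H}}_F/18$. Greedy residual fitting shrinks the residual by a factor $\sqrt{1-1/(324\cdot 2^n)}$ per step. Hence $m=O(2^n\log(\Abs{\mathsf{H}}_F/\varepsilon))$ terms suffice, with $\sum_k\alpha_k\le 36\Abs{\mathsf{H}}_F$. The SELECT over all $m$ terms is a product of $(n+\log m)$-qubit diagonals costing $\widetilde{O}(\sqrt{2^n m})=\widetilde{O}(2^n)$. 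No sup-norm control of matrix entries is ever needed, and with this lemma in place your QSVT step goes through as you describe.

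\textbf{Lower bound.} Your sketch is essentially the paper's route: a Frobenius packing, the Choi-state trace-distance bound, the $2^{O(n^2+t^2)}$ state count of \cite{gosset2024quantumstatepreparationoptimal}, and the single-qubit bound for the additive term. You still need to handle the global phase. Also, $\Abs{\theta Z_1}_F=\abs{\theta}2^{n/2}$, so you must take $\abs{\theta}=O(2^{-n/2})$ to stay within the hypothesis.
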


\cref{corollary:near-Clifford-intro} is an immediate corollary of the following \cref{thm:near-identity-intro}. 
The details of our approach are illustrated in the remainder of this section. 

\subsubsection{Upper bounds}

To implement the unitary operator $e^{-i\mathsf{H}}$ with the T-count upper bound in \cref{thm:near-identity-intro}, our preliminary idea is to approximately decompose the Hamiltonian $\mathsf{H}$ as a linear combination of unitary operators in the form
\begin{equation*}
    \mathsf{H} \approx \sum_{k=0}^{m-1} \alpha_k V_k. 
\end{equation*}
Then, the unitary operator $e^{-i\mathsf{H}}$ can be approximately implemented by Hamiltonian simulation (e.g., \cite{Low2019hamiltonian,GSLW19}) with LCU \cite{childs2012hamiltonian}. 
However, two major difficulties remain:
\begin{enumerate}
    \item The absolute sum of the coefficients $\alpha_k$, i.e., $\Abs{\alpha}_1$, should be small enough to apply the Hamiltonian simulation \cite{Low2019hamiltonian,GSLW19} at a low cost. 
    \item Each unitary term $V_k$ should be easy to implement so that their linear combinations can be implemented by LCU \cite{childs2012hamiltonian} at a low cost. 
\end{enumerate}

\paragraph{Low T-count decomposition.}
To this end, we provide a useful decomposition that addresses the above difficulties. 
\begin{lemma} [Low T-count decomposition, \cref{lem:arbitrary_operator_approximation} simplified] \label{lemma:Hamiltonian-decomposition-intro}
    For any $n$-qubit operator $\mathsf{H}$, we can approximately decompose $\mathsf{H}$ as a linear combination of $m = O\rbra{2^n \log\rbra{\Abs{\mathsf{H}}_F/\varepsilon}}$ unitary operators $V_0, V_1, \dots, V_{m-1}$ such that 
    \begin{equation*}
        \Abs*{\mathsf{H} - \widetilde{\mathsf{H}}}_F \leq \varepsilon, \text{ where } \widetilde{\mathsf{H}} = \sum_{k=0}^{m-1} \alpha_k V_k
    \text{ and }
        \Abs{\alpha}_1 \leq O\rbra{\Abs{\mathsf{H}}_F}.
    \end{equation*}
    Moreover, each $V_k$ is of the form \[B_1^{(k)}H^{\otimes n}D^{(k)}H^{\otimes n}B_2^{(k)},\] where $B_1^{(k)}, B_2^{(k)}, D^{(k)}$ are diagonal unitary operators. 
\end{lemma}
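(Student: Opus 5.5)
The plan is to work in the "shift basis". Write the matrix as a sum over shifts,
\[
\mathsf{H}=\sum_{a\in\{0,1\}^n} X^a\,\mathrm{diag}(h_a),\qquad h_a(x)=\mathsf{H}_{x\oplus a,\,x}.
\]
The shift operators satisfy $X^a=H^{\otimes n}Z^aH^{\otimes n}$ with $Z^a$ diagonal, and the decomposition is orthogonal in Frobenius norm: $\Abs{\mathsf{H}}_F^2=\sum_a\Abs{h_a}_2^2$. Any term of the form $B_1X^aB_2$ with $B_1,B_2$ diagonal unitaries is already of the required type $B_1H^{\otimes n}DH^{\otimes n}B_2$, with $D=Z^a$.

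A bounded real diagonal $\mathrm{diag}(g)$ with $\Abs{g}_\infty\le r$ is turned into unitaries by
\[
\mathrm{diag}(g)=\tfrac r2\bigl(W+W^\dagger\bigr),\qquad W=\mathrm{diag}\bigl(e^{i\arccos(g/r)}\bigr),
\]
and complex $g$ is handled by splitting into real and imaginary parts. The naive coefficient cost is $\sum_a\Abs{h_a}_\infty$. This can be far larger than $\Abs{\mathsf{H}}_F$, so the real work is to avoid paying the $\ell_\infty$ norm of each shift separately.

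To beat it, I will use the full freedom of a general diagonal $D$ and not only $D=Z^a$. For a diagonal unitary $D=\mathrm{diag}(d)$ we have $H^{\otimes n}DH^{\otimes n}=\sum_a\hat d(a)X^a$, where $\hat d$ is the normalized Boolean Fourier transform, with $\sum_a\abs{\hat d(a)}^2=1$. Hence one term $B_1H^{\otimes n}DH^{\otimes n}B_2$ has entries $b_1(y)\,\hat d(x\oplus y)\,b_2(x)$, so a single unitary touches all $2^n$ shifts at once.

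The steps, in order, are as follows.
\begin{enumerate}
\item Dyadically bucket the entries of $\mathsf{H}$ by magnitude into levels $t=0,1,\dots,O(\log(\Abs{\mathsf{H}}_F/\varepsilon))$. Entries below $\varepsilon 2^{-n}$ are discarded; this costs Frobenius error at most $\varepsilon$.
\item At level $t$, the entries have modulus in $(\tau_t/2,\tau_t]$. I realise the level-$t$ part as $\tau_t$ times an average of $O(2^n)$ unitaries of the form $B_1H^{\otimes n}DH^{\otimes n}B_2$, choosing the phases $b_1,b_2,d$ (possibly randomly) so that the average reproduces the bucketed entries. This is where the count $m=O(2^n\log(\Abs{\mathsf{H}}_F/\varepsilon))$ comes from.
\item Bound the coefficient mass at level $t$ by something like $\tau_t\cdot 2^{n/2}\sqrt{N_t}\cdot 2^{-n/2}$, where $N_t$ is the number of entries in the bucket. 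This uses the fact that $D$ spreads unit $\ell_2$ mass over all shifts, so the relevant scale is $2^{-n/2}\cdot\ell_2$ rather than $\ell_\infty$.
\item Sum over levels using $\sum_t\tau_t^2N_t\lesssim\Abs{\mathsf{H}}_F^2$ together with geometric decay in $t$, to get $\Abs{\alpha}_1\le O(\Abs{\mathsf{H}}_F)$.
\end{enumerate}

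The main obstacle is step 2/3: showing that a bucket's entry pattern can be matched by $\hat d$-weighted shifts with small total weight. My first attempt is to take $d$ to be a random-phase function restricted to the bucket's support and pass through $\hat d$. The hypercontractivity inequality (Bonami, cf.\ \cref{lemma:u-overlap}) should then control the overlap $\abs{\hat d(a)}$ against the target $h_a$, so that the "flat" part of each $h_a$ (entries $\lesssim 2^{-n/2}\Abs{h_a}_2$) is captured with $\ell_2$-type cost. The "spiky" remainder has few entries, so it is pushed to higher levels, where Cauchy--Schwarz over the few entries keeps the cost bounded. If the randomised matching is too lossy, the fallback is a greedy residual iteration: subtract the best term $B_1H^{\otimes n}DH^{\otimes n}B_2$ and show the Frobenius norm of the residual drops by a constant factor every $O(2^n)$ terms.
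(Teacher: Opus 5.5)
\paragraph{Verdict.} There is a genuine gap: your proposal does not yet prove the lemma.

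\paragraph{The bucketing route.} Steps 2--3, which you flag as the main obstacle, ask you to realise one magnitude level as $O(2^n)$ terms $B_1H^{\otimes n}DH^{\otimes n}B_2$ with weight $O(\tau_t\sqrt{N_t})$. That is essentially the whole lemma, and it is only sketched as a hope. As written it is also ill-posed: $d$ must have unit-modulus entries on all of $\{0,1\}^n$, and it is indexed by shifts rather than by matrix positions. So it cannot be ``restricted to a bucket's support''.

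More seriously, step 4 fails even if steps 2--3 were granted, because there is no geometric decay across levels. By Cauchy--Schwarz, $\sum_t\tau_t\sqrt{N_t}$ is only bounded by $\sqrt{L}\,\|\mathsf H\|_F$ for $L$ levels. This loss is intrinsic to decomposing level by level, since any decomposition of the level-$t$ part $M_t$ into unitaries has $\sum_k|\alpha_k|\ge\|M_t\|$. For instance, for $0\le t\le n/2$ put level $t$ in its own row, with $4^t$ entries of modulus $2^{-t}$. Each level then has operator norm $1$, so the levelwise cost is at least about $n/2$, while $\|\mathsf H\|_F\approx\sqrt{n/2}$. Bucketing can therefore give at best $\|\alpha\|_1=O(\sqrt{n}\,\|\mathsf H\|_F)$, not $O(\|\mathsf H\|_F)$.

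\paragraph{The greedy fallback.} Greedy residual iteration is the paper's route, but you are missing the ingredient that makes it work. That ingredient is an overlap lemma: for every $A$ there is $V=B_1H^{\otimes n}DH^{\otimes n}B_2$ with $\operatorname{tr}(V^\dagger A)\ge\frac{1}{18}\|A\|_F$ (recall $\|V\|_F=2^{n/2}$).

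The paper proves it as follows.
\begin{enumerate}
\item Take $B_1=\mathrm{diag}(x)$ and $B_2=\mathrm{diag}(y)$ with uniformly random sign vectors $x,y$.
\item Choose $D$ \emph{deterministically} to cancel the phases of the diagonal entries of $H^{\otimes n}B_1AB_2H^{\otimes n}$. Then $\operatorname{tr}(V^\dagger A)=\sum_i|x^{\mathrm{T}}A_i'y|$, where $(A_i')_{jk}=(H^{\otimes n})_{ij}A_{jk}(H^{\otimes n})_{ki}$, so $\|A_i'\|_F=2^{-n}\|A\|_F$.
\item Bonami's lemma for this degree-$2$ polynomial, combined with H\"older, gives $\mathbb{E}|x^{\mathrm{T}}A_i'y|\ge\frac{1}{18}\|A_i'\|_F$.
\item Summing over the $2^n$ indices $i$ gives the overlap bound.
\end{enumerate}
So hypercontractivity acts on the sign vectors of $B_1,B_2$, not on $\hat d$.

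You also need to say how $\|\alpha\|_1$ is controlled, which your fallback omits. Take step $\alpha_j=\|A_j\|_F/(18\cdot 2^n)$ on the residual $A_j$. Then $\|A_{j+1}\|_F^2\le(1-\frac{1}{324\cdot 2^n})\|A_j\|_F^2$, so $m=O(2^n\log(\|\mathsf H\|_F/\varepsilon))$ steps suffice. Moreover $\sum_j\alpha_j$ is a geometric series bounded by $18(1+\beta)\|\mathsf H\|_F\le 36\|\mathsf H\|_F$. No bucketing is needed.
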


\cref{lemma:Hamiltonian-decomposition-intro} can be viewed as an extension of the Euler angle decomposition (cf.\ \cite[Theorem 4.1]{NC10}, see also \cite[Fact 2.4]{gosset2024quantumstatepreparationoptimal}) for high-dimensional non-unitary operators. 
Any one-qubit unitary $U$ can be composed as $U = AHBHC$, where $A, B, C$ are diagonal unitary operators and $H$ is the Hadamard gate. 
In \cref{lemma:Hamiltonian-decomposition-intro}, we show that any $n$-qubit operator $\mathsf{H}$ can be approximately decomposed as a linear combination of unitary operators of the form $A H^{\otimes n} B H^{\otimes n} C$, where $A, B, C$ are diagonal unitary operators.

Technically, \cref{lemma:Hamiltonian-decomposition-intro} is achieved by recursively applying the following lemma that finds a unitary operator of the specific form $A H^{\otimes n} B H^{\otimes n} C$ with large overlap with a given operator $\mathsf{H}$. 
This is obtained by adopting the hypercontractivity of Boolean functions.

\begin{lemma}[Unitary component with large overlap, \cref{lem:tr} restated] \label{lemma:u-overlap}
    For any $n$-qubit operator $\mathsf{H}$, there exist three diagonal unitary operators $B_1, B_2, D$ such that
    \begin{equation} \label{eq:HS-intro}
    \ave*{B_1H^{\otimes n}DH^{\otimes n}B_2, \mathsf{H}}_{\mathrm{HS}} \geq \frac{1}{18} \Abs{\mathsf{H}}_F,
    \end{equation}
    where $\ave{A, B}_{\mathrm{HS}} = \tr\rbra{A^\dag B}$ is the Hilbert-Schmidt inner product.
\end{lemma}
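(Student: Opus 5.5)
The plan is to expand the Hilbert–Schmidt inner product as a linear function of the diagonal of $D$, choose $D$ to align phases term by term, and then choose $B_1,B_2$ to be random $\pm1$ diagonals. Hypercontractivity (Bonami's inequality) then lower-bounds the resulting $\ell_1$-type sum in expectation.

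\textbf{Reduction to an $\ell_1$ sum.} Write $B_1=\diag(b_1(x))$, $B_2=\diag(b_2(y))$ and $D=\diag(d(z))$, indexed by $x,y,z\in\{0,1\}^n$, and let $V=B_1H^{\otimes n}DH^{\otimes n}B_2$. Its entries are
\[
V_{xy}=2^{-n}\,b_1(x)\,b_2(y)\sum_z d(z)(-1)^{(x\oplus y)\cdot z}.
\]
Hence $\ave{V,\mathsf{H}}_{\mathrm{HS}}=\sum_z \overline{d(z)}\,G(z)$, where
\[
G(z)=2^{-n}\sum_{x,y}\overline{b_1(x)b_2(y)}\,(-1)^{(x\oplus y)\cdot z}\,\mathsf{H}_{xy}.
\]
For fixed $B_1,B_2$, choose $d(z)=G(z)/\abs{G(z)}$ (and $d(z)=1$ if $G(z)=0$). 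The inner product then equals $\sum_z\abs{G(z)}$, which is real and nonnegative. So it suffices to find $B_1,B_2$ with $\sum_z\abs{G(z)}\ge \Abs{\mathsf{H}}_F/18$.

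\textbf{Random signs and second moments.} Take $b_1(x)=\epsilon_x$ and $b_2(y)=\delta_y$, independent uniform $\pm1$ signs. For each $z$, $G(z)=\sum_{x,y}c^{(z)}_{xy}\epsilon_x\delta_y$ with $c^{(z)}_{xy}=2^{-n}(-1)^{(x\oplus y)\cdot z}\mathsf{H}_{xy}$. This is a multilinear polynomial of degree $2$ in the $2\cdot 2^n$ independent Rademacher variables. The monomials $\epsilon_x\delta_y$ are orthonormal, so $\EE\abs{G(z)}^2=\sum_{x,y}\abs{c^{(z)}_{xy}}^2=2^{-2n}\Abs{\mathsf{H}}_F^2$. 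The same orthogonality holds separately for $\operatorname{Re}G(z)$ and $\operatorname{Im}G(z)$, which are real degree-$2$ chaoses with coefficients $\operatorname{Re}c^{(z)}_{xy}$ and $\operatorname{Im}c^{(z)}_{xy}$. Their second moments therefore sum to $2^{-2n}\Abs{\mathsf{H}}_F^2$, and one of them, call it $R(z)$, has $\EE R(z)^2\ge 2^{-2n}\Abs{\mathsf{H}}_F^2/2$.

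\textbf{Hypercontractivity step (main obstacle).} This is the step that needs care. By the Bonami–Beckner inequality for degree-$2$ multilinear polynomials, $\Abs{R}_4\le 3\Abs{R}_2$, so $\EE R^4\le 81(\EE R^2)^2$. Hölder's inequality in the form $\EE R^2\le(\EE\abs{R})^{2/3}(\EE R^4)^{1/3}$ then gives $\EE\abs{R}\ge(\EE R^2)^{1/2}/9$. Since $\abs{G(z)}\ge\abs{R(z)}$, we get
\[
\EE\abs{G(z)}\ge \frac{2^{-n}\Abs{\mathsf{H}}_F}{9\sqrt2}.
\]
Summing over the $2^n$ values of $z$ yields $\EE\sum_z\abs{G(z)}\ge \Abs{\mathsf{H}}_F/(9\sqrt2)\ge \Abs{\mathsf{H}}_F/18$. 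Some realization of the signs attains at least the expectation. Fixing that realization for $B_1,B_2$ and choosing $D$ as in the reduction step gives \eqref{eq:HS-intro}.

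The points to check carefully are the following. The chaos must be genuinely multilinear: the $\epsilon$'s and $\delta$'s are distinct variables, so no squared terms appear. The hypercontractive constant $3^{d}$ with $d=2$ must be correct. The split into real and imaginary parts must cost only a factor $\sqrt2$, which is exactly what makes the constant come out as $18$.
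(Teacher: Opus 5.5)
Your proof is correct and follows the paper's route. You align phases with $D$, take random $\pm1$ diagonals for $B_1,B_2$, use Hadamard flatness to give each diagonal entry Frobenius weight $2^{-n}\Abs{\mathsf{H}}_F$, and apply Bonami plus H\"older to the degree-$2$ bilinear forms. The only deviation is in handling complex values: you pass to the dominant real or imaginary part, whereas the paper uses a complex Bonami bound via the triangle inequality on the $4$-norm. This gives you the slightly better constant $1/(9\sqrt2)$ rather than $1/18$ (so the constant is not "exactly $18$", but it is at least $1/18$, which is what the statement needs).
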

\begin{proof}[Proof sketch]
Let $B_1 = \diag\rbra{x}$ and $B_2 = \diag\rbra{y}$ and 
choose $D_{i,i}$ for each $i$ such that $D_{i,i} \cdot \bra{i} H^{\otimes n} B_1 \mathsf{H} B_2 H^{\otimes n} \ket{i} = \abs{\bra{i} H^{\otimes n} B_1 \mathsf{H} B_2 H^{\otimes n} \ket{i}}$.
Then,
\[
\ave*{B_1H^{\otimes n}DH^{\otimes n}B_2, \mathsf{H}}_{\mathrm{HS}} = \sum_{i=0}^{N-1} {\abs*{x^{\mathrm{T}}\mathsf{H}_i' y}}, 
\]
where $N = 2^n$ and $\rbra{\mathsf{H}_i'}_{j,k} = \rbra{H^{\otimes n}}_{i,j}\mathsf{H}_{j,k}\rbra{H^{\otimes n}}_{k,i}$.
On the other hand, the quadratic form $S_{A}\rbra{x, y} = x^{\mathrm{T}}A y$ is of degree $2$, and by Bonami's lemma \cite{Bon68,Bon70}, we have (see \cref{lem:1_norm_xy})
\begin{align*}
    \Exps[x,y\sim \cbra{\pm 1}^N] { \abs*{S_A\rbra{x, y}} } &\geq \frac{1}{18} \sqrt{ \Exps[x,y\sim \cbra{\pm 1}^N] { \abs*{S_A\rbra{x, y}}^2 } }
    = \frac{1}{18} \Abs{A}_F.
\end{align*}
Therefore, 
\begin{align*}
    \Exps[x,y\sim \cbra{\pm 1}^N] { \sum_{i=0}^{N-1} {\abs*{x^{\mathrm{T}}\mathsf{H}_i' y}} } &\geq \sum_{i=0}^{N-1} \frac{1}{18} \Abs{\mathsf{H}_i'}_F 
    = \sum_{i=0}^{N-1} \frac{1}{18} \cdot \frac{1}{N}\Abs{\mathsf{H}}_F 
    = \frac{1}{18} \Abs{\mathsf{H}}_F. 
\end{align*}
This means that there always exists a pair of $x$ and $y$ such that \cref{eq:HS-intro} holds. 
\end{proof}

\paragraph{Unitary synthesis by Hamiltonian simulation.}
\cref{lemma:Hamiltonian-decomposition-intro} enables us to implement $e^{-i\mathsf{H}}$ as follows. Without loss of generality, we assume that $m = 2^a$ for some $a \geq 1$. For simplicity, we omit the polylogarithmic factors in $\Abs{\mathsf{H}}_F$ and $\varepsilon$ here.
\begin{itemize}
    \item \textbf{Step 1.} Implement a unitary $U = \rbra{G^\dag \otimes I} V \rbra{G \otimes I}$ such that $\bra{0}^{\otimes a} U \ket{0}^{\otimes a} = \widetilde{\mathsf{H}}/\Abs{\alpha}_1$, where 
    \begin{equation*}
        G \ket{0}^{\otimes a} = \sum_{k=0}^{m-1} \sqrt{\frac{\alpha_k}{\Abs{\alpha}_1}} \ket{k}
        \text{ and }
        V = \sum_{k=0}^{m - 1} \ketbra{k}{k} \otimes V_k.
    \end{equation*}
    Here, $G$ can be viewed as an $a$-qubit state-preparation unitary and thus can be implemented with T-count $O\rbra{\sqrt{2^a}}$ by the quantum state preparation (\cref{lem:optimal_preparation}) of \cite{Low2024tradingtgatesdirty,gosset2024quantumstatepreparationoptimal}. 
    Moreover, $V$ has the form
    \begin{align*}
        V ={}& \rbra[\Bigg]{\underbrace{\bigoplus_{k=0}^{m-1} B_1^{(k)}}_{\coloneqq B_1}} \rbra*{I_{m}\otimes H^{\otimes n}} \rbra[\Bigg]{\underbrace{\bigoplus_{k=0}^{m-1} D^{(k)}}_{\coloneqq D}}
        \rbra*{I_{m}\otimes H^{\otimes n}} \rbra[\Bigg]{\underbrace{\bigoplus_{k=0}^{m-1} B_2^{(k)}}_{\coloneqq B_2}},
    \end{align*}
    where $B_1, B_2, D$ are $\rbra{n+a}$-qubit diagonal unitary operators and can be implemented with T-count $O\rbra{\sqrt{2^{n+a}}}$ by the diagonal unitary synthesis (\cref{lem:optimal_diagonal}) of \cite{gosset2024quantumstatepreparationoptimal}. 
    In summary, $U$ can be implemented with T-count $O\rbra{\sqrt{2^a} + \sqrt{2^{n+a}}} = O\rbra{\sqrt{2^{n+a}}}$.

    \item \textbf{Step 2.} Implement $e^{-i\mathsf{H}} \approx e^{-i \rbra{\widetilde{\mathsf{H}}/\Abs{\alpha}_1} t}$ by the Hamiltonian simulation (\cref{thm:hamiltonian_simulation}) of \cite{Low2019hamiltonian,GSLW19} with simulation time $t = \Abs{\alpha}_1$. 
    As the Hamiltonian simulation uses $O\rbra{t}$ queries to (controlled-)$U$ and (controlled-)$U^\dag$, the unitary $e^{-i\mathsf{H}}$ can thus be implemented with T-count $\widetilde{O}\rbra{t\sqrt{2^{n+a}}} = \widetilde{O}\rbra{2^n \Abs{\mathsf{H}}_F}$. 
\end{itemize}

With detailed analysis, the above approach can implement $e^{-i\mathsf{H}}$ to precision $\varepsilon$ with T-count $\widetilde{O}\rbra{2^n \Abs{\mathsf{H}}_F \log^2\rbra{1/\varepsilon}}$ for any Hamiltonian $\mathsf{H}$. 

\subsubsection{Lower bounds}

Our T-count lower bound for the synthesis of unitary $e^{-i\mathsf{H}}$ with $\Abs{\mathsf{H}}_F \leq O\rbra{1}$ builds on the fact (\cref{lem:state_count}) in \cite{gosset2024quantumstatepreparationoptimal} that any $n$-qubit Clifford circuit with Pauli postselections using $t$ copies of the magic state $\ket{T} = \frac{1}{\sqrt{2}}\rbra{\ket{0}+e^{i\pi/4}\ket{1}}$ can prepare at most $2^{O\rbra{n^2+t^2}}$ different $n$-qubit states. 
The T-count lower bound in \cite{gosset2024quantumstatepreparationoptimal} for unitary synthesis is obtained by providing $\rbra{1/\varepsilon}^{\Theta\rbra{4^n}}$ $n$-qubit unitary operators with trace distance between their Choi states at least $\varepsilon$, which gives $2^{O\rbra{n^2+t^2}} \geq \rbra{1/\varepsilon}^{\Theta\rbra{4^n}}$, that is, $t \geq \Omega\rbra{2^n \sqrt{\log\rbra{1/\varepsilon}}}$. 

Our approach strengthens the T-count lower bound in \cite{gosset2024quantumstatepreparationoptimal} by considering the unitary operators $U = e^{-i\mathsf{H}}$ with constraints $\Abs{\mathsf{H}}_F \leq O\rbra{1}$. 
Our strategy is to find a collection of unitary operators of the form $e^{-i\mathsf{H}}$ with $\Abs{\mathsf{H}}_F \leq 1$ such that the trace distance between their Choi states is at least $\varepsilon$. 
To this end, we adopt the lower bound for sphere packing given in \cite{schildkraut2024lower} and thus can find a finite set $A \subset \{\mathsf{H} \mid \mathsf{H} = \mathsf{H}^\dag \textup{ and } \Abs{\mathsf{H}}_F \leq 1\}$ such that $\abs{A} \geq \rbra{\sqrt{2^n}\varepsilon}^{-\Omega\rbra{4^n}}$ and for any two distinct $\mathsf{H}, \mathsf{H}' \in A$, the trace distance between the Choi states of $e^{-i\mathsf{H}}$ and $e^{-i\mathsf{H}'}$ is at least $\varepsilon$.
With the same argument as in \cite{gosset2024quantumstatepreparationoptimal}, we have $2^{O\rbra{n^2+t^2}} \geq \rbra{\sqrt{2^n}\varepsilon}^{-\Omega\rbra{4^n}}$, which gives $t \geq \Omega\rbra{2^n \sqrt{\log\rbra{1/\varepsilon}}}$ for sufficiently small $\varepsilon > 0$. 

Together with the lower bound $\Omega\rbra{\log\rbra{1/\varepsilon}}$ in \cite[Lemma~5.9]{beverland2020lower}, we therefore obtain a T-count lower bound of $\Omega\rbra{2^n \sqrt{\log\rbra{1/\varepsilon}} + \log\rbra{1/\varepsilon}}$ for the synthesis of all unitary operators $e^{-i\mathsf{H}}$ with $\Abs{\mathsf{H}}_F \leq O\rbra{1}$. 

\subsection{Discussion}

In this paper, we propose a new approach to unitary synthesis based on a low-T-count Hamiltonian decomposition, which is near-optimal for near-Clifford unitary operators.
We conclude this section by listing two open questions for future research.
Can our Hamiltonian-based approach be improved for unitary operators that are not near-Clifford? Are there alternative strategies that yield better results for general unitary synthesis?
Can the T-count lower bound for unitary synthesis be further improved?

\section{Preliminaries}\label{sec:preliminary}

This section introduces the notation used throughout the paper and presents the key tools for our analysis.

\subsection{Basic notation}
For any positive integer $n$, we write $[n]$ for the set $\{1,2,\ldots, n\}$.
For any matrix $A$, denote the operator norm by $\norm{A} = \sup\{ \|Ax\| \mid x \in \CC^n, \|x\|\leq 1\}$, the Frobenius norm by $\fnorm{A} = \sqrt{\tr\rbra{A^\dag A}}$, and the trace norm by $\norm[1]{A} = \tr\rbra{\sqrt{A^\dag A}}$.
For any superoperator $\mathcal{E}$, the diamond norm of $\mathcal{E}$ is defined as 
\[
\Abs{\mathcal{E}}_\diamond = \max_{\Abs{X}_1 \leq 1} \Abs*{ \rbra*{\mathcal{E} \otimes \mathcal{I}}\rbra{X} }_1,
\]
where $\mathcal{I}$ is the identity superoperator of the same dimension as $\mathcal{E}$.
The induced trace norm of $\mathcal{E}$ is defined as
\[
\Abs{\mathcal{E}}_1 = \max_{\Abs{X}_1 \leq 1} \Abs*{\mathcal{E}\rbra{X}}_1.
\]
For any function $f$ from $\{\pm 1\}^n$ to $\RR$ or $\CC$, the $q$-norm of $f$ is defined as
\begin{equation*}
    \norm[q]{f} \coloneq \rbra*{\Exps[x\sim \{\pm 1\}^n]{\abs*{f(x)}^q}}^{1/q},
\end{equation*}
where $x\sim \{\pm 1\}^n$ denotes that $x$ is drawn uniformly at random from $\{\pm 1\}^n$, and $\EE$ denotes the expectation.
For any vector $v$, $\diag(v)$ denotes the diagonal matrix with $v$ as its diagonal entries.

\subsection{Hypercontractivity of Boolean functions}
Real-valued Boolean functions satisfy the following hypercontractivity.

\begin{lemma}
    [Bonami's lemma \cite{Bon68,Bon70}]\label{lem:bonami}
    Let $f\colon\cbra{\pm 1}^n \to \RR$ be a polynomial of degree $d$. Then
    \begin{equation*}
        \norm[4]{f} \leq 3^{d/2} \norm[2]{f}.
    \end{equation*}
\end{lemma}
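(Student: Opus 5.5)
We prove the bound by induction on the number of variables $n$, splitting off the last coordinate. Since $x_i^2 = 1$ on $\cbra{\pm 1}^n$, we may assume throughout that $f$ is multilinear of degree at most $d$. Reducing higher powers of $x_i$ this way never raises the degree.

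\emph{Base case.} For $n = 0$, $f$ is a constant $c$. Then $\norm[4]{f} = \abs{c} = \norm[2]{f} \le 3^{d/2}\norm[2]{f}$ for every $d \ge 0$. The case $d = 0$ is likewise trivial for all $n$, since $f$ is then constant.

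\emph{Decomposition.} For the inductive step ($n \ge 1$, $d \ge 1$), write
\[
f(x) = x_n\, g(x') + h(x'), \qquad x' = (x_1,\dots,x_{n-1}).
\]
Here $g$ and $h$ are multilinear polynomials in $n-1$ variables with $\deg g \le d-1$ and $\deg h \le d$. Because $x_n$ is a uniform sign independent of $x'$, we have $\EE[x_n] = \EE[x_n^3] = 0$ and $x_n^2 = x_n^4 = 1$. This gives
\[
\EE[f^2] = \EE[g^2] + \EE[h^2].
\]
Expanding the fourth power and discarding the odd powers of $x_n$ gives
\[
\EE[f^4] = \EE[g^4] + 6\,\EE[g^2h^2] + \EE[h^4].
\]

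\emph{Bounding each term.} Cauchy--Schwarz gives $\EE[g^2h^2] \le \sqrt{\EE[g^4]\,\EE[h^4]}$. The induction hypothesis on $n-1$ variables gives
\[
\EE[g^4] \le 9^{d-1}\rbra{\EE[g^2]}^2, \qquad \EE[h^4] \le 9^{d}\rbra{\EE[h^2]}^2.
\]
Substituting these bounds, and using $9^{d-1} \le 9^d$ and $6\cdot 3^{2d-1} = 2\cdot 9^d$, we obtain
\[
\EE[f^4] \le 9^d\rbra{\EE[g^2]}^2 + 2\cdot 9^d\,\EE[g^2]\,\EE[h^2] + 9^d\rbra{\EE[h^2]}^2 = 9^d\rbra{\EE[f^2]}^2.
\]
Taking fourth roots yields $\norm[4]{f} \le 3^{d/2}\norm[2]{f}$, which completes the induction.

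The only step that needs care is the asymmetric use of the induction hypothesis. We must use the sharper constant $9^{d-1}$ for $g$ inside the cross term. Only then does the coefficient $6\cdot 3^{2d-1}$ come out to exactly $2\cdot 9^d$, so that the three terms combine into the perfect square $9^d\rbra{\EE[g^2]+\EE[h^2]}^2$.
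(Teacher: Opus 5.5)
Your proof is correct. It is the standard induction on the number of variables, as in O'Donnell's \emph{Analysis of Boolean Functions}: split $f = x_n g + h$, expand the fourth moment, apply Cauchy--Schwarz to the cross term, and use the induction hypothesis asymmetrically ($9^{d-1}$ for $g$, $9^d$ for $h$). That asymmetry is what makes $6\cdot 3^{2d-1} = 2\cdot 9^d$ close into the perfect square $9^d\bigl(\EE[g^2]+\EE[h^2]\bigr)^2$.

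The paper gives no proof of this lemma and simply cites Bonami, so your argument is a self-contained justification rather than an alternative to a proof in the text. The one point worth stating explicitly is that the induction hypothesis at $n-1$ variables is quantified over all degrees at once, which your write-up already uses implicitly when you apply it to $g$ with degree $d-1$.
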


This hypercontractivity naturally extends to complex-valued Boolean functions, which our approach uses.

\begin{corollary}\label{cor:bonami_complex}
    Let $f\colon \cbra{\pm 1}^n \to \CC$ be a polynomial of degree $d$. Then
    \begin{equation*}
        \norm[4]{f} \leq \sqrt{2}\cdot 3^{d/2} \norm[2]{f}.
    \end{equation*}
\end{corollary}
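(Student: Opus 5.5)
We write $f = g + i h$, where $g = \operatorname{Re} f$ and $h = \operatorname{Im} f$ are real-valued. Each of $g$ and $h$ is a real polynomial of degree at most $d$: taking real and imaginary parts of the Fourier (multilinear) coefficients gives $g = \sum_S \operatorname{Re}\hat f(S)\chi_S$, and similarly for $h$, with the same support. Bonami's lemma (\cref{lem:bonami}) therefore applies to each, giving $\norm[4]{g} \leq 3^{d/2}\norm[2]{g}$ and $\norm[4]{h} \leq 3^{d/2}\norm[2]{h}$.

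Next we bound the $4$-norm of $f$ by those of its parts. Pointwise, $\abs{f}^2 = g^2 + h^2$, so
\[
\abs{f}^4 = \rbra{g^2+h^2}^2 \leq 2\rbra{g^4 + h^4}.
\]
Taking expectations,
\[
\norm[4]{f}^4 \leq 2\rbra*{\norm[4]{g}^4 + \norm[4]{h}^4} \leq 2\cdot 3^{2d}\rbra*{\norm[2]{g}^4 + \norm[2]{h}^4}.
\]
Since $\norm[2]{g}^4 + \norm[2]{h}^4 \leq \rbra{\norm[2]{g}^2 + \norm[2]{h}^2}^2 = \norm[2]{f}^4$, because $\EE\abs{f}^2 = \EE g^2 + \EE h^2$, we get $\norm[4]{f}^4 \leq 2\cdot 3^{2d}\norm[2]{f}^4$. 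Taking fourth roots yields $\norm[4]{f} \leq 2^{1/4}3^{d/2}\norm[2]{f}$, which is at most $\sqrt{2}\cdot 3^{d/2}\norm[2]{f}$, as claimed.

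If one prefers a cruder route matching the stated constant exactly, the triangle inequality for the $4$-norm gives
\[
\norm[4]{f} \leq \norm[4]{g}+\norm[4]{h} \leq 3^{d/2}\rbra*{\norm[2]{g}+\norm[2]{h}} \leq 3^{d/2}\sqrt{2}\sqrt{\norm[2]{g}^2+\norm[2]{h}^2} = \sqrt{2}\cdot 3^{d/2}\norm[2]{f},
\]
using Cauchy--Schwarz in the third step. Either route works, and there is no real obstacle. The only point needing care is that $g$ and $h$ inherit degree at most $d$, which holds because taking real and imaginary parts acts coefficientwise on the real-valued characters $\chi_S$.
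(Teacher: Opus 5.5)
Your proof is correct, and your second (``cruder'') route is exactly the paper's argument: split $f=g+ih$, apply Bonami's lemma to each part, then use the triangle inequality for the $4$-norm and $\norm[2]{g}+\norm[2]{h}\le\sqrt{2}\,\norm[2]{f}$. Your first route is only a minor variant of the same idea. It combines fourth powers pointwise via $(g^2+h^2)^2\le 2(g^4+h^4)$ instead of using the triangle inequality, and it gives the sharper constant $2^{1/4}$ in place of $\sqrt{2}$; this would improve the factor $\frac{1}{2\cdot 3^d}$ in \cref{lem:1_norm_lower_bound} to $\frac{1}{\sqrt{2}\cdot 3^d}$.
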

\begin{proof}
    Write $f=g+ih$ for polynomials $g, h\colon \cbra{\pm 1}^n \to \RR$ of degree $d$.
    Applying \cref{lem:bonami} to $g$ and $h$ separately gives
    \begin{equation*}
        \norm[4]{g} \leq 3^{d/2}\norm[2]{g}, \qquad \norm[4]{h} \leq 3^{d/2}\norm[2]{h}. 
    \end{equation*}
    Now
    \begin{align*}
        \norm[4]{f} &\leq \norm[4]{g}+\norm[4]{h} 
        \leq 3^{d/2}\rbra*{\norm[2]{g}+\norm[2]{h}}
        \leq 3^{d/2}\sqrt{2\rbra*{\norm[2]{g}^2+\norm[2]{h}^2}}
        = \sqrt{2}\cdot 3^{d/2}\norm[2]{f}. \qedhere
    \end{align*}
\end{proof}

Then, a standard trick using H\"older's inequality gives a lower bound on $1$-norm.
\begin{lemma}\label{lem:1_norm_lower_bound}
    Let $f\colon \cbra{\pm 1}^n \to \CC$ be a polynomial of degree $d$. Then
    \begin{equation*}
        \norm[1]{f} \geq \frac{1}{2\cdot 3^d} \norm[2]{f}.
    \end{equation*}
\end{lemma}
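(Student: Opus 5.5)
The plan is the standard Hölder interpolation trick: bound the $2$-norm between the $1$-norm and the $4$-norm, then use \cref{cor:bonami_complex} to replace the $4$-norm by the $2$-norm. First, the case $\norm[2]{f}=0$ is trivial, so we may assume $\norm[2]{f}>0$.

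Write $\abs{f}^2 = \abs{f}^{2/3}\cdot\abs{f}^{4/3}$. Applying Hölder's inequality with exponents $p=3/2$ and $q=3$ gives
\begin{equation*}
    \Exps[x]{\abs{f(x)}^2} \leq \rbra*{\Exps[x]{\abs{f(x)}}}^{2/3}\rbra*{\Exps[x]{\abs{f(x)}^4}}^{1/3},
\end{equation*}
that is, $\norm[2]{f}^2 \leq \norm[1]{f}^{2/3}\norm[4]{f}^{4/3}$. Cubing both sides yields
\begin{equation*}
    \norm[2]{f}^6 \leq \norm[1]{f}^2\,\norm[4]{f}^4.
\end{equation*}

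Next, substitute the bound from \cref{cor:bonami_complex}:
\begin{equation*}
    \norm[4]{f}^4 \leq \rbra*{\sqrt2\cdot 3^{d/2}}^4\norm[2]{f}^4 = 4\cdot 9^d\,\norm[2]{f}^4.
\end{equation*}
This gives $\norm[2]{f}^6 \leq 4\cdot 9^d\,\norm[1]{f}^2\,\norm[2]{f}^4$. Dividing by $\norm[2]{f}^4>0$ and taking square roots gives $\norm[2]{f}\leq 2\cdot 3^d\,\norm[1]{f}$, which is exactly the claimed inequality.

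There is no real obstacle here. The only points needing care are:
\begin{itemize}
    \item choosing the Hölder split that pairs the $1$-norm with the $4$-norm;
    \item noting that the constant $\sqrt2$ in the complex version of Bonami's lemma, once raised to the fourth power and then square-rooted, accounts for the factor $2$ in the statement.
\end{itemize}
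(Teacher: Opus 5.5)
Your proof is correct and follows the paper's argument: the same Hölder split $\abs{f}^2 = \abs{f}^{2/3}\abs{f}^{4/3}$ with exponents $3/2$ and $3$, followed by substituting \cref{cor:bonami_complex}. Your explicit treatment of the case $\norm[2]{f}=0$ is a small improvement, since the paper divides by $\norm[4]{f}^2$ without comment.
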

\begin{proof}    
    Consider $\abs{f}^2 = \abs{f}^{4/3}\abs{f}^{2/3}$, by H\"older's inequality, we have
    $\norm[3]{\abs{f}^{\frac{4}{3}}}\norm[\frac{3}{2}]{\abs{f}^{\frac{2}{3}}} \geq \norm[1]{\abs{f}^2}$,
    that is,
    $\norm[4]{f}^{\frac{4}{3}}\norm[1]{f}^{\frac{2}{3}} \geq \norm[2]{f}^2$. 
    With \cref{cor:bonami_complex}, we have
    \begin{equation*}
        \norm[1]{f} \geq \frac{\norm[2]{f}^3}{\norm[4]{f}^2} \geq \frac{\norm[2]{f}^3}{\rbra*{\sqrt{2}\cdot 3^{d/2}\norm[2]{f}}^2} = \frac{1}{2\cdot 3^d}\norm[2]{f}. \qedhere
    \end{equation*}
\end{proof}

\subsection{Diagonal unitary synthesis and state preparation with optimal T-count}
As preparation for our near-Clifford unitary synthesis, we also review the existing results on T-count of unitary synthesis and state preparation, which serve as the foundation for our approach.

\begin{lemma}
    [Single-qubit unitary synthesis with optimal T-count~\cite{ross2016optimal}]\label{lem:optimal_single_qubit}
    Any single-qubit unitary with determinant $1$ can be implemented up to error $\varepsilon$ (in operator norm distance) by a Clifford+T circuit using $O(\log(1/\varepsilon))$ many T gates and without ancilla qubits.
\end{lemma}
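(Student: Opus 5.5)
The statement is imported from~\cite{ross2016optimal}. The natural proof is number-theoretic, and I would reproduce it in three stages: reduce to $z$-rotations, approximate each rotation by an exactly synthesizable matrix, then synthesize that matrix exactly.

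\textbf{Reduction to $z$-rotations.} By the Euler angle decomposition (cf.\ \cite[Theorem 4.1]{NC10}), any single-qubit $U$ with determinant $1$ can be written as $U = R_z(\alpha) H R_z(\beta) H R_z(\gamma)$, with $H$ absorbed up to Clifford conjugation. It therefore suffices to approximate a single $R_z(\theta)=\diag(e^{-i\theta/2},e^{i\theta/2})$ to precision $\varepsilon/3$ in operator norm using $O(\log(1/\varepsilon))$ T gates and no ancillas. The triangle inequality then gives total error $\varepsilon$ with three times the T-count.

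\textbf{Exact synthesis.} Next I would invoke the exact synthesis theorem of Kliuchnikov--Maslov--Mosca (\cite{KMM13}). A $2\times 2$ unitary is exactly implementable by an ancilla-free Clifford+T circuit if and only if its entries lie in $\mathbb{Z}[1/\sqrt{2},i]$. Moreover, writing the entries with least denominator exponent $k$, i.e.\ as elements of $\frac{1}{\sqrt{2}^{k}}\mathbb{Z}[\omega]$ with $\omega=e^{i\pi/4}$, the matrix is synthesizable with $O(k)$ T gates. The argument is a greedy reduction: at each step some $HT^j$ lowers $k$ by one, which is checked via the $\sqrt{2}$-adic valuation of the entries. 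So the task becomes finding a unitary
\[
V=\begin{pmatrix} u & -t^\dagger \\ t & u^\dagger\end{pmatrix},\qquad u,t\in \tfrac{1}{\sqrt{2}^{k}}\mathbb{Z}[\omega],\qquad \abs{u}^2+\abs{t}^2=1,
\]
with $\Abs{V-R_z(\theta)}\le \varepsilon$ and $k=O(\log(1/\varepsilon))$.

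\textbf{Main obstacle: the approximation.} This is where the difficulty lies, and I would follow the Ross--Selinger grid-problem approach. The condition $\Abs{V-R_z(\theta)}\le\varepsilon$ roughly says that $u$ must lie in a thin circular cap, of width $O(\varepsilon^2)$ and length $O(\varepsilon)$, near $e^{-i\theta/2}$ on the unit disk. The Galois conjugate $u^\bullet$ (under $\sqrt{2}\mapsto-\sqrt{2}$) must lie in the unit disk, since $\abs{t}^2$ must be nonnegative in both embeddings. Scaling by $\sqrt{2}^k$, we need lattice points of $\mathbb{Z}[\omega]$ in a product of two convex sets: the cap, of area about $\varepsilon^3$, times a disk, each scaled by $2^k$.

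First I would apply a grid operator (a special linear map preserving $\mathbb{Z}[\omega]$) to make both sets well-conditioned. Once that is done, the count of candidate points is proportional to the product of the areas, which is about $2^{2k}\varepsilon^3$, so $k\approx\tfrac{3}{2}\log_2(1/\varepsilon)+O(1)$ suffices to produce many candidates. For each candidate $u$, one then solves the norm equation $t^\dagger t = 1-u^\dagger u$ in $\mathbb{Z}[\omega]$. This requires factoring an integer of $O(k)$ bits, and heuristically a solvable instance is found after polynomially many candidates. That heuristic is the delicate, non-rigorous part. For the existence claim stated in the lemma, however, only the bound $k=O(\log(1/\varepsilon))$ together with existence of some solvable candidate is needed. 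That existence follows from a density argument on the primes represented, which is the part I would have to cite rather than reprove.
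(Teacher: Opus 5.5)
The paper gives no proof of this lemma; it imports it from \cite{ross2016optimal}. Your Euler-angle reduction, the exact-synthesis step via \cite{KMM13}, and the grid-problem analysis (the $\varepsilon$-cap of area $\Theta(\varepsilon^3)$, the constraint on the Galois conjugate, grid operators, $k\approx\tfrac32\log_2(1/\varepsilon)$) faithfully reconstruct that source's method.

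The gap is in your final two sentences. They carry the whole content of the bound $O(\log(1/\varepsilon))$, as opposed to mere finiteness. You need some $k=O(\log(1/\varepsilon))$ and some candidate $\alpha$ at that level for which $\xi=2^k-\alpha^\dagger\alpha$ equals $\beta^\dagger\beta$ for some $\beta\in\mathbb{Z}[\omega]$. No ``density argument on the primes represented'' is known to give this. It asks that the values of a fixed quadratic polynomial, restricted to a $\theta$- and $\varepsilon$-dependent candidate set, contain a norm from $\mathbb{Z}[\omega]$ (for instance an element of prime norm $p\equiv 1\pmod 8$). That is exactly the kind of statement Ross and Selinger adopt as a number-theoretic hypothesis, and their near-optimality and running-time guarantees are conditional on it. So the step you propose to cite is not a theorem, and as written your argument proves the lemma only conditionally.

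The existence statement is nevertheless true unconditionally, but it rests on a different mechanism, and that is the idea your sketch is missing. Do not fix $u$ and then hope to solve for $t$. Instead, count pairs $(\alpha,\beta)\in\mathbb{Z}[\omega]^2$ with $\alpha^\dagger\alpha+\beta^\dagger\beta=2^k$ lying in the $\varepsilon$-ball around the target. These are all exactly synthesizable determinant-one matrices of denominator exponent at most $k$. Then show that they equidistribute on $\mathrm{SU}(2)$ with a power-saving error.

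This is a spectral-gap statement for the Clifford+T group. One source is Ramanujan-type bounds for the associated Hecke operators, the mechanism behind the Lubotzky--Phillips--Sarnak and Parzanchevski--Sarnak analyses of such arithmetic gate sets. Another is the Bourgain--Gamburd spectral gap for dense subgroups of $\mathrm{SU}(2)$ with algebraic entries, combined with the Harrow--Recht--Chuang covering argument and applied to finitely many Clifford+T words generating the determinant-one subgroup. Either route gives an ancilla-free $\varepsilon$-approximation of every $U\in\mathrm{SU}(2)$ by a word of length, hence T-count, $O(\log(1/\varepsilon))$. That is all the lemma and its later use in the paper require, and the Euler-angle step becomes unnecessary.

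The price is a worse constant than $3\log_2(1/\varepsilon)$ and no efficient algorithm. Your grid-problem route buys the sharp constant and an algorithm, but only modulo the hypothesis.
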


\begin{corollary}
    [Boolean phase oracle synthesis~{\cite[Fact~3.3]{gosset2024quantumstatepreparationoptimal}}]\label{lem:optimal_boolean_phase}
    Let $B$ be an $n$-qubit Boolean phase oracle, that is, a diagonal unitary all of whose diagonal entries are $\pm 1$). Then $B$ can be implemented exactly by a Clifford+T circuit using $O\rbra*{\sqrt{2^n}}$ many T gates and ancillas.
\end{corollary}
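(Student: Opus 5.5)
Our plan is to view $B$ as the phase oracle of a Boolean function $f\colon\{0,1\}^n\to\{0,1\}$, so that $B\ket{x} = (-1)^{f(x)}\ket{x}$. We then implement it through the standard compute--phase--uncompute pattern, using a lookup-table (QROM-style) construction whose T-count is $O(\sqrt{2^n})$ when we are allowed $O(\sqrt{2^n})$ ancillas. This is the clean-ancilla version of the ``trading T gates for dirty qubits'' idea of \cite{Low2024tradingtgatesdirty}, and the citation to \cite[Fact~3.3]{gosset2024quantumstatepreparationoptimal} indicates that this is the intended route.

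\textbf{Splitting the input.} Write $x=(x_h,x_l)$, where $x_h$ has $n-k$ bits and $x_l$ has $k$ bits. For each fixed $x_h$, the restriction $f(x_h,\cdot)$ is a truth table of length $2^k$, which we store as a bit string $w_{x_h}\in\{0,1\}^{2^k}$.

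\textbf{Step 1 (lookup).} Using a unary-iteration multiplexer over $x_h$, we write $w_{x_h}$ into $2^k$ clean ancillas. Each of the $2^{n-k}$ iterations costs $O(1)$ Toffoli/T gates, via the Babbush et al.\ unary iteration. Writing the data is done with CNOTs controlled on the iteration flag and is therefore Clifford. The T-count of this step is $O(2^{n-k})$.

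\textbf{Step 2 (phase).} We apply a phase $(-1)^{(w)_{x_l}}$ by selecting the $x_l$-th ancilla bit, controlled on the $k$ low qubits. We do this by a controlled-SWAP network that moves bit $x_l$ to a fixed position, applying $Z$ there, and then uncomputing the swaps. This uses $O(2^k)$ controlled-SWAPs, hence $O(2^k)$ T gates. Alternatively, a second unary iteration over $x_l$ applying controlled-$Z$ costs $O(2^k)$ as well.

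\textbf{Step 3 (uncompute).} We undo Step 1. Because the register holds a classical function of $x_h$, the uncomputation can be performed by measurement-based uncomputation. If we want a purely unitary circuit, we simply reverse Step 1, again at cost $O(2^{n-k})$. All ancillas are returned to $\ket{0}$, so the net action is exactly $B$.

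\textbf{Choice of $k$.} Setting $k=\lceil n/2\rceil$ gives total T-count $O(2^{n-k}+2^k)=O(\sqrt{2^n})$ and $O(2^k)=O(\sqrt{2^n})$ ancillas. Every gate used (Toffolis, controlled-SWAPs, CNOTs, $Z$) has an exact Clifford+T decomposition, so the implementation is exact.

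\textbf{Main obstacle.} The delicate point is bookkeeping the T-cost of unary iteration, making sure it is $O(1)$ per leaf rather than $O(n)$ per leaf. This is handled by the standard sawtooth construction with temporary logical-AND gates, whose uncomputation is Clifford plus measurement, or 4 T gates if done unitarily. Since the statement is a citation of \cite[Fact~3.3]{gosset2024quantumstatepreparationoptimal}, it suffices to verify these counts against that construction.
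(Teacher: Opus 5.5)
Your proposal is correct, and it is essentially the standard argument behind this statement. The paper gives no proof of its own and simply imports \cite[Fact~3.3]{gosset2024quantumstatepreparationoptimal}, which rests on the same select--swap lookup idea from \cite{Low2024tradingtgatesdirty}: split the input, load a truth-table block with Clifford data writes, pick out the relevant bit, apply the phase and uncompute, for T-cost $O(2^{n-k}+2^k)=O(\sqrt{2^n})$. Use the unitary reversal of Step~1 and of the temporary ANDs rather than measurement-based uncomputation, since the paper's Clifford+T circuits are measurement-free sequences of $H$, $S$, CNOT and $T$.
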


\begin{lemma}
    [Diagonal unitary synthesis with optimal T-count~{\cite[Theorem~1.2]{gosset2024quantumstatepreparationoptimal}}]\label{lem:optimal_diagonal}
    Any diagonal unitary on $n$ qubits can be implemented up to error $\varepsilon$ (in operator norm distance) by a Clifford+T circuit using
    \begin{equation*}
        O\rbra*{\sqrt{2^n\log\rbra*{1/\varepsilon}}+\log\rbra*{1/\varepsilon}}
    \end{equation*}
    T gates and ancillas.
    Furthermore, no Clifford+T circuit (even with measurements and adaptivity) can use asymptotically fewer T gates.
\end{lemma}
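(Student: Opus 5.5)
The plan is to reduce the diagonal unitary to a table lookup of its phase bits followed by a phase kickback, and to balance the lookup cost against ancilla use so that the T-count is $O\rbra{\sqrt{2^n L}+L}$ with $L=\lceil\log_2(1/\varepsilon)\rceil+O(1)$. The step I have not resolved is making the kickback resource cost only $O(L)$ T gates (third paragraph).

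\emph{Upper bound: reduction.} Write $D=\sum_x e^{2\pi i\phi_x}\ketbra{x}{x}$ and round each $\phi_x$ to $L$ bits, $\tilde\phi_x=\sum_{j=1}^{L} b_j(x)2^{-j}$. Each entry then moves by $O(2^{-L})$, so the operator-norm error is $O(\varepsilon)$. Implement $\diag\rbra{e^{2\pi i\tilde\phi_x}}$ in three steps.
\begin{enumerate}
\item Compute $\ket{x}\ket{0^L}\mapsto\ket{x}\ket{b(x)}$ with a Boolean lookup table (QROM).
\item Apply the phase $e^{2\pi i\, b/2^L}$ to the register $\ket{b}$ by phase kickback: add $b$ into a phase-gradient register $\ket{\Phi_L}=2^{-L/2}\sum_{y}e^{-2\pi i y/2^L}\ket{y}$. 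The adder costs $O(L)$ T gates.
\item Uncompute the lookup.
\end{enumerate}

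\emph{Upper bound: cost of the lookup.} I would use the select--swap construction with a tunable parameter $\lambda$. It reads out $\lambda$ candidate $L$-bit words at once, which costs $O\rbra{2^n/\lambda}$ T gates, and then selects the right word with a swap network, which costs $O(\lambda L)$ T gates. Choosing $\lambda=\sqrt{2^n/L}$ gives $O\rbra{\sqrt{2^n L}}$ T gates and a comparable number of ancillas. Uncomputation can be made cheaper with a measurement-based trick, but any cost of the same order suffices. These are exactly the Boolean-phase-oracle-style constructions behind \cref{lem:optimal_boolean_phase}, now with $L$ output bits instead of one.

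\emph{Main obstacle: the gradient state.} Preparing $\ket{\Phi_L}=\bigotimes_j \frac{1}{\sqrt2}\rbra{\ket0+e^{-2\pi i2^{-j}}\ket1}$ naively needs $L$ single-qubit rotations. Each must be accurate to $\varepsilon/L$, so \cref{lem:optimal_single_qubit} gives cost $L\log(L/\varepsilon)$, which exceeds the target when $n$ is small. Three options, in the order I would try them:
\begin{enumerate}
\item Approximate only the few high-precision factors carefully, noting that the $j$-th factor only needs accuracy about $\varepsilon$, not $\varepsilon/L$, if errors are measured on the final phase.
\item Prepare $\ket{\Phi_L}$ recursively: $\ket{\Phi_L}$ is an eigenstate of the increment operator, so it can be prepared by phase estimation together with a single $O(\log(1/\varepsilon))$-cost rotation, reusing kickback on halves of the register.
\item Absorb the difficulty via the regime split: when $\log(1/\varepsilon)\gtrsim 2^n$, the additive $O(\log(1/\varepsilon))$ term already dominates, and one may synthesize $D$ directly as a product of $2^n$ multi-controlled phases with a tailored rotation budget.
\end{enumerate}

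\emph{Lower bound.} I would use a state-counting argument in the style of \cref{lem:state_count}. Apply candidate circuits to $\ket{+}^{\otimes n}$. A packing of the torus of phases gives $(1/\varepsilon)^{\Omega(2^n)}$ diagonal unitaries whose output states $2^{-n/2}\sum_x e^{i\phi_x}\ket{x}$ are pairwise $\Omega(\varepsilon)$-far in trace distance. A Clifford circuit with postselection and $t$ magic states yields at most $2^{O(n^2+t^2)}$ distinct states. Hence $t^2+n^2\ge\Omega\rbra{2^n\log(1/\varepsilon)}$, giving $t=\Omega\rbra{\sqrt{2^n\log(1/\varepsilon)}}$ for small $\varepsilon$. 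The additive $\Omega(\log(1/\varepsilon))$ term comes from the single-qubit case, namely the diagonal gate $\diag(1,e^{i\theta})$, via \cite{beverland2020lower}.
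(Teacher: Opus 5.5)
\noindent The paper does not prove this lemma; it quotes it from \cite[Theorem~1.2]{gosset2024quantumstatepreparationoptimal}, so your argument has to stand on its own. Your lower-bound sketch is essentially that reference's counting argument and is fine in outline. You still need the standard reduction from adaptive circuits with measurements to Clifford circuits with Pauli postselection, and the packing must be taken modulo global phase, which costs only a factor $O(1/\varepsilon)$.

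The upper bound has a genuine gap at exactly the step you flagged, and none of your three remedies closes it. As written, the lookup costs $O(\sqrt{2^nL})$ and the gradient state costs $\Theta(L\log(L/\varepsilon))$. That gives $O\bigl(\sqrt{2^n\log(1/\varepsilon)}+\log^2(1/\varepsilon)\bigr)$, which is strictly worse than the claim once $\log(1/\varepsilon)\gg 2^{n/3}$.
\begin{itemize}
\item Option~1 only replaces $\log(L/\varepsilon)$ by $\log(1/\varepsilon)$ per factor. There are still $\Theta(L)$ factors with non-Clifford angle $2\pi2^{-j}\gg\varepsilon$, and each costs $\Theta(\log(1/\varepsilon))$ when synthesized as a rotation. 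Indeed, an ancilla-free Clifford+T approximant of $R_z(\theta)$ with $\varepsilon\ll\theta\le\pi/8$ cannot be diagonal. A non-diagonal one whose off-diagonal entry is at most $\varepsilon$ needs denominator exponent, hence T-count, of about $\log_2(1/\varepsilon)$. So the total stays $\Theta(L^2)$.
\item Option~2 is not yet a procedure. Resolving the increment's eigenphases to the needed precision takes controlled increments by $2^j$ for $\Theta(L)$ values of $j$, each costing $\Theta(L)$ T gates, so the cost is again $\Theta(L^2)$. The ``single rotation'' is never specified.
\item Option~3 misplaces the bad regime. The trouble already starts around $L\approx 2^{n/3}$, where the target $\sqrt{2^nL}$ is far below $L^2$ but far above $L$. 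Even for $L\gtrsim 2^n$, a product of $2^n$ separately synthesized phases costs $\Theta(2^nL)$, not $O(L)$, because the per-phase error budgets must sum to at most $\varepsilon$.
\end{itemize}

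\noindent One way to close the gap is to drop phase kickback altogether, so that no non-Clifford rotation is ever synthesized and all irrational data sit in Boolean tables.
\begin{itemize}
\item Expand $\cos\phi_x$ and $\sin\phi_x$ in signed binary, $\sum_{j=1}^{k}2^{-j}s_j(x)+2^{-k}s_{k+1}(x)$ with $s_j(x)\in\{\pm1\}$ and error at most $2^{-k}$.
\item Then $D$ is $O(2^{-k})$-close to $\sum_j c_j\bigl(B^{c}_j+iB^{s}_j\bigr)$. Here the $B^{c}_j,B^{s}_j$ are Boolean phase oracles and the dyadic coefficients have total weight exactly $2$.
\item The LCU select operator is a single $\bigl(n+O(\log k)\bigr)$-qubit Boolean phase oracle, times an $S$ gate on the cos/sin flag. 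By \cref{lem:optimal_boolean_phase} it costs $O(\sqrt{2^nk})$.
\item The coefficient state has dyadic weights. A cascade of controlled Hadamards followed by unary-to-binary conversion prepares it with $O(k)$ T gates.
\item The block-encoded operator is approximately $D/2$ with $D$ unitary, so one round of oblivious amplitude amplification ($3\cdot\frac12-4\cdot\frac18=1$) returns $D$ to precision $2^{-\Omega(k)}$.
\end{itemize}
Taking $k=\Theta(\log(1/\varepsilon))$ then gives the claimed $O\bigl(\sqrt{2^n\log(1/\varepsilon)}+\log(1/\varepsilon)\bigr)$.
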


\begin{lemma}
    [Quantum state preparation with optimal T-count~{\cite[Theorem~1.1]{gosset2024quantumstatepreparationoptimal}}]\label{lem:optimal_preparation}
    Any $n$-qubit state can be prepared up to error $\varepsilon$ (in 2-norm distance) by a Clifford+T circuit starting with the all-zeros state using
    \begin{equation*}
        O\rbra*{\sqrt{2^n\log\rbra*{1/\varepsilon}}+\log\rbra*{1/\varepsilon}}
    \end{equation*}
    T gates and ancillas.
    Furthermore, no Clifford+T circuit (even with measurements and adaptivity) can use asymptotically fewer T gates.
\end{lemma}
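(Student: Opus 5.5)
This result is quoted from \cite{gosset2024quantumstatepreparationoptimal}, but here is how I would rederive it from \cref{lem:optimal_diagonal} together with a table-lookup construction. Write the target state as $\ket{\psi} = \sum_{x} r_x e^{i\phi_x}\ket{x}$ with $r_x \geq 0$. I would split preparation into two stages. First prepare the real state $\sum_x r_x\ket{x}$. Then apply the diagonal unitary $\diag\rbra{e^{i\phi_x}}$, which costs $O\rbra{\sqrt{2^n\log\rbra{1/\varepsilon}}+\log\rbra{1/\varepsilon}}$ T gates by \cref{lem:optimal_diagonal}. So the whole difficulty lies in the magnitude stage.

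For the magnitudes I would use the standard binary-tree (Grover--Rudolph) decomposition. At level $k \in \{0,\dots,n-1\}$, conditioned on the first $k$ qubits being in state $\ket{y}$, we rotate qubit $k+1$ by an angle $\theta_y$. This angle is determined by the marginal weights of the two halves of the subtree rooted at $y$. I would implement level $k$ in three steps:
\begin{enumerate}
\item Load a $b_k$-bit approximation of $\theta_y$ into an ancilla register using a Boolean data-lookup oracle (QROM). With the clean/dirty-ancilla tradeoff of \cite{Low2024tradingtgatesdirty}, this costs $O\rbra{\sqrt{2^k b_k}+b_k}$ T gates. Equivalently, \cref{lem:optimal_boolean_phase} can be applied bitwise in a Hadamard-conjugated form.
\item Apply the rotation by $\theta_y$ using a phase-gradient or bitwise controlled rotation. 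This costs $O\rbra{b_k}$ T gates.
\item Uncompute the lookup, which costs about the same as the load when measurement-based uncomputation is used.
\end{enumerate}

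The error and cost analysis is where the optimal bound must come from, and I expect it to be the main obstacle. Choosing $b_k = \log\rbra{n/\varepsilon}$ uniformly would give an extra $\log n$ factor and a $\log^2$ term. Instead I would allocate precision geometrically, taking $b_k \approx \log\rbra{1/\varepsilon} + c\,(n-k)$, so that the level errors still sum to $O\rbra{\varepsilon}$ in $2$-norm. The key check is that $\sum_k \sqrt{2^k b_k}$ is dominated by its top terms: it is a geometric series in $2^{k/2}$ with only polynomially growing weights $b_k$. This would give $O\rbra{\sqrt{2^n\log\rbra{1/\varepsilon}}}$. The additive $b_k$ terms are handled by doing the low levels, where $2^k < b_k$, directly with single-qubit synthesis (\cref{lem:optimal_single_qubit}) on a compressed register. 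Their total should collapse to $O\rbra{\log\rbra{1/\varepsilon}}$, and verifying this bookkeeping is the delicate part.

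For optimality I would use a counting argument. By the fact quoted later in the paper (\cref{lem:state_count}), a Clifford circuit with Pauli postselection consuming $t$ magic states yields at most $2^{O\rbra{n^2+t^2}}$ distinct $n$-qubit states, and this also covers circuits with measurements and adaptivity. On the other hand, a $2\varepsilon$-packing of the unit sphere in $\CC^{2^n}$ has size $\rbra{1/\varepsilon}^{\Omega\rbra{2^n}}$. Each packing point needs its own approximating circuit output, so $n^2+t^2 \geq \Omega\rbra{2^n\log\rbra{1/\varepsilon}}$, giving $t = \Omega\rbra{\sqrt{2^n\log\rbra{1/\varepsilon}}}$. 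The $\Omega\rbra{\log\rbra{1/\varepsilon}}$ term already holds for single-qubit states \cite{beverland2020lower}.
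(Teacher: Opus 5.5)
\paragraph{Overall assessment.} The paper gives no proof of this statement; it imports it verbatim from \cite{gosset2024quantumstatepreparationoptimal}. Your lower-bound sketch is the counting argument used there (and in \cref{prop:lower_postselection}), modulo the reduction from adaptive circuits to postselected ones, which \cref{lem:state_count} does not itself provide.

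\paragraph{Main gap: the additive cost of the low levels.} The upper bound has a genuine gap, and it is not bookkeeping. In a level-by-level Grover--Rudolph tree, each level must be accurate to $O(\varepsilon)$; level $0$ alone is a single rotation acting on the whole state. So each level pays an additive $\Omega(\log(1/\varepsilon))$ T gates for its rotation step. This is $\Theta(b_k)$ with $b_k \geq \log(1/\varepsilon)$ for a register-controlled rotation, and $\Omega(\log(1/\varepsilon))$ even for one generic rotation synthesized via \cref{lem:optimal_single_qubit}. The lookup cost $\sqrt{2^k b_k}$ absorbs this only when $2^k \gtrsim b_k$. The roughly $\min\{n,\log_2\log(1/\varepsilon)\}$ low levels therefore contribute $\Theta(\log(1/\varepsilon)\cdot\min\{n,\log\log(1/\varepsilon)\})$. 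Switching those levels to single-qubit synthesis cannot remove this; synthesizing all $2^k$ angles of each low level separately is even worse, $\Theta(\log^2(1/\varepsilon))$ in total. Your construction thus yields $O(\sqrt{2^n\log(1/\varepsilon)}+\log(1/\varepsilon)\min\{n,\log\log(1/\varepsilon)\})$, not the claimed bound.

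\paragraph{A concrete failure.} Take the product state $\bigotimes_{j=1}^{n}(\cos\theta_j\ket{0}+\sin\theta_j\ket{1})$ with generic $\theta_j$ and $n\approx\log_2\log_2(1/\varepsilon)$. The statement promises $O(\log(1/\varepsilon))$ T gates. Your tree instead performs $n$ independent $\varepsilon$-accurate rotations, since the infidelity is $\approx\sum_j\delta_j^2$ and nothing can cancel. Achieving $O(\log(1/\varepsilon))$ here is exactly the batched-synthesis phenomenon of \cite{gosset2024quantumstatepreparationoptimal}. For this state it is easy: the state equals $(SH)^{\otimes n}D\ket{+}^{\otimes n}$ with $D=\bigotimes_j\diag(e^{-i\theta_j},e^{i\theta_j})$, so one call to \cref{lem:optimal_diagonal} suffices. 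The missing idea is a construction that pays the cost of a high-precision rotation only $O(1)$ times overall. A sequential tree of $n$ rotations cannot provide this for general states.

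\paragraph{Further issues in the cost accounting.} First, charging $O(b_k)$ for the rotation step presupposes a phase-gradient register on $\max_k b_k\approx\log(1/\varepsilon)+cn$ qubits, prepared to accuracy $O(\varepsilon)$. Producing it by single-qubit synthesis costs $\Theta(\log^2(1/\varepsilon))$, which is exactly the term you set out to avoid. This alone exceeds the target once $\log(1/\varepsilon)\gg 2^{n/3}$. Bitwise controlled rotations are worse, at $\Theta(b_k\log(b_k/\varepsilon))$ per level. You would need a rotation primitive without this overhead. One option is to load the $O(\log(1/\varepsilon))$-bit Clifford+T word of each rotation and run a fixed programmable circuit with $O(1)$ T gates per syllable. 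That fixes only this issue, not the $n$-level one above.

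Second, applying \cref{lem:optimal_boolean_phase} bitwise is not equivalent to the lookup. It costs $O(b_k\sqrt{2^k})$ per level rather than $O(\sqrt{2^k b_k})$, which would sum to $\sqrt{2^n}\log(1/\varepsilon)$.
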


\subsection{Diamond norm bound for block-encoded unitaries}
Finally, we provide a bound relating the diamond norm and the block-encoding's operator norm.

\begin{lemma}\label{lem:block_encoding_diamond}
    Let $V$ be an $n$-qubit unitary operator. 
    Suppose that $U$ is an $\rbra{n+a}$-qubit unitary operator such that $\Abs{\bra{0}^{\otimes a} U \ket{0}^{\otimes a} - V} \leq \varepsilon$. 
    Then, $\Abs{\mathcal{U} - \mathcal{V}}_\diamond \leq 2^{n+2}\varepsilon$, where $\mathcal{V} \colon \rho \mapsto V \rho V^\dag$ and
    \begin{equation*}
        \mathcal{U} \colon \rho \mapsto \tr_a\rbra*{ U \rbra*{ \rho \otimes \ketbra{0}{0}^{\otimes a} } U^\dag }.
    \end{equation*}
\end{lemma}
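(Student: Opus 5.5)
To prove \cref{lem:block_encoding_diamond}, I will first bound the diamond norm by a trace-norm quantity on an extended system, and then lose only a dimension factor when comparing trace norm with operator norm. Write $A = \bra{0}^{\otimes a} U \ket{0}^{\otimes a}$, so that $\Abs{A - V} \leq \varepsilon$. Expand the action of $U$ on $\rho \otimes \ketbra{0}{0}^{\otimes a}$ by splitting the output ancilla register into the $\ket{0}^{\otimes a}$ component and its orthogonal complement. This gives
\[
\mathcal{U}(\rho) = A\rho A^\dag + \sum_{j \neq 0} A_j \rho A_j^\dag, \qquad A_j = \bra{j} U \ket{0}^{\otimes a}.
\]
Unitarity of $U$ gives $\sum_{j\neq 0} A_j^\dag A_j = I - A^\dag A$.

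The first term is close to $V\rho V^\dag$. For any input $X$ on the system plus a reference of equal dimension, with $\Abs{X}_1 \leq 1$, I will write
\[
(A\otimes I)X(A\otimes I)^\dag - (V\otimes I)X(V\otimes I)^\dag = ((A-V)\otimes I)X(A\otimes I)^\dag + (V\otimes I)X((A-V)\otimes I)^\dag.
\]
Since $\Abs{A}\leq 1$ and $\Abs{V}=1$, Hölder's inequality bounds the trace norm of this difference by $2\varepsilon$.

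The remaining term is the leaked part, and this is the main obstacle: I need to bound its trace norm, and its dependence on $X$ is awkward. My plan is to avoid any clever argument and accept the dimension factor. First, each $A_j\otimes I$ contributes a term whose trace norm is at most $\Abs{A_j}^2 \Abs{X}_1$ when the term is taken of the form $(A_j\otimes I)X(A_j\otimes I)^\dag$. More robustly, the whole sum is the image of $X$ under a completely positive map $\Phi$ with $\Phi^*(I) = I - A^\dag A$. For positive $X$, the trace norm of $\Phi\otimes\mathcal{I}(X)$ equals $\tr((I - A^\dag A)\otimes I \cdot X) \leq \Abs{I - A^\dag A}$. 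For general $X$, I split $X$ into at most four positive parts, costing a constant factor.

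It remains to bound $\Abs{I-A^\dag A}$. Since $V$ is unitary, $I - A^\dag A = V^\dag V - A^\dag A = V^\dag(V-A) + (V-A)^\dag A$, so its operator norm is at most $2\varepsilon$. The leaked term is therefore $O(\varepsilon)$, and the whole bound is $O(\varepsilon)$, well within $2^{n+2}\varepsilon$.

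If instead the splitting into positive parts is done crudely, via operator-norm to trace-norm conversions on a $2^{2n}$-dimensional space, the factor $2^{n+2}$ in the statement arises naturally. I would use the tighter argument above but note that it already implies the stated weaker bound. The final step is to combine: $\Abs{\mathcal{U}-\mathcal{V}}_\diamond \leq 2\varepsilon + 4\cdot 2\varepsilon \leq 2^{n+2}\varepsilon$ for $n\geq 2$. For $n \leq 1$, the constants can be tightened by handling the Hermitian parts of $X$ (two positive pieces each) directly.
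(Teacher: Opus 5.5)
Your proof is correct, and it gives a stronger, dimension-free bound. Your closing remark covers $n=1$ once made explicit. Splitting $X = X_R + iX_I$ with $\Abs{X_R}_1 + \Abs{X_I}_1 \le 2\Abs{X}_1$ bounds the leaked term by $4\varepsilon$, so $\Abs{\mathcal{U}-\mathcal{V}}_\diamond \le 6\varepsilon \le 2^{n+2}\varepsilon$ for every $n\ge 1$. Since $\mathcal{U}-\mathcal{V}$ is Hermiticity-preserving, you could even restrict to $X=\ket{u}\bra{u}$ and get $4\varepsilon$.

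The paper follows the same skeleton as you: Kraus operators $\bra{j}U\ket{0}$, separating $j=0$ from the leakage, $\sum_{j\neq 0}A_j^\dagger A_j = I - A^\dagger A$, and $\Abs{V^\dagger V - A^\dagger A}\le 2\varepsilon$. The difference is that it never introduces a reference system. It bounds the unstabilized induced trace norm over density operators $\rho$ by $4\varepsilon$, then pays the dimension factor through $\Abs{\cdot}_\diamond \le 2^n\Abs{\cdot}_1$; that conversion is the only source of the $2^{n+2}$.

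Your treatment of the $j=0$ term is also more robust than the paper's. The paper bounds $\Abs{A\rho A^\dagger - V\rho V^\dagger}_1$ by $\Abs{A^\dagger A - V^\dagger V}$, which conflates the trace with the trace norm and fails in general. For $A = VW$ with $W$ unitary but not a phase, the right side is $0$ while the left side need not be. Your telescoping $((A-V)\otimes I)X(A\otimes I)^\dagger + (V\otimes I)X((A-V)\otimes I)^\dagger$ plus H\"older gives the needed $2\varepsilon$ directly, and for arbitrary, possibly non-Hermitian, $X$. The paper's stated conclusion therefore survives, but only because your estimate supplies that step.

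In short, your route costs an extra reference system and buys an $n$-independent constant and a sound argument. With it, the rescaling $\varepsilon_1 = \varepsilon/2^{n+3}$ in the proof of the main theorem could be replaced by a constant rescaling.
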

\begin{proof}
    \allowdisplaybreaks
    For any $n$-qubit density operator $\rho$, we have
    \begin{align*}
        \mathcal{U}\rbra{\rho}
        &= \tr_a\rbra*{ U \rbra*{ \rho \otimes \ketbra{0}{0}^{\otimes a} } U^\dag }
        = \sum_{j=0}^{2^a-1} \bra{j}_a U \ket{0}_a \cdot \rho \cdot \bra{0}_a U \ket{j}_a,
    \end{align*}
    so
    \begin{align*}
        \Abs{ \mathcal{U} - \mathcal{V} }_{1}
        ={}& \max_{\rho} {} \Abs*{ \mathcal{U}\rbra{\rho} - \mathcal{V}\rbra{\rho} }_{1} \\
        ={}& \max_{\rho} {} \Abs*{ \sum_{j=0}^{2^a-1} \bra{j}_a U \ket{0}_a \cdot \rho \cdot \bra{0}_a U^\dag \ket{j}_a - V \rho V^\dag }_1 \\
        \leq{} & \max_{\rho} {} \rbra*{ \Abs*{ \bra{0}_a U \ket{0}_a \cdot \rho \cdot \bra{0}_a U^\dag \ket{0}_a - V \rho V^\dag }_1  + \Abs*{ \sum_{j=1}^{2^a-1} \bra{j}_a U \ket{0}_a \cdot \rho \cdot \bra{0}_a U^\dag \ket{j}_a }_1 } \\
        ={}& \max_{\rho} {} \rbra*{ \Abs*{ \bra{0}_a U \ket{0}_a \cdot \rho \cdot \bra{0}_a U^\dag \ket{0}_a - V \rho V^\dag }_1 + \tr\rbra*{ \rbra*{ I - \bra{0}_a U^\dag \ket{0}_a \cdot \bra{0}_a U \ket{0}_a} \rho } } \\
        \leq{}& \Abs*{ \bra{0}_a U^\dag \ket{0}_a \cdot \bra{0}_a U \ket{0}_a - V^\dag V } + \Abs*{I - \bra{0}_a U^\dag \ket{0}_a \cdot \bra{0}_a U \ket{0}_a} \\
        ={}& 2 \Abs*{V^\dag V - \bra{0}_a U^\dag \ket{0}_a \cdot \bra{0}_a U \ket{0}_a} \tag{$V^\dag V = I$} \\
        \leq{}& 2 \big( \Abs*{V^\dag V - V^\dag \cdot \bra{0}_a U \ket{0}_a}  + \Abs*{V^\dag \cdot \bra{0}_a U \ket{0}_a - \bra{0}_a U^\dag \ket{0}_a \cdot \bra{0}_a U \ket{0}_a} \big) \\
        \leq{}& 2 \big( \Abs*{V^\dag} \Abs*{V - \bra{0}_a U \ket{0}_a} + \Abs*{V^\dag - \bra{0}_a U^\dag \ket{0}_a} \Abs*{\bra{0}_a U \ket{0}_a} \big) \\
        \leq{}& 2 \rbra*{ \Abs*{V - \bra{0}_a U \ket{0}_a} + \Abs*{V^\dag - \bra{0}_a U^\dag \ket{0}_a} } \\
        ={}& 4 \Abs*{V - \bra{0}_a U \ket{0}_a} \tag{$\Abs{A^\dag} = \Abs{A}$} \\
        \leq{}& 4\varepsilon,
    \end{align*}
    and hence
    \[
    \Abs{\mathcal{U} - \mathcal{V}}_\diamond \leq 2^n \Abs{\mathcal{U} - \mathcal{V}}_1 \leq 2^{n+2} \varepsilon. \qedhere
    \]
\end{proof}

\section{Near-Clifford unitary synthesis with near-optimal T-count}
In this section, we present a method for approximating arbitrary operators through linear combinations of interleaved Hadamard operators and diagonal unitaries, which extends the approximation method for states~\cite{gosset2024quantumstatepreparationoptimal} to operators.
Then, applying our approximation method to the unitary's Hamiltonian, together with the Linear-Combination-of-Unitaries algorithm (LCU)~\cite{childs2012hamiltonian}, constructs a standard encoding oracle of this Hamiltonian with good T-count by \cref{lem:optimal_diagonal}.
Finally, the robust block-Hamiltonian simulation via QSVT~\cite{GSLW19} with this encoded Hamiltonian oracle completes the unitary synthesis.

\subsection{Approximating arbitrary operators}
We first observe that the quadratic form $x^{\mathrm{T}} A y$ for any matrix $A \in \CC^{N\times N}$ attains a a certain minimal magnitude when $x$ and $y$ are chosen uniformly at random from $\cbra{\pm 1}^N$.

\begin{lemma}\label{lem:1_norm_xy}
    Let $N$ be a positive integer and $A \in \CC^{N\times N}$. Then
    \begin{equation*}
        \Exps[x,y\sim \cbra{\pm 1}^N]{\abs*{\sum_{j\in\sbra{N}}\sum_{k\in\sbra{N}}A_{j,k}\cdot x_{j}\cdot y_{k}}} \geq \frac{1}{18}\fnorm{A}.
    \end{equation*}
\end{lemma}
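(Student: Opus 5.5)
Consider the function $f\colon \cbra{\pm 1}^{2N} \to \CC$ defined by $f(x,y) = \sum_{j,k} A_{j,k} x_j y_k$, viewed as a multilinear polynomial in the $2N$ Boolean variables $(x_1,\dots,x_N,y_1,\dots,y_N)$. Every monomial $x_j y_k$ involves two distinct variables, since the $x$'s and $y$'s are separate coordinates. So $f$ is a multilinear polynomial of degree at most $2$. I will apply \cref{lem:1_norm_lower_bound} with $d=2$, which gives
\begin{equation*}
\norm[1]{f} \geq \frac{1}{2\cdot 3^2}\norm[2]{f} = \frac{1}{18}\norm[2]{f}.
\end{equation*}
The left-hand side is exactly the expectation in the statement. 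The constant $1/18$ comes out matching the claim.

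It then remains to compute $\norm[2]{f}$. The monomials $x_j y_k$ for $(j,k)\in[N]\times[N]$ are distinct characters $\chi_S$ with $S=\{x_j,y_k\}$, and they are orthonormal under the uniform distribution on $\cbra{\pm1}^{2N}$. Concretely, $\EE[x_j y_k x_{j'} y_{k'}] = \delta_{jj'}\delta_{kk'}$, so
\begin{equation*}
\norm[2]{f}^2 = \EE\abs{f}^2 = \sum_{j,k,j',k'} A_{j,k}\overline{A_{j',k'}}\,\EE[x_jx_{j'}y_ky_{k'}] = \sum_{j,k}\abs{A_{j,k}}^2 = \fnorm{A}^2.
\end{equation*}
This is Parseval's identity, and it completes the argument.

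No step is a real obstacle. The only point needing care is that the degree bound in \cref{lem:bonami} (and hence \cref{cor:bonami_complex} and \cref{lem:1_norm_lower_bound}) is stated for polynomials over $\cbra{\pm1}^n$ in the multilinear sense. Here $n=2N$, and $f$ is already multilinear of degree $2$, so the lemma applies directly. The same fact, that $x$ and $y$ are independent coordinate blocks so no $x_j^2$ terms appear, is what makes the Parseval computation clean.
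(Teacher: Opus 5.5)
Your proposal is correct and follows the paper's proof essentially step for step. The paper likewise computes $\EE\abs{S(x,y)}^2 = \fnorm{A}^2$ using $\EE[x_jx_{j'}y_ky_{k'}] = \delta_{j,j'}\delta_{k,k'}$. It then applies \cref{lem:1_norm_lower_bound} with $d=2$ to the degree-$2$ polynomial $S$ to obtain the constant $1/18$.
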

\begin{proof}
    Let $S(x,y) = \sum_{j\in\sbra{N}}\sum_{k\in\sbra{N}}A_{j,k}\cdot x_{j}\cdot y_{k}$. Then
    \begin{align*}
        \Exps[x,y\sim \cbra{\pm 1}^N]{\abs*{S(x,y)}^2}
        ={}&
        \Exps[x,y\sim \cbra{\pm 1}^N]{\sum_{j,j',k,k'\in\sbra{N}}A_{j,k}A_{j',k'}^*\cdot x_{j}x_{j'}\cdot y_{k}y_{k'}} \\
        ={}& \sum_{j,j',k,k'\in\sbra{N}}A_{j,k}A_{j',k'}^*\cdot \Exps[x,y\sim \cbra{\pm 1}^N]{x_{j}x_{j'}\cdot y_{k}y_{k'}} \\
        ={}& \sum_{j,j',k,k'\in\sbra{N}}A_{j,k}A_{j',k'}^*\cdot \delta_{j,j'}\cdot \delta_{k,k'} \\
        ={}& \sum_{j,k\in\sbra{N}}\abs{A_{j,k}}^2 \\
        ={}& \norm[F]{A}^2,
    \end{align*}
    because
    $\Exps[x,y\sim \cbra{\pm 1}^N]{x_{j}x_{j'}\cdot y_{k}y_{k'}} = \delta_{j,j'}\cdot \delta_{k,k'}$
    as $x$ and $y$ are uniformly distributed over $\{\pm 1\}^N$.
    Since $S$ is a polynomial of degree $2$, by \cref{lem:1_norm_lower_bound}, we have
    \begin{align*}
        \Exps[x,y\sim \cbra{\pm 1}^N]{\abs*{S(x,y)}} 
        &\geq \frac{1}{18}\sqrt{\Exps[x,y\sim \cbra{\pm 1}^N]{\abs*{S(x,y)}^2}}
        = \frac{1}{18}\fnorm{A}. \qedhere
    \end{align*} 
\end{proof}

Using Hadamard operators to ensure uniform magnitude across all indices, we establish the following lemma.

\begin{lemma}\label{lem:abs_tr}
    Let $A \in \CC^{2^n\times 2^n}$ and $F = H^{\otimes n}$. Then
    \begin{align*}
        \Exps[x,y\sim \cbra{\pm 1}^{\{0,1\}^n}]{\sum_{i}\abs*{\bra{i}F\diag(x)A\diag(y)F\ket{i}}}
        \geq{}& \frac{1}{18}\fnorm{A}.
    \end{align*}
\end{lemma}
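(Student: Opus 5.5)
The plan is to reduce the statement, one index $i$ at a time, to \cref{lem:1_norm_xy}, and then sum over $i$. Throughout, $N = 2^n$ and entries of $F = H^{\otimes n}$ are $F_{i,j} = (-1)^{i\cdot j}/\sqrt{N}$.

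\textbf{Step 1: write each summand as a bilinear form.} For each fixed $i \in \{0,1\}^n$, expanding the matrix product gives
\[
\bra{i}F\diag(x)A\diag(y)F\ket{i} = \sum_{j,k} F_{i,j}\, x_j\, A_{j,k}\, y_k\, F_{k,i} = \sum_{j,k} \rbra{A_i'}_{j,k}\, x_j y_k .
\]
Here $\rbra{A_i'}_{j,k} = F_{i,j} A_{j,k} F_{k,i}$. This is exactly the quadratic form $S(x,y)$ of \cref{lem:1_norm_xy}, with $A$ replaced by $A_i'$.

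\textbf{Step 2: apply \cref{lem:1_norm_xy} termwise.} Applying it for each $i$ gives
\[
\Exps[x,y]{\abs*{\bra{i}F\diag(x)A\diag(y)F\ket{i}}} \geq \tfrac{1}{18}\fnorm{A_i'} .
\]

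\textbf{Step 3: compute $\fnorm{A_i'}$.} Every entry of $F$ has modulus exactly $1/\sqrt{N}$, so $\abs{(A_i')_{j,k}} = \abs{A_{j,k}}/N$. Hence $\fnorm{A_i'} = \fnorm{A}/N$ for every $i$. This is the point where the Hadamard conjugation matters: it spreads the mass uniformly over all $N$ diagonal positions.

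\textbf{Step 4: sum over $i$.} By linearity of expectation, summing over the $N$ values of $i$ gives
\[
\sum_i \tfrac{1}{18}\,\fnorm{A}/N = \tfrac{1}{18}\fnorm{A}.
\]

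There is no real obstacle here. The only points needing care are that \cref{lem:1_norm_xy} applies to arbitrary complex matrices, which $A_i'$ is, and the bookkeeping of the index set $\{0,1\}^n$ versus $[N]$.
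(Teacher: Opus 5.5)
Your proposal is correct and follows the paper's proof essentially step for step. You write each diagonal entry as the bilinear form with matrix entries $F_{i,j}A_{j,k}F_{k,i}$, apply \cref{lem:1_norm_xy} for each $i$, use $\abs{F_{i,j}} = 1/\sqrt{2^n}$ to get a Frobenius norm of $\fnorm{A}/2^n$, and sum over the $2^n$ indices.
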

\begin{proof}
    \allowdisplaybreaks
    \begin{align*}
        & \Exps[x,y\sim \cbra{\pm 1}^{\{0,1\}^n}]{\sum_{i}\abs*{\bra{i}F\diag(x)A\diag(y)F\ket{i}}} \\
        ={}& \Exps[x,y\sim \cbra{\pm 1}^{\{0,1\}^n}]{\sum_{i}\abs*{\sum_{j,k}F_{i,j}\cdot x_j\cdot A_{j,k}\cdot y_k\cdot F_{k,i}}} \\
        ={}& \sum_{i}\Exps[x,y\sim \cbra{\pm 1}^{\{0,1\}^n}]{\abs*{\sum_{j,k}F_{i,j}A_{j,k}F_{k,i}\cdot x_jy_k}} \\
        \geq{}& \sum_{i}\frac{1}{18}\sqrt{\sum_{j,k}\abs*{F_{i,j}A_{j,k}F_{k,i}}^2} \tag{by \cref{lem:1_norm_xy}} \\
        ={}& \sum_{i}\frac{1}{18\cdot 2^n}\sqrt{\sum_{j,k}\abs*{A_{j,k}}^2} \tag{because $\abs{F_{i,j}} = \abs{F_{k,i}} = \frac{1}{\sqrt{2^n}}$} \\
        ={}& \sum_{i}\frac{1}{18\cdot 2^n}\fnorm{A} \\
        ={}& \frac{1}{18}\fnorm{A}. \tag*{\qedhere}
    \end{align*}
\end{proof}

By multiplying appropriate complex phases (modulus-one scalars) in all summation terms of \cref{lem:abs_tr} to eliminate the absolute value symbols, we derive the following lemma.

\begin{lemma}\label{lem:tr}
    For any $A \in \CC^{2^n\times 2^n}$, there exist two $n$-qubit Boolean phase oracles $B_1$ and $B_2$, and an $n$-qubit diagonal unitary $D$ such that
    \begin{equation*}
        \tr\rbra*{(B_1H^{\otimes n}DH^{\otimes n}B_2)^{\dagger} \cdot A} \geq \frac{1}{18}\fnorm{A}.
    \end{equation*}
\end{lemma}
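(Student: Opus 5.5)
The plan is a probabilistic-method argument built on \cref{lem:abs_tr}, followed by an explicit choice of the phases in $D$ that removes the absolute values. Write $F = H^{\otimes n}$ and note that $F$ is real and symmetric, so $F^\dagger = F$. For $B_1 = \diag(x)$ and $B_2 = \diag(y)$ with $x,y \in \cbra{\pm 1}^{\cbra{0,1}^n}$, we also have $B_1^\dagger = B_1$ and $B_2^\dagger = B_2$. Both are Boolean phase oracles, since their diagonal entries are $\pm 1$. For any diagonal unitary $D$, cyclicity of the trace gives
\begin{equation*}
    \tr\rbra*{(B_1 F D F B_2)^\dagger A} = \tr\rbra*{B_2 F D^\dagger F B_1 A} = \tr\rbra*{D^\dagger \, F B_1 A B_2 F} = \sum_i \overline{D_{i,i}} \bra{i} F \diag(x) A \diag(y) F \ket{i}.
\end{equation*}
The whole inner product therefore reduces to a weighted sum of the diagonal entries of $M \coloneqq F\diag(x)A\diag(y)F$, with weights $\overline{D_{i,i}}$.

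Next I fix the phases. For each $i$, I would set $D_{i,i} = M_{i,i}/\abs{M_{i,i}}$ when $M_{i,i} \neq 0$, and $D_{i,i} = 1$ otherwise. Then $\overline{D_{i,i}} M_{i,i} = \abs{M_{i,i}}$ for every $i$, so $D$ is a diagonal unitary and the trace equals $\sum_i \abs{\bra{i} F\diag(x)A\diag(y)F\ket{i}}$. In particular, the trace is real and nonnegative, so the inequality in the statement makes sense as a real inequality.

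Finally, I apply \cref{lem:abs_tr}. It says the expectation over uniform $x,y$ of this sum is at least $\frac{1}{18}\fnorm{A}$. A random variable cannot lie strictly below its mean with probability one, so some pair $(x,y)$ attains a value of at least $\frac{1}{18}\fnorm{A}$. Taking this pair as $B_1$, $B_2$, and the corresponding $D$ chosen above, proves the lemma. All the analytic content sits in \cref{lem:abs_tr}. The only point needing care is the adjoint and cyclicity bookkeeping, specifically that $F$, $B_1$, and $B_2$ are self-adjoint, so that the diagonal entries appear without extra conjugations. I do not expect a genuine obstacle here.
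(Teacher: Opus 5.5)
Your proposal is correct and follows essentially the same route as the paper: use averaging over \cref{lem:abs_tr} to fix the Boolean phase oracles $B_1, B_2$, then choose the diagonal phases of $D$ so that each diagonal entry of $H^{\otimes n}B_1AB_2H^{\otimes n}$ becomes its absolute value (your choice $D_{i,i} = M_{i,i}/\abs{M_{i,i}}$ coincides with the paper's $D = \diag(d)^\dagger$). Your explicit treatment of the case $M_{i,i}=0$ and of the adjoint and cyclicity bookkeeping only makes explicit what the paper leaves implicit.
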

\begin{proof}
    By \cref{lem:abs_tr}, there exist Boolean phase oracles $B_1$ and $B_2$ such that
    \begin{equation*}
        \sum_{i}\abs*{\bra{i}H^{\otimes n}B_1AB_2H^{\otimes n}\ket{i}} \geq \frac{1}{18}\fnorm{A}.
    \end{equation*}
    Let $d \in \CC^{\cbra{0,1}^n}$ be the vector such that
    \begin{equation}\label{eq:d_def}
        \bra{i}H^{\otimes n}B_1AB_2H^{\otimes n}\ket{i} \cdot d_i = \abs*{\bra{i}H^{\otimes n}B_1AB_2H^{\otimes n}\ket{i}}
    \end{equation}
    for any $i \in \cbra{0,1}^n$. Then $D = \diag(d)^{\dagger}$ is a diagonal unitary as $\abs{d_i} = 1$, and
    \begin{align*}
        \tr\rbra*{\rbra*{B_1H^{\otimes n}DH^{\otimes n}B_2}^{\dagger}\cdot A} 
        ={}& \tr\rbra*{H^{\otimes n}B_1AB_2H^{\otimes n}D^{\dagger}} \\
        ={}& \sum_i \bra{i}H^{\otimes n}B_1AB_2H^{\otimes n}D^{\dagger}\ket{i} \\
        ={}& \sum_i \bra{i}H^{\otimes n}B_1AB_2H^{\otimes n}\ket{i} \cdot d_i \\
        ={}&  \sum_i \abs*{\bra{i}H^{\otimes n}B_1AB_2H^{\otimes n}\ket{i}} \tag{by \cref{eq:d_def}} \\
        \geq{}& \frac{1}{18}\fnorm{A}. \tag*{\qedhere}
    \end{align*}
\end{proof}

According to \cref{lem:tr}, for any operator, we can find efficiently implementable unitaries with substantial overlap in the Hilbert-Schmidt inner product.
Thus, using \cref{lem:tr} recursively, we can approximate any operator as in the following lemma.
The proof employs reasoning similar to~\cite{rosenthal2024efficient,gosset2024quantumstatepreparationoptimal}, but chooses adaptive step size at each iteration.

\begin{lemma}\label{lem:arbitrary_operator_approximation}
    For any $n$-qubit operator $A$ and any integer $m\geq 1$, let
    \[\beta = \sqrt{1 - \frac{1}{324\cdot 2^n}}.\]
    Then, there exist $n$-qubit unitaries $V_0, V_1, \ldots, V_{m-1}$ and non-negative numbers $\alpha_0, \alpha_1, \ldots, \alpha_{m-1}$ such that
    \begin{equation*}
        \fnorm{A - \sum_{k=0}^{m-1}\alpha_k V_k} \leq \beta^{m}\fnorm{A}
    \end{equation*}
    and
    \begin{equation*}
        \sum_{k=0}^{m-1} \alpha_k \leq \frac{1-\beta^m}{18\cdot 2^n(1-\beta)}\fnorm{A},
    \end{equation*}
    where each $V_k = B_1^{(k)}H^{\otimes n}D^{(k)}H^{\otimes n}B_2^{(k)}$ for some $n$-qubit Boolean phase oracles $B_1^{(k)}, B_2^{(k)}$, and some $n$-qubit diagonal unitaries $D^{(k)}$.
\end{lemma}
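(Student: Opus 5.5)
We build the decomposition greedily, using Lemma~\ref{lem:tr} once per step. Set $R_0 = A$. For $k = 0,1,\dots,m-1$, apply \cref{lem:tr} to the current residual $R_k$. This gives Boolean phase oracles $B_1^{(k)}, B_2^{(k)}$ and a diagonal unitary $D^{(k)}$. Let $V_k = B_1^{(k)}H^{\otimes n}D^{(k)}H^{\otimes n}B_2^{(k)}$; then $\tr(V_k^\dagger R_k) \ge \frac{1}{18}\fnorm{R_k}$, and in particular this trace is real and non-negative. We use the adaptive step size
\[
\alpha_k = \frac{\tr(V_k^\dagger R_k)}{\fnorm{V_k}^2} = \frac{\tr(V_k^\dagger R_k)}{2^n},
\]
which is the orthogonal projection coefficient of $R_k$ onto $V_k$. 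We then update $R_{k+1} = R_k - \alpha_k V_k$, so that $R_m = A - \sum_{k<m}\alpha_k V_k$. Each $\alpha_k$ is non-negative because the trace is.

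The first key step is the per-step contraction. Since $\alpha_k V_k$ is the orthogonal projection of $R_k$ onto the span of $V_k$ (in the Hilbert–Schmidt inner product, using $\fnorm{V_k}^2 = \tr(I) = 2^n$), Pythagoras gives
\[
\fnorm{R_{k+1}}^2 = \fnorm{R_k}^2 - \frac{\abs{\tr(V_k^\dagger R_k)}^2}{2^n} \le \fnorm{R_k}^2\Bigl(1 - \frac{1}{324\cdot 2^n}\Bigr) = \beta^2\fnorm{R_k}^2 .
\]
Iterating this $m$ times yields $\fnorm{R_m} \le \beta^m\fnorm{A}$.

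The second step bounds the coefficient sum, and it needs slightly more care: a naive bound $\alpha_k \le \fnorm{V_k}\fnorm{R_k}/2^n$ gives the wrong constant. We instead use the same projection identity in the form
\[
\alpha_k^2 \cdot 2^n = \fnorm{R_k}^2 - \fnorm{R_{k+1}}^2 .
\]
Cauchy–Schwarz gives $\alpha_k \le 2^{-n/2}\fnorm{R_k} \le 2^{-n/2}\beta^k\fnorm{A}$, which is off from the target $\frac{1}{18\cdot 2^n}$ by a factor of about $18\cdot 2^{n/2}$. So at this point I would re-check the statement. Summing $\alpha_k \le 2^{-n/2}\beta^k\fnorm{A}$ gives $\sum_k\alpha_k \le \frac{1-\beta^m}{2^{n/2}(1-\beta)}\fnorm{A}$. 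Since $1-\beta \approx \frac{1}{648\cdot 2^n}$, this is $O(2^{n/2}\fnorm{A})$, and that is presumably the meaningful bound; it gives $\Abs{\alpha}_1 = O(\fnorm{A})$ after the normalization implicit in the intro's lemma, where operators are compared by $2^{-n/2}\fnorm{\cdot}$.

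If the stated constant $\frac{1}{18\cdot 2^n(1-\beta)}$ must be matched exactly, I would try the non-adaptive step instead: take $\alpha_k = \frac{1}{18\cdot 2^n}\fnorm{R_k}$. This is the lower bound on the projection coefficient, so it is at most the optimal step. Then
\[
\fnorm{R_{k+1}}^2 = \fnorm{R_k}^2 - 2\alpha_k\tr(V_k^\dagger R_k) + \alpha_k^2 2^n \le \fnorm{R_k}^2 - \alpha_k^2 2^n = \beta^2\fnorm{R_k}^2,
\]
because $\tr(V_k^\dagger R_k) \ge \alpha_k 2^n$. The sum bound then follows as a geometric series: $\alpha_k \le \frac{\beta^k}{18\cdot 2^n}\fnorm{A}$, so $\sum_k \alpha_k \le \frac{1-\beta^m}{18\cdot 2^n(1-\beta)}\fnorm{A}$. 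This choice makes both claimed inequalities hold exactly, so I would adopt it as the final construction.

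The main obstacle is choosing the step size so that the residual contraction and the $\ell_1$ bound hold simultaneously. The step $\alpha_k = \fnorm{R_k}/(18\cdot 2^n)$ resolves this, using only the lower bound from \cref{lem:tr} together with the fact that the trace is real and non-negative.
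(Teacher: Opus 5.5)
Your final construction is correct and is exactly the paper's inductive proof. It applies \cref{lem:tr} to the residual $R_k$ and takes the step $\alpha_k = \fnorm{R_k}/(18\cdot 2^n)$, so that $\fnorm{R_{k+1}}^2 \le \fnorm{R_k}^2 - 2^n\alpha_k^2 = \beta^2\fnorm{R_k}^2$ and $\sum_k\alpha_k$ is bounded by a geometric series. The earlier detour through the projection step $\alpha_k = \tr(V_k^\dagger R_k)/2^n$ (which can indeed violate the stated $\ell_1$ bound) and your suggestion to re-check the statement are unnecessary and can be dropped.
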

\begin{proof}
    We prove this lemma by induction on $m$.
    First consider the base case $m=1$.
        By \cref{lem:tr}, there exist $V_0 = B_1^{(0)}H^{\otimes n}D^{(0)}H^{\otimes n}B_2^{(0)}$ for two $n$-qubit Boolean phase oracles $B_1^{(0)}, B_2^{(0)}$, and an $n$-qubit diagonal unitary $D^{(0)}$ such that
        \begin{equation*}
            \tr\rbra*{V_0^{\dagger}\cdot A} \geq \frac{1}{18}\fnorm{A}.
        \end{equation*}
        Choosing
        \begin{equation}\label{eq:alpha_0}
            \alpha_0 = \frac{1}{18\cdot 2^n}\fnorm{A}
        \end{equation}
        gives
        \begin{align*}
            \fnorm{A - \alpha_0 V_0}^2
            ={}& \fnorm{A}^2 + \alpha_0^2\fnorm{V_0}^2 - \alpha_0\rbra*{\tr\rbra[\big]{V_0^{\dagger}A}+\tr\rbra[\big]{A^{\dagger}V_0}} \\
            ={}& \fnorm{A}^2 + \alpha_0^2\cdot 2^n - 2\alpha_0\tr\rbra*{V_0^{\dagger}A} \\
            \leq{}& \fnorm{A}^2 + \alpha_0^2\cdot 2^n - 2\alpha_0 \cdot \frac{1}{18}\fnorm{A} \\
            ={}& \fnorm{A}^2 + \frac{1}{324\cdot 2^n}\fnorm{A}^2 - \frac{2}{324\cdot 2^n}\fnorm{A}^2 \\
            ={}& \beta^2\fnorm{A}^2,
        \end{align*}
        which implies $\fnorm{A - \alpha_0 V_0} \leq \beta\fnorm{A}$. With \cref{eq:alpha_0}, we complete the basic case.
  
    Now for the inductive step. Suppose $m = j+1$ with $j \geq 1$. By the inductive hypothesis, there exist $n$-qubit unitaries $V_0, V_1, \ldots, V_{j-1}$ and positive numbers $\alpha_0, \alpha_1, \ldots, \alpha_{j-1}$ such that
        \begin{equation}\label{eq:hypothesis_A}
            \fnorm{A - \sum_{k=0}^{j-1}\alpha_k V_k} \leq \beta^{j}\fnorm{A}
        \end{equation}
        and
        \begin{equation}\label{eq:hypothesis_alpha}
            \sum_{k=0}^{j-1} \alpha_k \leq \frac{1-\beta^j}{18\cdot 2^n(1-\beta)}\fnorm{A},
        \end{equation}
        where each $V_k = B_1^{(k)}H^{\otimes n}D^{(k)}H^{\otimes n}B_2^{(k)}$ for some $n$-qubit Boolean phase oracles $B_1^{(k)}, B_2^{(k)}$, and some $n$-qubit diagonal unitaries $D^{(k)}$.

        Let $A_j = A - \sum_{k=0}^{j-1}\alpha_k V_k$. By \cref{lem:tr}, there exist $V_j = B_1^{(j)}H^{\otimes n}D^{(j)}H^{\otimes n}B_2^{(j)}$ for two $n$-qubit Boolean phase oracles $B_1^{(j)}, B_2^{(j)}$, and an $n$-qubit diagonal unitary $D^{(j)}$ such that
        \begin{equation*}
            \tr\rbra*{V_j^{\dagger}\cdot A_j} \geq \frac{1}{18}\fnorm{A_j}.
        \end{equation*}
        Choose
        \begin{equation}\label{eq:alpha_j}
            \alpha_j = \frac{1}{18\cdot 2^n}\fnorm{A_j} \leq \frac{\beta^j}{18\cdot 2^n}\fnorm{A}.
        \end{equation}
        Then,
        \begin{align*}
            \fnorm{A - \sum_{k=0}^{j}\alpha_k V_k}^2
            ={}& \fnorm{A_j - \alpha_jV_j}^2 \\
            ={}& \fnorm{A_j}^2 + \alpha_j^2\fnorm{V_j}^2 -2\alpha_j\tr\rbra*{V_j^{\dagger}A_j}\\
            \leq{}& \fnorm{A_j}^2 + \alpha_j^2\cdot 2^n -2\alpha_j \cdot \frac{1}{18}\fnorm{A_j} \\
            ={}& \fnorm{A_j}^2 + \frac{1}{324 \cdot 2^n}\fnorm{A_j}^2 - \frac{2}{324 \cdot 2^n}\fnorm{A_j}^2 \\
            ={}& \beta^2\fnorm{A_j}^2 \\
            \leq{}& \beta^{2(j+1)}\fnorm{A}^2, \tag{by \cref{eq:hypothesis_A}}
        \end{align*}
        which implies
        \begin{equation*}
            \fnorm{A - \sum_{k=0}^j\alpha_k V_k} \leq \beta^{j+1}\fnorm{A}
        \end{equation*}
        as desired. Moreover, using \cref{eq:alpha_j,eq:hypothesis_alpha}, we have 
        \begin{align*}
            \sum_{k=0}^j \alpha_k ={}& \sum_{k=0}^{j-1} \alpha_k + \alpha_{j}
            \leq{} \frac{1-\beta^j}{18\cdot 2^n(1-\beta)}\fnorm{A}+\frac{\beta^{j}}{18\cdot 2^n}\fnorm{A}
            ={} \frac{1-\beta^{j+1}}{18\cdot 2^n(1-\beta)}\fnorm{A},
        \end{align*}
        which completes the proof. \qedhere
\end{proof}

\subsection{Synthesizing through robust block-Hamiltonian simulation}
In this subsection, we first review the robust block-Hamiltonian simulation via QSVT~\cite{GSLW19}, then demonstrate how our approximation method from the previous subsection fits into this Hamiltonian simulation framework.

\begin{theorem}
    [{Robust block-Hamiltonian simulation, adapted from~\cite[extended version, Corollary~62]{GSLW19}}]\label{thm:hamiltonian_simulation}
    Let $t\in \RR$, $\varepsilon\in \rbra{0,1}$, $n \in \NN_+$ and let $U$ be an $(n+a)$-qubit unitary oracle such that
    \begin{equation*}
        \Abs*{\mathsf{H} - \alpha(\bra{0^a}\otimes I)U(\ket{0^a}\otimes I)} \leq \frac{\varepsilon}{\abs{2t}}
    \end{equation*}
    for an $n$-qubit Hamiltonian $\mathsf{H}$, $\alpha \in \RR_+$ and $a \in \NN$.
    Then, we can implement an $(n+a+2)$-qubit Hamiltonian simulation unitary $V$ such that
    \begin{equation*}
        \Abs*{e^{-i\mathsf{H}t} - (\bra{0^{a+2}}\otimes I)V(\ket{0^{a+2}}\otimes I)} \leq \varepsilon
    \end{equation*}
    with $O\rbra*{\alpha \abs{t}+\log\rbra{1/\varepsilon}}$ uses of $U$ or its inverse, $O(1)$ uses of controlled-$U$ or its inverse, using $O\rbra*{a\rbra*{\alpha\abs{t}+\log\rbra*{1/\varepsilon}}}$ two-qubit gates and using $O\rbra{1}$ ancillas.
\end{theorem}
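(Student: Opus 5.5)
The plan is to follow the construction behind \cite[Corollary~62]{GSLW19} and split the error into two halves: one from the imperfect block-encoding, one from the polynomial approximation. Let $A \coloneqq (\bra{0^a}\otimes I)U(\ket{0^a}\otimes I)$, so that $\Abs{\alpha A}\le \Abs{\mathsf H}+\varepsilon/\abs{2t}$. Morally $\Abs{A}\le 1$, since it is a block of a unitary.

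\textbf{Step 1: symmetrize the block.} $A$ need not be Hermitian, so first build a Hermitian block-encoding. Add one ancilla qubit and use an LCU of controlled-$U$ and controlled-$U^\dagger$, with a Hadamard on that qubit. This produces a unitary $U'$ on $n+a+1$ qubits whose top-left block is $\tilde A = (A+A^\dagger)/2$. This is the only place where $O(1)$ controlled uses of $U$ or $U^\dagger$ are needed.

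Because $\mathsf H$ is Hermitian,
\[
\Abs{\mathsf H-\alpha\tilde A}\le \tfrac12\Abs{\mathsf H-\alpha A}+\tfrac12\Abs{\mathsf H^\dagger-\alpha A^\dagger}\le \frac{\varepsilon}{\abs{2t}} .
\]
Since both $\mathsf H$ and $\alpha\tilde A$ are Hermitian, the standard bound $\Abs{e^{-iXt}-e^{-iYt}}\le \abs t\Abs{X-Y}$ (via Duhamel's formula) gives
\[
\Abs{e^{-i\mathsf Ht}-e^{-i\alpha\tilde A t}}\le \varepsilon/2 .
\]
It therefore suffices to block-encode $e^{-i\alpha t\tilde A}$ to precision $\varepsilon/2$ from the exact Hermitian block-encoding $U'$.

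\textbf{Step 2: QSVT with Jacobi--Anger polynomials.} Use the Jacobi--Anger expansion to get an even real polynomial $P_c$ and an odd real polynomial $P_s$. Each has degree $d=O(\alpha\abs t+\log(1/\varepsilon))$, is bounded by $1$ on $[-1,1]$ (after rescaling by a constant such as $1/2$), and is $\varepsilon/8$-close to $\tfrac12\cos(\alpha t x)$ and $\tfrac12\sin(\alpha t x)$ respectively. Quantum singular value transformation (Hermitian/eigenvalue version) then gives block-encodings of $P_c(\tilde A)$ and $P_s(\tilde A)$. Each uses $d$ applications of $U'$ or $U'^\dagger$, and hence of $U$ or $U^\dagger$, together with $O(a)$ two-qubit gates per step for the reflections about $\ket{0^a}$. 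This accounts for the $O(a(\alpha\abs t+\log(1/\varepsilon)))$ gate count.

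\textbf{Step 3: combine and amplify.} Combine the two block-encodings as $P_c - iP_s$ by an LCU on one more ancilla. This yields a block-encoding of roughly $\tfrac12 e^{-i\alpha t\tilde A}$. Then apply a constant number of rounds of robust oblivious amplitude amplification, which is again a QSVT with a degree-$3$ polynomial, to boost the subnormalization $1/2$ to $1$ while only amplifying the error by a constant. Choosing the precision constants appropriately gives total error at most $\varepsilon/2$ in this step. Together with Step 1 this gives $\varepsilon$, with a total ancilla overhead of $2$ qubits beyond the $a$ already present.

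\textbf{Main obstacle.} I expect the bookkeeping in Step 3 to be the main difficulty. Making the combination $P_c-iP_s$ and the amplitude amplification keep the controlled-$U$ usage at $O(1)$, and only $O(1)$ extra ancillas, requires the more careful QSVT phase-sequence implementation of GSLW. In that implementation the controlled versions are absorbed into the single symmetrization step of Step 1, rather than controlling every query.
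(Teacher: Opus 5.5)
Your error budget follows the GSLW route the paper cites: Duhamel gives $\abs{t}\,\Abs{\mathsf H-\alpha\tilde A}\le\varepsilon/2$, and Jacobi--Anger, QSVT and robust oblivious amplitude amplification contribute the other $\varepsilon/2$. However, Step~1 breaks the resource counts the statement asserts.

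Each call to $U'$ (or $U'^\dagger$) contains the select $\ketbra{0}{0}\otimes U+\ketbra{1}{1}\otimes U^\dagger$, that is, a controlled-$U$ and a controlled-$U^\dagger$. This select cannot be built from uncontrolled calls. Replacing $U$ by $e^{i\theta}U$ changes it by a relative phase, whereas a circuit of uncontrolled calls only picks up a global phase.

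QSVT makes $d=\Theta(\alpha\abs{t}+\log(1/\varepsilon))$ such calls, and Step~3 triples that. So you use $\Theta(\alpha\abs{t}+\log(1/\varepsilon))$ controlled calls, not $O(1)$. Symmetrization is not a one-off step that can absorb the controls; it sits inside every query.

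In GSLW the three controlled calls come from a different mechanism. The QSVT circuits for the even $\cos$-polynomial and the odd $\sin$-polynomial make the same alternating sequence of bare $U,U^\dagger$ calls, except for one extra $U$ in the odd circuit. The LCU selecting between them therefore controls only the phase rotations and that single call. One round of oblivious amplification then uses the result three times, giving three controlled calls. That trick requires every query to be a bare $U$ or $U^\dagger$, which is exactly what Step~1 destroys.

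The ancilla count also fails. Your symmetrization qubit is a block ancilla on top of the two GSLW already need: the qubit for taking real parts of the QSVT polynomial, which your tally omits, and the $\cos/\sin$ LCU qubit. So your $V$ acts on $n+a+3$ qubits, not $n+a+2$.

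There is also a smaller slip. An equal-weight LCU of $P_c\approx\frac12\cos(\alpha t x)$ and $P_s\approx\frac12\sin(\alpha t x)$ block-encodes roughly $\frac14 e^{-i\alpha t\tilde A}$, not $\frac12 e^{-i\alpha t\tilde A}$. A degree-$3$ round does not amplify that to norm $1$. You should approximate $\cos$ and $\sin$ themselves, renormalized by $1+O(\varepsilon)$, so that the LCU lands at exactly $\frac12$.

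As it stands, your argument proves a weaker statement: $O(\alpha\abs{t}+\log(1/\varepsilon))$ controlled calls and an $(n+a+3)$-qubit $V$. Step~1 does address a real issue, since in this paper the block $\sum_k\alpha_kV_k/\alpha$ is not Hermitian. The weaker version would still suffice for \cref{thm:main}: \cref{lem:hamiltonian_t_count} gives controlled-$U$ at the same T-count as $U$, and \cref{lem:block_encoding_diamond} tolerates any number of block ancillas. To prove the theorem as stated, however, you must handle the non-Hermitian part of the block without paying a controlled call per query.
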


Consequently, the T-count and ancilla-count analysis of $e^{-i\mathsf{H}t}$ reduces to examining the T-count of oracle $U$ in \cref{thm:hamiltonian_simulation} under the approximation by \cref{lem:arbitrary_operator_approximation}.

\begin{lemma}\label{lem:hamiltonian_t_count}
    Let $\varepsilon\in \rbra{0,1}$, $n \in \NN_+$ and $\mathsf{H}$ be an $n$-qubit Hamiltonian.
    Then, there is a unitary $U$ that can be implemented by Clifford+T circuits using
    \begin{equation*}
        O\rbra*{2^{n}\log\rbra{\max\cbra*{\Abs{\mathsf{H}}_F,1}/\varepsilon}}
    \end{equation*}
    T gates and ancillas such that $\Abs{\mathsf{H} - \alpha(\bra{0^a}\otimes I)U(\ket{0^a}\otimes I)} \leq \varepsilon$ with $\alpha = O\rbra{\Abs{\mathsf{H}}_F}$ and $a = O(n+\log\log\rbra{\max\cbra*{\Abs{\mathsf{H}}_F,1}/\varepsilon})$.
    The same holds for the controlled version of $U$.
\end{lemma}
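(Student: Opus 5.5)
We will build $U$ from the decomposition of \cref{lem:arbitrary_operator_approximation} using LCU. First we handle the trivial case $\Abs{\mathsf{H}}_F \le \varepsilon$: there we take $U = I$ on $n$ qubits (no T gates), and $\alpha$ is a tiny constant times a bound we may set at $\Abs{\mathsf{H}}_F$. Otherwise, apply \cref{lem:arbitrary_operator_approximation} to $A = \mathsf{H}$ with
\[
m = \left\lceil \frac{\log\rbra{2\Abs{\mathsf{H}}_F/\varepsilon}}{\log(1/\beta)} \right\rceil = O\rbra*{2^n \log\rbra{\Abs{\mathsf{H}}_F/\varepsilon}},
\]
using $\log(1/\beta) \ge \frac{1}{648\cdot 2^n}$. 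This gives $\Abs{\mathsf{H} - \widetilde{\mathsf{H}}} \le \Abs{\mathsf{H} - \widetilde{\mathsf{H}}}_F \le \varepsilon/2$, where $\widetilde{\mathsf{H}} = \sum_k \alpha_k V_k$. It also gives $\Abs{\alpha}_1 \le \frac{\Abs{\mathsf{H}}_F}{18\cdot 2^n(1-\beta)}$. Since $1-\beta \ge \frac{1}{648\cdot 2^n}$, we get $\Abs{\alpha}_1 \le 36\Abs{\mathsf{H}}_F = O(\Abs{\mathsf{H}}_F)$. We pad $m$ up to a power of two $2^{a'}$ with zero coefficients. Then $a' = \log m = n + O(\log\log(\max\{\Abs{\mathsf{H}}_F,1\}/\varepsilon))$, which matches the claimed ancilla register size $a$.

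Next, the LCU circuit. Set $U = (G^\dagger\otimes I)\,V\,(G\otimes I)$ with $G\ket{0^{a'}} = \sum_k \sqrt{\alpha_k/\Abs{\alpha}_1}\ket{k}$. Here $V = B_1 (I\otimes H^{\otimes n}) D (I\otimes H^{\otimes n}) B_2$, with $B_1 = \bigoplus_k B_1^{(k)}$, $D = \bigoplus_k D^{(k)}$ and $B_2 = \bigoplus_k B_2^{(k)}$. The operators $B_1$ and $B_2$ are $(n+a')$-qubit Boolean phase oracles, so \cref{lem:optimal_boolean_phase} implements them exactly with $O(\sqrt{2^{n+a'}}) = O(\sqrt{2^n m}) = O(2^n\sqrt{\log(\cdot)})$ T gates. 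The operator $D$ is a general diagonal unitary, which \cref{lem:optimal_diagonal} implements to error $\delta$ with $O(\sqrt{2^{n+a'}\log(1/\delta)}+\log(1/\delta))$ T gates. $G$ is a state preparation, which \cref{lem:optimal_preparation} implements to error $\delta$ in $2$-norm with $O(\sqrt{2^{a'}\log(1/\delta)}+\log(1/\delta))$ T gates.

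Error and cost accounting. Let $\widehat U$ be the implemented circuit. Write $\widehat G$ for the implemented version of $G$; it is a unitary whose first column is $\delta$-close to the ideal one, and we must use $\widehat G^\dagger$ on the left. We then have
\[
\Abs{(\bra{0}\otimes I)\widehat U(\ket{0}\otimes I) - (\bra{0}\otimes I) U(\ket{0}\otimes I)} \le O(\delta),
\]
because the block only depends on $\widehat G\ket{0}$ and $V$. Hence
\[
\Abs{\mathsf{H} - \Abs{\alpha}_1(\bra{0}\otimes I)\widehat U(\ket{0}\otimes I)} \le \varepsilon/2 + O(\Abs{\alpha}_1\delta).
\]
Choosing $\delta = \Theta(\varepsilon/\max\{\Abs{\mathsf{H}}_F,1\})$ makes this at most $\varepsilon$ with $\alpha = \Abs{\alpha}_1$. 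The total T-count is then
\[
O\rbra*{\sqrt{2^n m \log(1/\delta)} + \log(1/\delta)} = O\rbra*{2^n\log\rbra{\max\cbra*{\Abs{\mathsf{H}}_F,1}/\varepsilon}},
\]
since $\sqrt{2^n\cdot 2^n L\cdot L} = 2^n L$ with $L = \log(\max\{\Abs{\mathsf{H}}_F,1\}/\varepsilon)$. The ancilla count is also dominated by these subroutines. For the controlled version, only $V$ needs control, because $G$ and $G^\dagger$ cancel when the control is off. Controlling the diagonal operators $B_1$, $D$ and $B_2$ just gives diagonal operators on one more qubit, and controlling $I\otimes H^{\otimes n}$ is unnecessary once the diagonals are controlled. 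So the same bounds apply.

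The main obstacle I expect is the precise error bookkeeping. This includes handling the $\alpha_k\ge 0$ amplitude preparation with the approximate $\widehat G$, converting the $2$-norm state error into an operator-norm block error, and making sure the $\log(1/\delta)$ factors combine into a single $\log$ rather than $\log^{1.5}$. I would verify the last point by the explicit computation above: with $m\approx 2^n L$, the dominant term $\sqrt{2^{n}m L}$ equals $2^n L$.
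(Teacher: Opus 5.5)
Your proposal is correct and follows essentially the same route as the paper: you apply \cref{lem:arbitrary_operator_approximation} with $m = \Theta(2^n\log(\max\{\Abs{\mathsf{H}}_F,1\}/\varepsilon))$ and build the LCU block-encoding $(\hat G^\dagger\otimes I)\hat F(\hat G\otimes I)$ with $F = B_1(I\otimes H^{\otimes n})D(I\otimes H^{\otimes n})B_2$. You then bound the error by $\varepsilon/2 + O(\Abs{\alpha}_1\delta)$ with $\Abs{\alpha}_1\le 36\Abs{\mathsf{H}}_F$, and control only the diagonal layers. The differences are cosmetic: your separate case $\Abs{\mathsf{H}}_F\le\varepsilon$ is unnecessary, since the paper puts $\max\{\Abs{\mathsf{H}}_F,1\}$ inside the logarithm defining $m$ so that $m\ge 1$ automatically, and your choice $\delta=\Theta(\varepsilon/\max\{\Abs{\mathsf{H}}_F,1\})$ is slightly cleaner than the paper's $\varepsilon/(108\Abs{\mathsf{H}}_F)$.
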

\begin{proof}
    \allowdisplaybreaks
    By applying \cref{lem:arbitrary_operator_approximation} to the Hamiltonian $\mathsf{H}$ and choosing
    \begin{equation}\label{eq:m}
        m = \lceil 648\cdot 2^n\ln(\max\cbra*{\Abs{\mathsf{H}}_F,1}/\varepsilon_1) \rceil,
    \end{equation}
    there exist $n$-qubit Boolean phase oracles $B_1^{(k)}, B_2^{(k)}$, $n$-qubit diagonal unitaries $D^{(k)}$, and non-negative numbers $\alpha_k$ for $k=0, 1, \ldots, m-1$ such that
    \begin{equation}\label{eq:h_decomposition}
        \begin{aligned}
            \fnorm{\mathsf{H} - \sum_{k=0}^{m-1}\alpha_k B_1^{(k)}H^{\otimes n}D^{(k)}H^{\otimes n}B_2^{(k)} }
            \leq{}& \rbra*{\sqrt{1-\frac{1}{324\cdot 2^n}}}^{m}\Abs{\mathsf{H}}_F \\
            ={}& e^{m \ln\rbra*{1-\frac{1}{324\cdot 2^n}}/2}\Abs{\mathsf{H}}_F \\
            \leq{}& e^{-\frac{m}{648\cdot 2^n}} \Abs{\mathsf{H}}_F \\
            \leq{}& e^{-\ln(\max\cbra*{\Abs{\mathsf{H}}_F,1}/\varepsilon_1)}\Abs{\mathsf{H}}_F \\
            \leq{}& \varepsilon_1
        \end{aligned}
    \end{equation}
    and, setting $\beta = \sqrt{1-\frac{1}{324\cdot 2^n}}$,
    \begin{equation}
        \label{eq:h_alpha}
        \begin{aligned}
            \alpha = \sum_{k=0}^{m-1} \alpha_k
            &\leq \frac{1-\beta^m}{18\cdot 2^n\rbra*{1-\beta}}\Abs{\mathsf{H}}_F 
            \leq \frac{1}{18\cdot 2^n\rbra*{1-\beta}}\Abs{\mathsf{H}}_F 
            = \frac{1+\beta}{18\cdot 2^n\rbra*{1-\beta^2}}\Abs{\mathsf{H}}_F 
            \leq 36\Abs{\mathsf{H}}_F.
        \end{aligned}
    \end{equation}
    Let $a = \lceil\log_2(m)\rceil$, and suppose $G$ and $F$ satisfy
    \begin{align*}
        G\ket{0^a} ={}& \sum_{k=0}^{m-1} \sqrt{\frac{\alpha_k}{{\alpha}}}\ket{k}, \\
        F ={}& \sum_{k=0}^{m-1} \ketbra{k}{k}\otimes \rbra*{B_1^{(k)}H^{\otimes n}D^{(k)}H^{\otimes n}B_2^{(k)}} + \sum_{k=m}^{2^a-1}\ketbra{k}{k}\otimes I,
    \end{align*}
    then, by \cref{eq:h_decomposition}, we have
    \begin{equation}\label{eq:h_block_encoding}
        \begin{aligned}
            & \Abs*{\mathsf{H} - {\alpha} \rbra{\bra{0^a}\otimes I}\rbra{G^{\dagger}\otimes I}F\rbra{G\otimes I}\rbra{\ket{0^a}\otimes I}} \\
            ={}& \Abs*{\mathsf{H} - {\alpha} \rbra{\bra{0^a}G^{\dagger}\otimes I}F\rbra{G\ket{0^a}\otimes I}}\\
            ={}& \Abs*{\mathsf{H} - \sum_{k=0}^{m-1}\alpha_k B_1^{(k)}H^{\otimes n}D^{(k)}H^{\otimes n}B_2^{(k)}} \\
            \leq{}& \Abs*{\mathsf{H} - \sum_{k=0}^{m-1}\alpha_k B_1^{(k)}H^{\otimes n}D^{(k)}H^{\otimes n}B_2^{(k)}}_F \\
            \leq{}& \varepsilon_1.
        \end{aligned}
    \end{equation}
    We then figure out how to use Clifford+T circuits to approximate $G$ and $F$:
    \begin{itemize}
        \item By \cref{lem:optimal_preparation}, there is a state preparation oracle $\hat{G}$ implemented by a Clifford+T circuit using
        \begin{equation}\label{eq:G_approximate_T}
            O\rbra*{\sqrt{2^a\log\rbra*{1/\varepsilon_2}}+\log\rbra*{1/\varepsilon_2}}
        \end{equation}
        T gates and ancillas such that 
        \begin{equation}\label{eq:G_approximate}
            \Abs{G\ket{0^a} - \hat{G}\ket{0^a}} \leq \varepsilon_2.
        \end{equation}
        \item For $F$, notice that
            \begin{align*}
            F ={}& \sum_{k=0}^{m-1} \ketbra{k}{k}\otimes B_1^{(k)}H^{\otimes n}D^{(k)}H^{\otimes n}B_2^{(k)}  + \sum_{k=m}^{2^a-1}\ketbra{k}{k}\otimes I\\
            ={}& \rbra*{\bigoplus_{k=0}^{m-1} B_1^{(k)}H^{\otimes n}D^{(k)}H^{\otimes n}B_2^{(k)}} \oplus \rbra*{\bigoplus_{k=m}^{2^a-1} I_{2^n} H^{\otimes n}I_{2^n} H^{\otimes n}I_{2^n}}\\
            ={}& \rbra*{\bigoplus_{k=0}^{2^a-1} B_1^{(k)}}\rbra*{\bigoplus_{k=0}^{2^a-1} H^{\otimes n}} \rbra*{\bigoplus_{k=0}^{2^a-1} D^{(k)}} \rbra*{\bigoplus_{k=0}^{2^a-1} H^{\otimes n}}\rbra*{\bigoplus_{k=0}^{2^a-1} B_2^{(k)}}.
        \end{align*}
        By setting $B_1^{(k)} =D^{(k)} = B_2^{(k)} = I_{2^n}$ for $m \leq k\leq 2^a-1$, we obtain
        \begin{align*}
            F ={}& \rbra*{\bigoplus_{k=0}^{2^a-1} B_1^{(k)}} \rbra*{I_{2^a}\otimes H^{\otimes n}} \rbra*{\bigoplus_{k=0}^{2^a-1} D^{(k)}} \rbra*{I_{2^a}\otimes H^{\otimes n}} \rbra*{\bigoplus_{k=0}^{2^a-1} B_2^{(k)}}.
        \end{align*}
        Since ${\bigoplus_{k=0}^{2^a-1} B_1^{(k)}}$ and ${\bigoplus_{k=0}^{m-1} B_2^{(k)}}$ are $(n+a)$-qubit Boolean phase oracles, by \cref{lem:optimal_boolean_phase}, they can be implemented exactly by Clifford+T circuits using $O\rbra*{\sqrt{2^{n+a}}}$ T gates and ancillas.
        Similarly, ${\bigoplus_{k=0}^{2^a-1} D^{(k)}}$ is an $(n+a)$-qubit diagonal unitary, by \cref{lem:optimal_diagonal}, it can be implemented up to error $\varepsilon$ by a Clifford+T circuit using $O\rbra*{\sqrt{2^{n+a}\log\rbra*{1/\varepsilon}}+\log\rbra*{1/\varepsilon}}$ T gates and ancillas.
        Summing them up yields a unitary $\hat{F}$ implemented by a Clifford+T circuit using
        \begin{equation}\label{eq:F_approximate_T}
            O\rbra*{\sqrt{2^{n+a}\log\rbra*{1/\varepsilon_3}}+\log\rbra*{1/\varepsilon_3}}
        \end{equation}
        T gates and ancillas such that
        \begin{equation}\label{eq:F_approximate}
            \Abs{\hat{F} - F} \leq \varepsilon_3.
        \end{equation}
    \end{itemize}
    Therefore, setting $U = \rbra{\hat{G}^{\dagger}\otimes I}\hat{F}\rbra{\hat{G}\otimes I}$, we have 
    \begin{align*}
        & \Abs*{\mathsf{H} - {\alpha} \rbra{\bra{0^a}\otimes I}U\rbra{\ket{0^a}\otimes I}} \\
        ={}& \Abs*{\mathsf{H} - {\alpha} \rbra{\bra{0^a}\hat{G}^{\dagger}\otimes I}\hat{F}\rbra{\hat{G}\ket{0^a}\otimes I}} \\
        \leq{}& \Abs*{\mathsf{H} - {\alpha} \rbra{\bra{0^a}{G}^{\dagger}\otimes I}{F}\rbra{{G}\ket{0^a}\otimes I}} \\
        & \quad + \big\lVert {\alpha} \big(\rbra{\bra{0^a}{G}^{\dagger}\otimes I}{F}\rbra{{G}\ket{0^a}\otimes I}  - \rbra{\bra{0^a}\hat{G}^{\dagger}\otimes I}\hat{F}\rbra{\hat{G}\ket{0^a}\otimes I}\big)\big\rVert \\
        \leq{}& \varepsilon_1 + \alpha\big\lVert  \rbra{\bra{0^a}{G}^{\dagger}\otimes I}{F}\rbra{{G}\ket{0^a}\otimes I}  - \rbra{\bra{0^a}\hat{G}^{\dagger}\otimes I}\hat{F}\rbra{\hat{G}\ket{0^a}\otimes I}\big\rVert  \tag{by \cref{eq:h_decomposition}} \\
        \leq{}& \varepsilon_1 + \alpha\Abs*{\rbra{\bra{0^a}{G}^{\dagger}\otimes I} - \rbra{\bra{0^a}\hat{G}^{\dagger}\otimes I}} + \alpha\Abs*{F - \hat{F}} + \alpha \Abs*{\rbra{{G}\ket{0^a}\otimes I} - \rbra{\hat{G}\ket{0^a}\otimes I}}\\
        \leq{}& \varepsilon_1 +  {\alpha}\rbra*{2\varepsilon_2+\varepsilon_3} \tag{by \cref{eq:G_approximate,eq:F_approximate}} \\
        \leq{}& \varepsilon_1 + 36\Abs{\mathsf{H}}_F\rbra*{2\varepsilon_2+\varepsilon_3}. \tag{by \cref{eq:h_alpha}}
    \end{align*}
    Then, choosing
    \begin{equation}\label{eq:eps_123}
        \varepsilon_1 = \frac{\varepsilon}{2} \quad \text{and}\quad \varepsilon_2 = \varepsilon_3 = \frac{\varepsilon}{{108\Abs{\mathsf{H}}_F}},
    \end{equation}
    we obtain
    $\Abs*{\mathsf{H} - {\alpha} \rbra{\bra{0^a}\otimes I}U\rbra{\ket{0^a}\otimes I}} \leq \varepsilon$ and $a = O(n+\log\log\rbra{\max\cbra*{\Abs{\mathsf{H}}_F,1}/\varepsilon})$.
    The T-count and ancilla-count of $U$ is upper bounded by \cref{eq:G_approximate_T,eq:F_approximate_T} as
    \begin{align*}
        & O\rbra*{\sqrt{2^a\log\rbra*{1/\varepsilon_2}}+\log\rbra*{1/\varepsilon_2}} + O\rbra*{\sqrt{2^{n+a}\log\rbra*{1/\varepsilon_3}}+\log\rbra*{1/\varepsilon_3}} \\
        ={}& O\rbra*{\sqrt{2^{n+a}\log\rbra{\max\cbra*{\Abs{\mathsf{H}}_F,1}/\varepsilon}}+\log\rbra{\Abs{\mathsf{H}}_F/\varepsilon}} \tag{by \cref{eq:eps_123}} \\
        ={}& O\rbra*{2^{n}\log\rbra{\max\cbra*{\Abs{\mathsf{H}}_F,1}/\varepsilon}}. \tag{by \cref{eq:m} and $a = \lceil\log(m)\rceil$}
    \end{align*}
    For the controlled-$U$,
    we only need to implement the controlled-$F$ (without the need of controlled-$G$), which is
    \begin{align*}
        & \ketbra{0}{0}\otimes I_{2^a} \otimes I + \ketbra{1}{1} \otimes \Big(\sum_{k=0}^{m-1} \ketbra{k}{k}\otimes \rbra*{B_1^{(k)}H^{\otimes n}D^{(k)}H^{\otimes n}B_2^{(k)}} + \sum_{k=m}^{2^a-1}\ketbra{k}{k}\otimes I\Big) \\
        ={}& \sum_{k=0}^{m-1} \ketbra{1}{1}\otimes\ketbra{k}{k}\otimes \rbra*{B_1^{(k)}H^{\otimes n}D^{(k)}H^{\otimes n}B_2^{(k)}} + \rbra*{I_{2^{a+1}} - \sum_{k=0}^{m-1} \ketbra{1}{1}\otimes\ketbra{k}{k}}\otimes I
    \end{align*}
    and shares the same structure with $F$, but with one additional qubit.
    Then, the controlled-$F$ can be implemented up to error $\varepsilon_3$ by a Clifford+T circuit using the same T-count and ancilla-count as \cref{eq:F_approximate_T}.
    Consequently, the controlled-$U$ has the same T-count and ancilla-count as $U$.
\end{proof}

By combining the robust Hamiltonian simulation (\cref{thm:hamiltonian_simulation}) with the low T-count oracle construction (\cref{lem:hamiltonian_t_count}), we arrive at our main result.

\begin{theorem}[Low T-count unitary synthesis through robust block-Hamiltonian simulation]
    \label{thm:main}
    For any $n$-qubit Hamiltonian ${\mathsf{H}}$, its simulation unitary $e^{-i{\mathsf{H}}}$ can be implemented up to error $\varepsilon < 1$ in diamond distance by a Clifford+T circuit using
    \begin{equation*}
        O\rbra*{\rbra*{2^{n}+\log\log\rbra*{k/\varepsilon}}\rbra*{k+n+\log(1/\varepsilon)}\rbra*{n+\log\rbra*{k/\varepsilon}}}
    \end{equation*}
    T gates and $O\rbra*{2^n\rbra*{n+\log\rbra*{k/\varepsilon}}}$ ancillas, where $k = \max\cbra*{\Abs{\mathsf{H}}_F,1}$.
\end{theorem}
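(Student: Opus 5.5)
The plan is to compose \cref{lem:hamiltonian_t_count} with \cref{thm:hamiltonian_simulation} at time $t=1$, and then convert the resulting block-encoding error into a diamond-norm error using \cref{lem:block_encoding_diamond}. \cref{lem:block_encoding_diamond} loses a factor $2^{n+2}$, so the block-encoding of $e^{-i\mathsf{H}}$ must be accurate to $\delta = \varepsilon/2^{n+2}$. Consequently $\log(1/\delta) = O(n+\log(1/\varepsilon))$, which is why the bound contains $n+\log(k/\varepsilon)$ terms.

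Concretely, set $k=\max\cbra{\Abs{\mathsf{H}}_F,1}$ and $\delta=\varepsilon/2^{n+2}$. Apply \cref{lem:hamiltonian_t_count} with precision $\varepsilon' = \delta/2$. This gives a unitary $U$ and its controlled version on $n+a$ qubits, where $a=O(n+\log\log(k/\delta))$, satisfying
\[
\Abs{\mathsf{H}-\alpha(\bra{0^a}\otimes I)U(\ket{0^a}\otimes I)}\le \delta/2
\]
with $\alpha=O(k)$. The T-count of $U$ is $O(2^n\log(k/\delta)) = O(2^n(n+\log(k/\varepsilon)))$, and it uses the same number of ancillas.

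Next, invoke \cref{thm:hamiltonian_simulation} with $t=1$ and precision $\delta$. Its hypothesis $\Abs{\cdot}\le \delta/(2t)$ holds by the choice of $\varepsilon'$. The resulting block-encoding $V$ of $e^{-i\mathsf{H}}$ then has error at most $\delta$. It uses $O(\alpha+\log(1/\delta)) = O(k+n+\log(1/\varepsilon))$ calls to $U$, $U^\dagger$, or their controlled versions. It also uses $O(a(k+n+\log(1/\varepsilon)))$ additional two-qubit gates, which I take to be Clifford (CNOT and reflections about $\ket{0^a}$ realized by multi-controlled phases). Finally, \cref{lem:block_encoding_diamond} turns the operator-norm block-encoding error $\delta$ into a diamond-norm error $2^{n+2}\delta=\varepsilon$.

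The step I expect to be the main obstacle is accounting for the gates outside the oracle. The QSVT circuit contains reflections $2\ketbra{0^{a+1}}{0^{a+1}}-I$ and controlled single-qubit phase rotations $e^{i\phi_j Z}$ with arbitrary angles. The reflections are multi-controlled Toffoli-type gates costing $O(a)$ T gates each. The phase rotations must each be synthesized by \cref{lem:optimal_single_qubit} to precision $\delta/\mathrm{poly}$, costing $O(\log(1/\delta))$ T gates each. Over $O(k+n+\log(1/\varepsilon))$ rounds these overheads add
\[
O\bigl((k+n+\log(1/\varepsilon))(a+n+\log(k/\varepsilon))\bigr),
\]
which is absorbed into the $\log\log(k/\varepsilon)$ and $2^n$ factors of the stated bound. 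The errors add linearly in the number of gates, so rescaling each precision by a polynomial factor keeps every logarithm at $O(n+\log(k/\varepsilon))$.

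Summing up, the per-query cost is $O\bigl((2^n+\log\log(k/\varepsilon))(n+\log(k/\varepsilon))\bigr)$. This covers the oracle, the $O(a)$-sized reflections with $a=O(n+\log\log(k/\varepsilon))$, and the rotations. Multiplying by the $O(k+n+\log(1/\varepsilon))$ queries gives the claimed T-count. Since ancillas can be reused across queries, the ancilla count is that of a single $U$ plus $O(1)$, namely $O(2^n(n+\log(k/\varepsilon)))$.
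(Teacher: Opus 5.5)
Your proposal is correct and is essentially the paper's proof. You apply Lemma~\ref{lem:hamiltonian_t_count} at precision $\varepsilon/2^{n+O(1)}$, run Theorem~\ref{thm:hamiltonian_simulation} with $t=1$, synthesize the remaining gates via Lemma~\ref{lem:optimal_single_qubit}, and absorb the $2^{n+2}$ loss of Lemma~\ref{lem:block_encoding_diamond} into the logarithms.

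The differences are only bookkeeping. The paper prices each of the $O(a(\alpha+\log(1/\varepsilon_1)))$ two-qubit gates as a black box at $O(\log(a\alpha/\varepsilon_1))$ T gates, which is where its $\log\log(k/\varepsilon)$ summand comes from. Your accounting opens up the QSVT circuit and is slightly finer: per round you have exact multi-controlled Toffolis costing $O(a)$ T gates (not Clifford, as you first wrote) plus one synthesized rotation. The paper also splits the error budget explicitly ($\varepsilon_1=\varepsilon/2^{n+3}$, final block-encoding error $2\varepsilon_1$); you should do the same rather than spending all of $\delta=\varepsilon/2^{n+2}$ before adding the gate-synthesis error.
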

\begin{proof}
    By \cref{lem:hamiltonian_t_count}, there is a unitary $U$ that can be implemented by Clifford+T circuits using
    \begin{equation}\label{eq:t_count_u}
        O\rbra*{2^{n}\log\rbra{k/\varepsilon_1}}
    \end{equation}
    T gates and ancillas such that
    \begin{equation*}
        \Abs{\mathsf{H} - \alpha(\bra{0^a}\otimes I)U(\ket{0^a}\otimes I)} \leq \varepsilon_1/2
    \end{equation*}
    with
    \begin{equation}\label{eq:a_and_alpha}
        \alpha = O\rbra{k} \quad \text{and} \quad a = O(n+\log\log\rbra{k/\varepsilon_1}),
    \end{equation}
    and the same holds for the controlled version of $U$.
    Setting $t = 1$ in \cref{thm:hamiltonian_simulation}, we can implement a Hamiltonian simulation unitary $V$ such that
    \begin{equation*}
        \Abs{e^{-i\mathsf{H}} - (\bra{0^{a+2}}\otimes I)V(\ket{0^{a+2}}\otimes I)} \leq \varepsilon_1
    \end{equation*}
    with $O\rbra*{\alpha+\log\rbra{1/\varepsilon_1}}$ uses of $U$ or its inverse, $O(1)$ uses of controlled-$U$ or its inverse, using $O\rbra*{a\rbra*{\alpha+\log\rbra*{1/\varepsilon_1}}}$ two-qubit gates and using $O\rbra{1}$ ancillas.
    By \cref{lem:optimal_single_qubit}, we can implement each two-qubit gate to error $O\rbra*{\varepsilon/\rbra*{a\rbra*{\alpha+\log\rbra*{1/\varepsilon_1}}}}$ in operator norm by Clifford+T circuits using
    \begin{equation}\label{eq:t_count_two-qubit}
        O\rbra*{\log\rbra*{a\rbra*{\alpha+\log\rbra*{1/\varepsilon_1}}/\varepsilon_1}} = O\rbra*{\log\rbra*{a\alpha/\varepsilon_1}}
    \end{equation}
    T gates, which yields a unitary $V'$ such that
    \begin{equation}\label{eq:v'_block_encoding}
        \Abs{e^{-i\mathsf{H}} - (\bra{0^{a+2}}\otimes I)V'(\ket{0^{a+2}}\otimes I)} \leq 2\varepsilon_1.
    \end{equation}
    Its T-count is upper bounded by
    \begin{align*}
        & O(\alpha+\log(1/\varepsilon_1)) \cdot \underbrace{O\rbra*{2^{n}\log\rbra*{k/\varepsilon_1}}}_{\mathclap{\text{T-count of $U$ (\cref{eq:t_count_u})}}}  + O\rbra*{a\rbra*{\alpha+\log\rbra*{1/\varepsilon_1}}} \cdot \underbrace{O\rbra*{\log\rbra*{a\alpha/\varepsilon_1}}}_{\mathclap{\text{T-count of two-qubit gates (\cref{eq:t_count_two-qubit})}}} \\
        ={}& O\rbra*{\rbra*{\alpha+\log(1/\varepsilon_1)}\cdot \rbra*{2^{n}\log\rbra*{k/\varepsilon_1} + a\log\rbra*{a\alpha/\varepsilon_1}}} \\
        ={}& O\big(\rbra*{k+\log(1/\varepsilon_1)} \rbra*{2^{n}\log\rbra*{k/\varepsilon_1} + a\log\rbra*{ak/\varepsilon_1}}\big) \tag{by \cref{eq:a_and_alpha}} \\
        ={}& O\big(\rbra*{k+\log(1/\varepsilon_1)} \rbra*{2^{n}\log\rbra*{k/\varepsilon_1} + a\rbra*{\log\rbra*{a} + \log\rbra*{k/\varepsilon_1}}}\big) \\
        ={}& O\big(\rbra*{k+\log(1/\varepsilon_1)} (2^{n}\log\rbra*{k/\varepsilon_1} + a\rbra*{\log\rbra*{n+\log\log\rbra*{k/\varepsilon_1}} + \log\rbra*{k/\varepsilon_1}})\big) \tag{by \cref{eq:a_and_alpha}} \\
        ={}& O\big(\rbra*{k+\log(1/\varepsilon_1)} \rbra*{2^{n}\log\rbra*{k/\varepsilon_1} + a\rbra*{\log\rbra*{n} + \log\rbra*{k/\varepsilon_1}}}\big) \\
        ={}& O\big(\rbra*{k+\log(1/\varepsilon_1)}(2^{n}\log\rbra*{k/\varepsilon_1} + \rbra*{n+\log\log\rbra*{k/\varepsilon_1}}\rbra*{\log\rbra*{n} + \log\rbra*{k/\varepsilon_1}})\big) \tag{by \cref{eq:a_and_alpha}} \\
        ={}& O\big(\rbra*{k+\log(1/\varepsilon_1)} \rbra*{2^{n}\log\rbra*{k/\varepsilon_1} + \log\log\rbra*{k/\varepsilon_1}\log\rbra*{k/\varepsilon_1}}\big) \\
        ={}& O\big(\rbra*{2^{n}+\log\log\rbra*{k/\varepsilon_1}} \rbra*{k+\log(1/\varepsilon_1)}\log\rbra*{k/\varepsilon_1}\big),
    \end{align*}
    and its ancilla-count, upper bounded by \cref{eq:t_count_u}, is $O\rbra*{2^n\log\rbra*{k/\varepsilon_1}}$.

    Choosing $\varepsilon_1 = \varepsilon/2^{n+3}$ and applying \cref{lem:block_encoding_diamond} to \cref{eq:v'_block_encoding}, we have that $V'$ approximates $e^{-i\mathsf{H}}$ up to error $\varepsilon$ in diamond distance with T-count
    \begin{equation*}
        O\rbra*{\rbra*{2^{n}+\log\log\rbra*{k/\varepsilon}}\rbra*{k+n+\log(1/\varepsilon)}\rbra*{n+\log\rbra*{k/\varepsilon}}}
    \end{equation*}
    and ancilla-count $O\rbra*{2^n\rbra*{n+\log\rbra*{k/\varepsilon}}}$.
\end{proof}

The above Hamiltonian simulation result naturally extends to arbitrary unitaries through their Frobenius norm distance to the Clifford group.

\begin{corollary}
    Any $n$-qubit unitary $U$ can be implemented up to error $\varepsilon < 1$ in diamond distance by a Clifford+T circuit using
    \begin{equation*}
        O\rbra*{\rbra*{2^{n}+\log\log\rbra*{k/\varepsilon}}\rbra*{k+n+\log(1/\varepsilon)}\rbra*{n+\log\rbra*{k/\varepsilon}}}
    \end{equation*}
    T gates and $O\rbra*{2^n\rbra*{n+\log\rbra*{k/\varepsilon}}}$ ancillas, where $k = \max\cbra*{d_F^{\mathcal{C}}(U),1}$.
\end{corollary}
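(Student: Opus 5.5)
Let $C \in \mathcal{C}$ and $\theta$ attain the minimum in $d_F^{\mathcal{C}}(U)$; the minimum exists because $\mathcal{C}$ is finite modulo phases and $\theta$ ranges over a compact circle. The plan is to write $U = e^{i\theta} C W$ with $W = e^{-i\theta} C^\dagger U$ unitary, synthesize $W$ with \cref{thm:main}, and then append the Clifford circuit for $C$, which costs no T gates. The global phase $e^{i\theta}$ does not matter in diamond distance, since $\mathcal{U}$ depends only on $U$ up to phase. Unitary invariance of the Frobenius norm gives
\[
\fnorm{W - I} = \fnorm{e^{-i\theta}C^\dagger(U - e^{i\theta}C)} = d_F^{\mathcal{C}}(U).
\]
So it suffices to find a Hamiltonian $\mathsf{H}$ with $e^{-i\mathsf{H}} = W$ and $\fnorm{\mathsf{H}} = O(\fnorm{W-I})$.

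\textbf{Choosing the Hamiltonian.} Diagonalize $W = \sum_j e^{i\phi_j}\ketbra{v_j}{v_j}$ with principal arguments $\phi_j \in (-\pi,\pi]$, and set $\mathsf{H} = -\sum_j \phi_j \ketbra{v_j}{v_j}$, which is Hermitian and satisfies $e^{-i\mathsf{H}} = W$. For $\abs{\phi}\le\pi$ we have $\abs{e^{i\phi}-1} = 2\abs{\sin(\phi/2)} \ge 2\abs{\phi}/\pi$, by concavity of $\sin$ on $[0,\pi/2]$. Hence
\[
\fnorm{\mathsf{H}}^2 = \sum_j \phi_j^2 \le \frac{\pi^2}{4}\sum_j\abs{e^{i\phi_j}-1}^2 = \frac{\pi^2}{4}\fnorm{W-I}^2,
\]
so $\fnorm{\mathsf{H}} \le \frac{\pi}{2} d_F^{\mathcal{C}}(U)$. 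This is the one point that needs care: it is why we take principal logarithms, so that no eigenvalue contributes a spurious $2\pi$.

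\textbf{Assembling.} Apply \cref{thm:main} to $\mathsf{H}$ with precision $\varepsilon$. This gives a Clifford+T circuit $V$ implementing $W$ to diamond error $\varepsilon$, with the stated T-count and ancilla count in terms of $k' = \max\{\fnorm{\mathsf{H}},1\} \le \frac{\pi}{2}\max\{d_F^{\mathcal{C}}(U),1\} = O(k)$. The $O(\cdot)$ bound therefore carries over with $k$ in place of $k'$.

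Compose by applying $C$ after $V$ on the system register. Since $\rho\mapsto C\rho C^\dagger$ is unitary, composing with it does not increase diamond distance. The resulting channel is therefore within $\varepsilon$ of $\rho\mapsto CW\rho W^\dagger C^\dagger = U\rho U^\dagger$. Because $C$ is a product of $H$, $S$ and CNOT gates, it adds no T gates and no ancillas, and the corollary follows.
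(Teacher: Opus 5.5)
Your proposal is correct and follows the paper's proof essentially verbatim. Both reduce to $W = e^{-i\theta}C^\dagger U$, take the principal logarithm, and use $\abs{x} \le \pi\abs{\sin(x/2)}$ on $[-\pi,\pi]$ to get $\fnorm{\mathsf{H}} \le \frac{\pi}{2} d_F^{\mathcal{C}}(U)$, then invoke \cref{thm:main}, with the Clifford $C$ and the global phase costing nothing. Two small points in your favour: your convention $\mathsf{H} = -\sum_j \phi_j \ketbra{v_j}{v_j}$ gives $e^{-i\mathsf{H}} = W$ exactly, whereas the paper's $\mathsf{H} = -i\ln W$ literally gives $e^{i\mathsf{H}} = W$ (a harmless sign slip), and your remarks on attainment of the minimum, diamond-norm invariance under composing with $C$, and $k' = O(k)$ make explicit what the paper leaves implicit.
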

\begin{proof}
    By the definition of $d_F^{\mathcal{C}}(U)$, there are an $n$-qubit Clifford unitary $C$ and a real number $\theta$ such that
    \begin{equation}\label{eq:du}
        d_F^{\mathcal{C}}(U) = \Abs{U - e^{i\theta}C}_F = \Abs{e^{-i\theta}C^{\dagger}U - I}_F.
    \end{equation}
    Let $\mathsf{H} = -i\ln\rbra*{e^{-i\theta}C^{\dagger}U}$ with eigenvalues in $(-\pi,\pi]$ and consider its spectral decomposition $\mathsf{H} = \sum_j \alpha_j \ketbra{j}{j}$.
    Then
    \begin{align*}
        \Abs{\mathsf{H}}_F^2 = \sum_j \abs{\alpha_j}^2 &\leq \sum_j \abs{\pi\sin\rbra{\alpha_j/2}}^2 \tag{$\abs{x}\leq\pi\abs{\sin\rbra*{x/2}}$ for any $x\in \sbra*{-\pi,\pi}$} \\
        &= \frac{\pi^2}{4}\sum_j\abs{e^{-i\alpha_j}-1}^2 \\
        &= \frac{\pi^2}{4} \Abs{e^{-i\mathsf{H}}-I}_F^2
        =  \frac{\pi^2}{4} \Abs{e^{-i\theta}C^{\dagger}U-I}_F^2
        = \frac{\pi^2}{4} \rbra*{d_F^{\mathcal{C}}(U)}^2, \tag{by \cref{eq:du}}
    \end{align*}
    which means $\Abs{\mathsf{H}}_F \leq \frac{\pi}{2} d_F^{\mathcal{C}}(U)$.
    Since $U = e^{i\theta}C e^{-i\mathsf{H}}$, applying \cref{thm:main} to $\mathsf{H}$ yields the stated T-count and ancilla-count complexity for $e^{-i\mathsf{H}}$, and hence for $U$.
\end{proof}

Finally, in the important special case of unitaries with bounded Frobenius norm distance to the Clifford group, we achieve the following near-optimal scaling.

\begin{corollary}
    \label{cor:optimal_t_count}
    Any $n$-qubit unitary $U$ with $d_F^{\mathcal{C}}(U) \leq O(1)$ can be implemented up to error $\varepsilon < 1$ in diamond distance by a Clifford+T circuit using
    \begin{equation*}
        O\rbra*{\rbra*{2^{n}+\log\log\rbra*{1/\varepsilon}}\rbra*{n+\log(1/\varepsilon)}^2}
    \end{equation*}
    T gates and $O\rbra*{2^n\rbra*{n+\log\rbra*{1/\varepsilon}}}$ ancillas.
\end{corollary}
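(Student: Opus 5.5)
This corollary follows by specializing the preceding corollary, which implements any $n$-qubit unitary $U$ to diamond-norm precision $\varepsilon$ using
\[
O\rbra*{\rbra*{2^{n}+\log\log\rbra*{k/\varepsilon}}\rbra*{k+n+\log(1/\varepsilon)}\rbra*{n+\log\rbra*{k/\varepsilon}}}
\]
T gates and $O\rbra*{2^n\rbra*{n+\log\rbra*{k/\varepsilon}}}$ ancillas, where $k = \max\cbra*{d_F^{\mathcal{C}}(U),1}$. The plan is to show that under $d_F^{\mathcal{C}}(U) \leq O(1)$ the parameter $k$ is a constant, and then absorb it into the $O$-notation factor by factor.

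First, fix an absolute constant $c \geq 1$ with $d_F^{\mathcal{C}}(U) \leq c$. Then $1 \leq k \leq c$, so $k = \Theta(1)$.

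Next, simplify each factor of the T-count using $\varepsilon < 1$, so that $\log(1/\varepsilon) > 0$:
\begin{itemize}
    \item Since $\log(k/\varepsilon) = \log k + \log(1/\varepsilon) \leq \log c + \log(1/\varepsilon)$, the third factor satisfies $n+\log(k/\varepsilon) = O\rbra*{n+\log(1/\varepsilon)}$. This uses $n \geq 1$ to absorb the additive constant $\log c$.
    \item Likewise $k+n+\log(1/\varepsilon) \leq c + n + \log(1/\varepsilon) = O\rbra*{n+\log(1/\varepsilon)}$, again using $n \geq 1$.
    \item For the first factor, $\log\log(k/\varepsilon) \leq \log\rbra*{\log c + \log(1/\varepsilon)}$. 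When $\log(1/\varepsilon)$ is at least a constant, this is $\log\log(1/\varepsilon) + O(1)$, and the additive $O(1)$ is absorbed by the term $2^n \geq 2$. When $\varepsilon$ is close to $1$, both $\log\log(k/\varepsilon)$ and $\log\log(1/\varepsilon)$ are bounded above by a constant; in that regime $\log\log(1/\varepsilon)$ may even be negative, but the sum is still dominated by $2^n$. In either case the factor is $O\rbra*{2^n + \log\log(1/\varepsilon)}$.
\end{itemize}
Multiplying the three factors gives the stated bound $O\rbra*{\rbra*{2^n+\log\log(1/\varepsilon)}\rbra*{n+\log(1/\varepsilon)}^2}$. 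The ancilla count $O\rbra*{2^n\rbra*{n+\log(k/\varepsilon)}}$ becomes $O\rbra*{2^n\rbra*{n+\log(1/\varepsilon)}}$ by the same bound $\log(k/\varepsilon) \leq \log c + \log(1/\varepsilon)$.

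No step is a real obstacle. The only place needing care is the $\log\log$ term when $\varepsilon$ is close to $1$, where $\log\log(1/\varepsilon)$ can be tiny or negative. This is handled by noting that $2^n \geq 2$ already dominates every constant additive term.
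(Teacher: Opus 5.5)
Your proposal is correct in substance and matches the paper. The paper gives no separate proof of this corollary: it obtains it exactly as you do, by specializing the preceding corollary to $k=\max\{d_F^{\mathcal{C}}(U),1\}=O(1)$ and absorbing the constant factor by factor.

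One step does fail as written. As $\varepsilon\to 1$, $\log\log(1/\varepsilon)\to-\infty$; for $n=1$ the factor $2^n+\log\log(1/\varepsilon)$ is already negative at $\varepsilon=0.9$. So $2^n$ does \emph{not} dominate this term, and your claim that ``in either case the factor is $O(2^n+\log\log(1/\varepsilon))$'' is literally false in that regime. The preceding theorem and corollary have the same defect when $k=1$, so the bound must be read with the usual convention that the $\log\log$ term is nonnegative, e.g.\ $\log\log\max\{1/\varepsilon,4\}$. With that reading your absorption argument goes through unchanged.
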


\section{Lower bound}

In this section,  we establish a ${\Omega}(2^n)$ T-count lower bound for implementing $n$-qubit unitaries $e^{-i\mathsf{H}}$ with $\Abs{\mathsf{H}}_F \leq 1$.
This represents a stronger result than the ${\Omega}(2^n)$ T-count lower bound for general $n$-qubit unitaries in~\cite{gosset2024quantumstatepreparationoptimal}.
Our goal is thus to prove the following theorem.

\begin{theorem}\label{thm:lower}
    For any integer $n \geq 1$ and any $0 < \varepsilon < 2^{-n}$, there is an $n$-qubit Hamiltonian $\mathsf{H}$ with $\Abs{\mathsf{H}}_F \leq O\rbra{1}$ such that any adaptive Clifford+T circuit that implements $e^{-i\mathsf{H}}$ to precision $\varepsilon$ in the diamond norm distance requires $\Omega\rbra{2^n\sqrt{\log\rbra{1/\varepsilon}} + \log\rbra{1/\varepsilon}}$ T gates.
\end{theorem}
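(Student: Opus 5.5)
Our plan is to run a counting (packing) argument, following the strategy of \cite{gosset2024quantumstatepreparationoptimal}, but restricted to unitaries $e^{-i\mathsf{H}}$ with $\Abs{\mathsf{H}}_F \leq 1$. The $\log(1/\varepsilon)$ term will be handled separately. Take $\mathsf{H} = \theta\, Z \otimes I^{\otimes(n-1)}/2^{n/2}$, a single-qubit rotation with $\Abs{\mathsf{H}}_F \leq 1$. Synthesizing it to diamond precision $\varepsilon$ with a generic small angle $\theta$ needs $\Omega(\log(1/\varepsilon))$ T gates by \cite[Lemma~5.9]{beverland2020lower}. Here we use that a diamond-norm $\varepsilon$ approximation of a unitary gives an $O(\varepsilon)$ approximation of the single-qubit channel after tracing out the others. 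Since a sum of two lower bounds is at most twice their maximum, it suffices to prove the $\Omega(2^n\sqrt{\log(1/\varepsilon)})$ bound for some $\mathsf{H}$ in the same class.

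For the main term we use the state-counting fact (\cref{lem:state_count}) quoted in the introduction. A Clifford circuit with Pauli postselections consuming $t$ copies of $\ket{T}$ on $N$ qubits prepares at most $2^{O(N^2+t^2)}$ distinct states. Given an adaptive Clifford+T circuit implementing $U$ to diamond precision $\varepsilon$, we apply it to half of a maximally entangled state on $2n$ qubits. This yields a state $\varepsilon$-close in trace distance to the Choi state $\ket{\Phi_U}$. As in \cite{gosset2024quantumstatepreparationoptimal}, adaptive measurements are converted into postselections, and ancillas are counted within $N = O(n+t)$, so the number of distinct outputs is $2^{O(n^2+t^2)}$. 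If a family $\mathcal{A}$ of Hamiltonians has pairwise Choi-state trace distances at least $3\varepsilon$, then distinct members need distinct circuits. This gives $2^{O(n^2+t^2)} \geq \abs{\mathcal{A}}$ for the worst member.

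The key step is constructing $\mathcal{A}$ inside $\{\mathsf{H}=\mathsf{H}^\dag,\ \Abs{\mathsf{H}}_F\le 1\}$, a real ball of dimension $4^n$. We take a maximal $\delta$-separated set in Frobenius norm inside the ball of radius $1/2$. By the volume argument (or the sphere-packing bounds cited), its size is at least $(1/(2\delta))^{4^n}$. Next we lower-bound the Choi distance by the Frobenius distance. For $\Abs{\mathsf{H}}_F,\Abs{\mathsf{H}'}_F\le 1/2$ we have $\abs{\langle\Phi_U|\Phi_{U'}\rangle} = \abs{\tr(U^\dag U')}/2^n$, and $U^\dag U' = e^{i\mathsf{H}}e^{-i\mathsf{H}'}$ is close to $e^{-i(\mathsf{H}'-\mathsf{H})}$ up to commutator corrections of second order. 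We then aim to show that $1-\abs{\tr(U^\dag U')}/2^n \gtrsim \min_\phi\Abs{\mathsf{H}-\mathsf{H}'-\phi I}_F^2/2^n$. To kill the global phase we restrict to traceless $\mathsf{H}$, which costs one dimension. The pure-state trace distance is then $\gtrsim \Abs{\mathsf{H}-\mathsf{H}'}_F/2^{n/2}$. Choosing $\delta = C\,2^{n/2}\varepsilon$ gives $\abs{\mathcal{A}} \geq (2^{n/2}\varepsilon)^{-\Omega(4^n)}$. Because $\varepsilon<2^{-n}$, the base satisfies $2^{n/2}\varepsilon \le \varepsilon^{1/2}$, so $\log\abs{\mathcal{A}} = \Omega(4^n\log(1/\varepsilon))$. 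Combined with $n^2 \ll 4^n\log(1/\varepsilon)$, this yields $t = \Omega(2^n\sqrt{\log(1/\varepsilon)})$.

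The main obstacle is the inequality relating the Choi overlap to $\Abs{\mathsf{H}-\mathsf{H}'}_F$ non-asymptotically. The second-order BCH corrections involve $[\mathsf{H},\mathsf{H}']$, which could in principle cancel the leading term. We will try to bound $\abs{\tr(e^{i\mathsf{H}}e^{-i\mathsf{H}'})}$ directly. One route is a second-order Taylor expansion with explicit remainder in operator norm, using $\Abs{\cdot}\le\Abs{\cdot}_F\le 1/2$ so that third-order terms are dominated. If that fails, we fall back on choosing the family inside a commuting subspace, namely diagonal Hamiltonians conjugated by a fixed set of orthogonal bases. We would instead rely on the sphere-packing construction of \cite{schildkraut2024lower} applied to the image of the exponential map, whose local bi-Lipschitz constant near $I$ is $\Theta(1)$.
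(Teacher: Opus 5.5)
Your architecture matches the paper's: Beverland et al.\ for the $\log(1/\varepsilon)$ term, the state-counting bound of \cref{lem:state_count} applied to the circuit run on half of a maximally entangled state, and a Frobenius packing of traceless Hamiltonians of norm $O(1)$ at scale $\Theta(2^{n/2}\varepsilon)$ in $4^n-1$ real dimensions. Your maximal-separated-set volume bound is a valid and simpler substitute for the sphere-packing lemma the paper invokes. Your explicit single-qubit embedding for the $\log(1/\varepsilon)$ term is more careful than the paper's one-line citation.

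The gap is the step you flag yourself. The proposal never shows that Frobenius separation of the Hamiltonians becomes Choi-state separation, and that step carries the whole bound. Neither fallback supplies it. A commuting (diagonal) family has only $2^n$ real parameters, so the count drops to $(1/\varepsilon)^{O(2^n)}$ and gives only $t=\Omega(2^{n/2}\sqrt{\log(1/\varepsilon)})$, which is the diagonal-unitary bound. A union over many bases just reintroduces the non-commutative comparison between members from different bases. Appealing to a $\Theta(1)$ local bi-Lipschitz constant of $\exp$ near $I$ restates the missing inequality rather than proving it.

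The paper closes this step by citing $\Abs{e^{-i\mathsf{H}_j}-e^{-i\mathsf{H}_k}}_F\ge\Abs{\mathsf{H}_j-\mathsf{H}_k}_F/3$ from \cite{gosset2024quantumstatepreparationoptimal} (their Eq.~(71)) and combining it with \cref{lem:t2f_distance}. It then removes the global phase by discretising $\theta$, so each Choi state is $\varepsilon$-close to only $O(1/\varepsilon)$ others, which costs only a harmless factor $\varepsilon$ in the count.

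Note that $2^{n+1}\bigl(1-\abs{\tr(U^\dag U')}/2^n\bigr)=\min_\theta\Abs{U-e^{i\theta}U'}_F^2$, so your target is exactly a phase-robust form of this inequality. Your worry about commutator cancellation is unfounded, and the inequality has a short proof.

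For Hermitian $A,B$ with $\Abs{A},\Abs{B}\le r<\pi/2$, set $\Delta=A-B$ and $C(s)=B+s\Delta$. Write Duhamel's formula in the eigenbasis $\{c_j\}$ of $C(s)$, with $\Delta_{jk}$ the entries of $\Delta$ in that basis and $\operatorname{sinc}x=\sin x/x$. It gives
\[
\mathrm{Re}\,\tr\Bigl(\Delta\, i\tfrac{d}{ds}e^{-iC(s)}\Bigr)=\sum_{j,k}\abs{\Delta_{jk}}^2\cos\frac{c_j+c_k}{2}\,\operatorname{sinc}\frac{c_j-c_k}{2}\;\ge\;\cos r\,\operatorname{sinc} r\,\Abs{\Delta}_F^2 .
\]
Every term is nonnegative, so nothing can cancel. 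Integrating over $s\in[0,1]$ and applying Cauchy--Schwarz gives $\Abs{e^{-iA}-e^{-iB}}_F\ge\cos r\,\operatorname{sinc} r\,\Abs{A-B}_F$.

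Your phase-free version then follows for traceless $\mathsf{H},\mathsf{H}'$ in a Frobenius ball of sufficiently small constant radius:
\begin{itemize}
\item If $\abs{\theta}$ is below a suitable constant threshold, absorb $\theta I$ into $\mathsf{H}'$ and use $\Abs{\Delta+\theta I}_F\ge\Abs{\Delta}_F$, which holds by tracelessness.
\item Otherwise, $\Abs{(1-e^{i\theta})I}_F=\abs{1-e^{i\theta}}2^{n/2}$ and the triangle inequality give an $\Omega(1)$ distance, which dominates $\Abs{\Delta}_F=O(1)$.
\end{itemize}
With this lemma inserted, the rest of your argument goes through.
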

\begin{proof}
    Following the same reasoning as \cite[Theorem~4.3 and Claim~4.11]{gosset2024quantumstatepreparationoptimal}, it suffices to prove \cref{prop:lower_postselection} below.
\end{proof}

An immediate corollary of \cref{thm:lower} thus provides a T-count lower bound for the synthesis of unitary operators $U$ when the distance $d_F^{\mathcal{C}}\rbra{U}$ of $U$ to the Clifford group is a constant. 

\begin{corollary}
    For any $n \geq 1$ and any $0 < \varepsilon < 2^{-n}$, there is an $n$-qubit unitary operator $U$ with $d_F^{\mathcal{C}}\rbra{U} \leq O\rbra{1}$ such that any adaptive Clifford+T circuit that implements $U$ to precision $\varepsilon$ in the diamond norm distance requires $\Omega\rbra{2^n\sqrt{\log\rbra{1/\varepsilon}} + \log\rbra{1/\varepsilon}}$ T gates.
\end{corollary}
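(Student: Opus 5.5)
The corollary should follow from \cref{thm:lower} by taking $U = e^{-i\mathsf{H}}$, where $\mathsf{H}$ is the Hamiltonian supplied by \cref{thm:lower}. That theorem gives $\Abs{\mathsf{H}}_F \leq c$ for an absolute constant $c$. It also says that every adaptive Clifford+T circuit implementing $e^{-i\mathsf{H}}$ to precision $\varepsilon$ in diamond norm needs $\Omega\rbra{2^n\sqrt{\log\rbra{1/\varepsilon}} + \log\rbra{1/\varepsilon}}$ T gates. The lower-bound part therefore transfers verbatim to $U$, and the only thing left to check is $d_F^{\mathcal{C}}(U) \leq O\rbra{1}$.

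For that bound I would use the identity as the Clifford element and $\theta = 0$ in the definition of $d_F^{\mathcal{C}}$. This gives
\[
d_F^{\mathcal{C}}(U) \leq \Abs{e^{-i\mathsf{H}} - I}_F.
\]
Next, diagonalize $\mathsf{H} = \sum_j \lambda_j \ketbra{j}{j}$ with real $\lambda_j$. Then
\[
\Abs{e^{-i\mathsf{H}} - I}_F^2 = \sum_j \abs{e^{-i\lambda_j} - 1}^2 = \sum_j 4\sin^2\rbra{\lambda_j/2} \leq \sum_j \lambda_j^2 = \Abs{\mathsf{H}}_F^2,
\]
where the inequality is $\abs{\sin x} \leq \abs{x}$. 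Hence $d_F^{\mathcal{C}}(U) \leq \Abs{\mathsf{H}}_F \leq c = O\rbra{1}$, which is the required bound. This is the reverse of the inequality used in the preceding corollary, and unlike that one it needs no restriction on the range of the eigenvalues.

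There is no real obstacle, since \cref{thm:lower} does all the substantive work. The one point to confirm is that the identity is counted as a Clifford element, which holds because $\mathcal{C}$ is the group generated by $H$, $S$ and CNOT. With that, the corollary follows for the same range $0 < \varepsilon < 2^{-n}$ as in \cref{thm:lower}.
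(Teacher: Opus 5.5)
Your proposal is correct and matches the paper's proof: take $U = e^{-i\mathsf{H}}$ with $\mathsf{H}$ from \cref{thm:lower}, and use $d_F^{\mathcal{C}}(U) \leq \Abs{\mathsf{H}}_F \leq O(1)$. The paper states that inequality without proof, and your choice $C = I$, $\theta = 0$ together with $\abs{e^{-i\lambda}-1} = 2\abs{\sin(\lambda/2)} \leq \abs{\lambda}$ supplies the justification it leaves implicit.
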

\begin{proof}
    Let $\mathsf{H}$ be the $n$-qubit Hamiltonian specified in \cref{thm:lower}.
    Let $U = e^{-i\mathsf{H}}$, then $d_F^{\mathcal{C}}\rbra{U} \leq \Abs{\mathsf{H}}_F \leq O\rbra{1}$. 
    Therefore, any adaptive Clifford+T circuit that implements $U = e^{-i\mathsf{H}}$ to precision $\varepsilon$ in the diamond norm distance requires $\Omega\rbra{2^n\sqrt{\log\rbra{1/\varepsilon}} + \log\rbra{1/\varepsilon}}$ T gates.
\end{proof}

We now set out to toward the proof of \cref{thm:lower}, through auxiliary definitions and lemmas.

\begin{definition}
    [Choi state]
    For an $n$-qubit unitary $U$, its Choi state $\ket{\iota_U}$ is defined by
    \begin{equation*}
        \ket{\iota_U} = \frac{1}{\sqrt{2^n}}\sum_{j\in \cbra{0,1}^n}\ket{j}\otimes \rbra*{U\ket{j}}.
    \end{equation*}
\end{definition}

\begin{lemma}
    [{Adapted from~\cite[Fact~4.14]{gosset2024quantumstatepreparationoptimal}}]\label{lem:t2f_distance}
    Let $U$ and $V$ be two $n$-qubit unitaries. The trace distance between their Choi states is
    \begin{align*}
        & \frac{1}{2}\Abs*{\ketbra{\iota_U}{\iota_U} - \ketbra{\iota_V}{\iota_V}}_1 
        \geq{} \frac{1}{\sqrt{2^{n+1}}}\cdot \min_{\theta\in [0,2\pi)}\Abs{U-e^{i\theta}\cdot V}_F.
    \end{align*}
\end{lemma}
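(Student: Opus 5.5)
The plan is to compute the Choi-state overlap explicitly and then use the standard relation between trace distance and fidelity for pure states. First, I will compute the inner product of the two Choi states:
\[
\langle \iota_U | \iota_V\rangle = \frac{1}{2^n}\sum_{j,k}\langle j|k\rangle\langle j|U^\dagger V|k\rangle = \frac{1}{2^n}\tr\rbra{U^\dagger V}.
\]
Both Choi states are unit vectors, so the trace distance between the corresponding pure states is
\[
\frac12\Abs*{\ketbra{\iota_U}{\iota_U}-\ketbra{\iota_V}{\iota_V}}_1=\sqrt{1-\abs{\langle\iota_U|\iota_V\rangle}^2}.
\]
This equality is the standard rank-two eigenvalue computation: the difference has eigenvalues $\pm\sqrt{1-\abs{\langle\iota_U|\iota_V\rangle}^2}$.

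Next, I will relate the optimal-phase Frobenius distance to the same overlap. For any $\theta$,
\[
\Abs{U-e^{i\theta}V}_F^2 = 2\cdot 2^n - 2\,\mathrm{Re}\rbra{e^{i\theta}\tr\rbra{U^\dagger V}}.
\]
Choosing $\theta$ so that $e^{i\theta}\tr(U^\dagger V)=\abs{\tr(U^\dagger V)}$ gives
\[
\min_\theta \Abs{U-e^{i\theta}V}_F^2 = 2^{n+1}\rbra{1-c}, \qquad c:=\abs{\langle \iota_U|\iota_V\rangle}\in[0,1].
\]

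Combining the two steps, the claim reduces to the scalar inequality $\sqrt{1-c^2}\ge\sqrt{1-c}$. This holds because $1-c^2=(1-c)(1+c)\ge 1-c$ for $c\in[0,1]$. Taking square roots and substituting $\sqrt{1-c}=\frac{1}{\sqrt{2^{n+1}}}\min_\theta\Abs{U-e^{i\theta}V}_F$ yields exactly the stated bound.

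No step should be a real obstacle. The only points requiring care are the normalization factors $2^n$, which I will track in both the overlap and the Frobenius expansion, and the justification of the pure-state trace distance formula, which I will cite as standard or derive from the two-dimensional span of $\ket{\iota_U}$ and $\ket{\iota_V}$.
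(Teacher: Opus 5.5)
Your proof is correct and complete. The overlap $\langle\iota_U|\iota_V\rangle = 2^{-n}\tr(U^\dagger V)$, the pure-state formula $\sqrt{1-c^2}$, the phase-optimized expansion $\min_\theta\Abs{U-e^{i\theta}V}_F^2 = 2^{n+1}(1-c)$, and the inequality $1-c^2\ge 1-c$ on $[0,1]$ are all right. The paper gives no proof of its own beyond citing Fact~4.14 of Gosset et al., and your argument is essentially the standard computation behind that fact: it compares the pure-state trace distance with the phase-optimized Euclidean distance of the Choi vectors, which equals $2^{-n/2}\min_\theta\Abs{U-e^{i\theta}V}_F$.
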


\begin{proposition}\label{prop:lower_postselection}
    For any integer $n \geq 1$ and sufficiently small $\varepsilon > 0$, there is an $n$-qubit Hamiltonian ${H}$ with $\Abs{{H}}_F \leq 1$ such that the following holds.
    Assume $\mathcal{C}$ is a Clifford circuit with $m$ Pauli postselections and $a$ ancillas, $\mathcal{C}\rbra*{\ket{0^{2n}}\ket{T}^{\otimes t}\ket{0^a}}=\ket{\phi}\ket{0^{t+a}}$, where $\ket{T} = \frac{1}{\sqrt{2}}\rbra*{\ket{0}+e^{i\pi/4}\ket{1}}$ is the magic state, and $\frac{1}{2}\Abs*{\ketbra{\phi}{\phi}-\ketbra{\iota_{e^{-iH}}}{\iota_{e^{-iH}}}}_1 \leq \varepsilon$. Then, $t = \Omega\rbra*{2^n\sqrt{\log\rbra*{1/\varepsilon}-n} + \log\rbra{1/\varepsilon}}$. 
\end{proposition}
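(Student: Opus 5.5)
The plan is a counting argument. I combine the state-counting fact of \cite{gosset2024quantumstatepreparationoptimal} (\cref{lem:state_count}) with a packing of Hamiltonians of small Frobenius norm whose Choi states are pairwise far apart. The counting fact says that Clifford circuits with Pauli postselections, acting on $2n$ output qubits with $t$ magic states, can prepare at most $2^{O(n^2+t^2)}$ distinct states $\ket{\phi}$. If a set $A$ of Hamiltonians has Choi states pairwise at trace distance $>2\varepsilon$, then each such $\ket{\phi}$ is $\varepsilon$-close to at most one of them. So if every $H\in A$ admitted a circuit with $t$ magic states, we would need $2^{O(n^2+t^2)}\ge |A|$. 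The proposition is then obtained by contraposition: some $H\in A$ needs more magic states.

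The first step is the packing. I restrict to traceless Hermitian $H$ with $\Abs{H}_F\le\delta$ for a small absolute constant $\delta$ (say $\delta=1/10$). This is a Euclidean ball of radius $\delta$ in a real space of dimension $D=4^n-1$. A maximal $r$-separated subset of it has at least $(\delta/r)^{D}$ points, by the standard volume argument (or \cite{schildkraut2024lower}).

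The second step, which I expect to be the main technical obstacle, is the rigidity estimate
\[
\min_{\theta}\Abs{e^{-iH}-e^{i\theta}e^{-iH'}}_F\ \ge\ c\,\Abs{H-H'}_F
\]
for all such $H,H'$. I would prove it in three parts.
\begin{itemize}
\item By Duhamel's formula, $e^{-iH}-e^{-iH'}=-i(H-H')+R$ with $\Abs{R}_F\le O(\delta)\Abs{H-H'}_F$.
\item For the phase, write $e^{-iH}-e^{i\theta}e^{-iH'}=(1-e^{i\theta})e^{-iH'}+(e^{-iH}-e^{-iH'})$.
\item The component along $I$ carries $(1-e^{i\theta})\sqrt{2^n}$ up to $O(\delta)$ corrections, while the main term $H-H'$ is traceless and hence orthogonal to $I$. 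So the phase cannot cancel the leading term, and a Pythagorean bound gives the claim for $\delta$ small.
\end{itemize}
Combined with \cref{lem:t2f_distance}, I take $r=\Theta(2^{n/2}\varepsilon)$ to guarantee Choi trace distance $>2\varepsilon$. This yields $|A|\ge (c'/(2^{n/2}\varepsilon))^{4^n-1}$.

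The third step is the calculation. We get $n^2+t^2\ge\Omega(4^n(\log(1/\varepsilon)-n/2))$. For sufficiently small $\varepsilon$ the $n^2$ term is absorbed, so $t=\Omega(2^n\sqrt{\log(1/\varepsilon)-n})$. In fact the count shows that all but a vanishing fraction of $A$ (fraction at most $2^{-\Omega(4^n\log(1/\varepsilon))}$ once $t$ is a small constant times that bound) require this many magic states.

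The remaining additive $\Omega(\log(1/\varepsilon))$ term matters only when $\log(1/\varepsilon)\gtrsim 4^n$. For it, I would adapt the single-qubit argument of \cite[Lemma~5.9]{beverland2020lower}: unitaries generated with $t$ T gates have matrix entries in $\mathbb{Z}[1/\sqrt2,i]$ with bounded denominator exponent. This implies that only a set of measure $\to0$ of the traceless ball is $\varepsilon$-approximable with $t=o(\log(1/\varepsilon))$. Intersecting this with the large "hard" subset from the counting step yields a single $H$, rescaled so that $\Abs{H}_F\le 1$, satisfying both bounds simultaneously. Making the measure-theoretic version of the Beverland bound rigorous in the adaptive/postselected setting is the second delicate point. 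If that fails, I would fall back on the reduction in \cite[Claim~4.11]{gosset2024quantumstatepreparationoptimal}, which already packages this term.
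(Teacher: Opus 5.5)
Your argument is correct. Its skeleton is the paper's: a large family of Hamiltonians with small $\Abs{H}_F$ whose Choi states are pairwise more than $2\varepsilon$ apart, compared against the $2^{O(n^2+t^2)}$ bound of \cref{lem:state_count}. Where you differ is in how you build that family.

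\textbf{The paper's construction.} It packs a Frobenius ball of radius $1/2$ using the sphere-packing theorem of \cite{schildkraut2024lower}. It passes from $\Abs{H-H'}_F$ to $\Abs{e^{-iH}-e^{-iH'}}_F$ via \cite[Equation~(71)]{gosset2024quantumstatepreparationoptimal}. Only then does it deal with the global phase in \cref{lem:t2f_distance}, using the phase-discretisation trick of \cite[Fact~4.7]{gosset2024quantumstatepreparationoptimal}, which thins the family by a factor $O(1/\varepsilon)$.

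\textbf{Your construction.} You use the elementary volume bound for a maximal separated set. You then prove the phase-robust estimate $\min_\theta\Abs{e^{-iH}-e^{i\theta}e^{-iH'}}_F\ge c\Abs{H-H'}_F$ directly from tracelessness. This estimate holds; split on whether $\abs{1-e^{i\theta}}\le\Abs{H-H'}_F$.
\begin{itemize}
\item If so, the part orthogonal to $I$ is at least $(1-2\delta)\Abs{H-H'}_F$. Here you must include the $I$-orthogonal part of $(1-e^{i\theta})e^{-iH'}$, of norm at most $\abs{1-e^{i\theta}}\delta$, which your sketch does not mention.
\item Otherwise, the $I$-component is at least $(\sqrt{2^n}-\delta-\delta^2)\Abs{H-H'}_F$.
\end{itemize}

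\textbf{Comparison.} Your route is self-contained and loses no $1/\varepsilon$ factor. It also makes clear that neither the sharp sphere-packing constant nor the discretisation matters, since only the exponent $\Theta(4^n\log(2^{-n/2}/\varepsilon))$ of the family size enters. The paper's route is shorter because it cites these ingredients.

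\textbf{The additive $\Omega(\log(1/\varepsilon))$ term.} Here you are overcomplicating things. $H$ is chosen after $n$ and $\varepsilon$ are fixed, so no single $H$ has to be hard for both reasons at once. Use the counting witness when $2^n\sqrt{\log(1/\varepsilon)-n}$ is the larger term. Otherwise use the single-qubit witness of \cite[Lemma~5.9]{beverland2020lower}, scaled so that $\Abs{H}_F\le 1$. Then conclude with $\max\{x,y\}\ge(x+y)/2$. That is all the paper does (it simply cites that lemma). It also spares you the measure-theoretic strengthening and intersection argument you flagged as delicate.
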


We will use the following sphere packing result in arbitrary norms.
Here, the volume is the standard Euclidean one.

\begin{lemma}[Adapted from~\cite{schildkraut2024lower}]\label{lem:sphere_packing}
    For sufficiently large integer $d$, an arbitrary norm $\norm[\mathit{arb}]{\cdot}$ over $\RR^d$, and any compact $\Omega \subset \RR^d$, we can find a finite set $A \subset \Omega$ such that
    \begin{equation*}
        \abs{A} > \rbra*{1-o(1)}\frac{d\ln d}{2^{d+1}}\Vol(\Omega)
    \end{equation*}
    and the balls $B(x,r_d)$ and $B(y,r_d)$ are disjoint for distinct $x,y \in A$, where
    \[B(x,r) = \{z\in \RR^d \mid \norm[\mathit{arb}]{x-z} \leq r\}\]
    and $r_d$ is the radius such that $\Vol\rbra*{B(0,r_d)} = 1$. 
\end{lemma}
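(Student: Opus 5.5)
The plan is to derive the statement from the packing-density theorem of \cite{schildkraut2024lower}, together with an averaging-over-translations argument that turns a density bound into a count of centers inside $\Omega$.

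First, reduce to a density statement. Since $B(0,r_d)$ is a centrally symmetric convex body of volume $1$, two balls $B(x,r_d)$ and $B(y,r_d)$ are disjoint exactly when $\norm[\mathit{arb}]{x-y} > 2r_d$. Hence a set of centers $P$ with pairwise $\norm[\mathit{arb}]{\cdot}$-distances exceeding $2r_d$ is the same thing as a translative packing of the body $K = B(0,r_d)$. The density of such a packing equals the number of centers per unit volume, because $\Vol(K)=1$.

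Second, invoke the density bound. The result of \cite{schildkraut2024lower} gives, for every centrally symmetric convex body $K\subset\RR^d$ and sufficiently large $d$, a packing of translates of $K$ with density at least $(1-o(1))\,\frac{d\ln d}{2^{d+1}}$. I will take this packing in periodic form: $P = \Lambda + \{p_1,\dots,p_s\}$ for a lattice $\Lambda$, so that $s/\det\Lambda \geq (1-o(1))\frac{d\ln d}{2^{d+1}}$. If the cited construction is only a lattice packing or a limit of periodic ones, I will truncate and periodize it, absorbing the loss into the $o(1)$ term. I will also shrink slightly (for instance by a factor $1-1/d^2$) so that the distances are strictly greater than $2r_d$; this costs only a $(1-o(1))$ factor in density.

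Third, count centers inside $\Omega$ by averaging. Let $u$ be uniform on a fundamental domain $F$ of $\Lambda$. By Fubini,
\[
\EE_u\,\abs{\Omega\cap(P+u)} = \sum_{i=1}^{s}\frac{1}{\det\Lambda}\int_F \sum_{\lambda\in\Lambda}\mathbf 1_\Omega(\lambda+p_i+u)\,du = \frac{s}{\det\Lambda}\Vol(\Omega).
\]
Since $\Omega$ is compact, this count is finite and measurable in $u$. Some translate therefore achieves at least the mean, and I set $A=\Omega\cap(P+u)$. Translation preserves the pairwise distances, so the balls around points of $A$ remain disjoint.

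Finally, handle the strict inequality. If $\Vol(\Omega)>0$, I will use the slack in $1-o(1)$ by running the argument with a slightly better $o(1)$ term, which gives strictness. If $\Vol(\Omega)=0$, the right-hand side is $0$, and any single point of $\Omega$ (assuming $\Omega\neq\emptyset$) serves as $A$. The main obstacle is matching the exact form of the density theorem in \cite{schildkraut2024lower}: whether it is stated for arbitrary symmetric convex bodies, and whether the packing it provides is periodic. The averaging step needs a periodic (or at least uniformly dense) packing, so periodization is where care is needed.
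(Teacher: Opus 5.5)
Your argument is correct, but it reaches the statement by a different route from the paper. The paper gives no argument of its own beyond pointing to the proof of Theorem~1.1 in \cite{schildkraut2024lower}. There, as in \cite{CJMS23}, the packing is built directly inside a bounded region: one extracts a large independent set from a random geometric graph on points of that region, and the density bound is derived afterwards from this finite count. The lemma is that intermediate step, read off for a general compact $\Omega$.

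You go the other way. You treat only the headline density bound as a black box and convert density back into a count inside $\Omega$ in three steps:
\begin{itemize}
\item periodize the packing;
\item dilate the centers by $(1-1/d^2)^{-1}$ to get strict separation, at a cost $(1-1/d^2)^{d}\geq 1-1/d$;
\item average over translates, where Fubini gives expected count exactly $\frac{s}{\det\Lambda}\Vol(\Omega)$.
\end{itemize}

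Your route is modular. It needs only the statement of the cited theorem plus the standard fact that packing density is approached by periodic packings. The averaging identity also holds verbatim for every compact $\Omega$, however small or irregular, with no boundary or concentration effects to control.

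The paper's route avoids the round trip through density and periodization. However, it leans on the internals of the cited proof rather than on a stated theorem, so the reader must check that the construction there applies to an arbitrary compact $\Omega$.

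Both routes take from the cited result that the $o(1)$ depends only on $d$ and not on the norm. Your observation that $\Omega\neq\emptyset$ must be assumed (otherwise $|A|>0$ is impossible) is a correct, if trivial, caveat that the statement omits.
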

\begin{proof}
    See the proof of Theorem~1.1 in~\cite{schildkraut2024lower}.
\end{proof}

\begin{lemma}\label{lem:hamiltonian_count}
    For any integer $n \geq 1$ and sufficiently small $\varepsilon > 0$, there are $N = \rbra*{\sqrt{2^n}\varepsilon}^{-\Omega(4^n)}$ $n$-qubit Hermitian operators $\mathsf{H}_1, \ldots, \mathsf{H}_N$ with each $\Abs{\mathsf{H}_j}_F \leq 1$  such that the pairwise trace distance between Choi states $\ket{\iota_{e^{-i{\mathsf{H}}_1}}}, \ldots,\ket{\iota_{e^{-i{\mathsf{H}}_N}}}$ is at least $\varepsilon$.
\end{lemma}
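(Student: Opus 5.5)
We work in the real vector space $\Herm_n \cong \RR^d$ of $n$-qubit Hermitian operators, with $d = 4^n$, identified with $\RR^d$ via an orthonormal basis for the Hilbert--Schmidt inner product (e.g., normalized Paulis). Under this identification the Frobenius norm becomes the Euclidean norm. We take $\Omega = \{\mathsf{H} \in \Herm_n \mid \Abs{\mathsf{H}}_F \leq 1\}$, the unit Euclidean ball, and the packing norm to be the Frobenius norm itself. The plan is to apply \cref{lem:sphere_packing} to a suitably \emph{rescaled} copy of $\Omega$, so that the packing radius $r_d$ becomes a separation of order $\delta := c\sqrt{2^n}\,\varepsilon$. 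Then we convert Frobenius separation of Hamiltonians into trace-distance separation of Choi states via \cref{lem:t2f_distance}.

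\textbf{Step 1 (packing).} We apply \cref{lem:sphere_packing} to $\Omega' = s\Omega$ with $s = r_d/\delta$. The result is a set $A' \subset \Omega'$ whose points are pairwise at Frobenius distance $> 2r_d$. Scaling back, $A = A'/s \subset \Omega$ has pairwise distances $> 2\delta$, and
\[
\abs{A} \gtrsim \frac{d\ln d}{2^{d+1}} s^d \Vol(\Omega) = \frac{d \ln d}{2^{d+1}}\rbra*{\frac{1}{2\delta}}^{d}\cdot 2^{d}\,\frac{\Vol(\Omega)}{\Vol(B(0,r_d))}.
\]
Since $\Vol(\Omega)/\Vol(B(0,r_d)) = r_d^{-d}$ and $r_d = \Theta(\sqrt d)$ for the Euclidean ball, this gives $\abs{A} \geq (C\sqrt d\,\delta)^{-d}$ up to lower-order factors. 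Because $\sqrt d = 2^n$, we have $\sqrt d\,\delta \sim 2^{3n/2}\varepsilon$. This is too lossy, so we use a more careful choice: take $\Omega$ to be the Frobenius ball of radius $1$ only in a direction-restricted sense, or simply use the crude volumetric bound and accept it. In fact the simpler route is the standard Gilbert--Varshamov/volume argument. A maximal $2\delta$-separated subset of the unit ball has size at least $(1/(2\delta))^{d}$, because balls of radius $2\delta$ around its points cover $\Omega$. This already gives $\abs{A} \geq (2c\sqrt{2^n}\varepsilon)^{-4^n} = (\sqrt{2^n}\varepsilon)^{-\Omega(4^n)}$ for $\varepsilon$ small. \cref{lem:sphere_packing} yields the same conclusion with better constants, and either suffices.

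\textbf{Step 2 (from Hamiltonians to unitaries modulo phase).} This is the main obstacle. For distinct $\mathsf{H}, \mathsf{H}' \in A$ we need $\min_\theta \Abs{e^{-i\mathsf{H}} - e^{i\theta}e^{-i\mathsf{H}'}}_F \gtrsim \Abs{\mathsf{H}-\mathsf{H}'}_F$. This requires handling both the nonlinearity of the exponential and the global phase. To kill the phase, we restrict $\Omega$ to traceless Hermitian operators, so that $d = 4^n - 1$, which does not change the count. Then for $\theta$ near $0$ the phase cannot be absorbed. For large $\abs{\theta}$, note that the eigenvalues satisfy $\abs{\lambda} \leq \Abs{\mathsf{H}}_F \leq 1$, so $e^{-i\mathsf{H}}$ is close to $I$ and the difference is large anyway. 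If necessary we shrink to the ball of radius $1/4$ to make this rigorous. For the nonlinearity, write $e^{-i\mathsf{H}} - e^{-i\mathsf{H}'} = -i(\mathsf{H}-\mathsf{H}') + R$. Using the Duhamel formula, the leading term is $\int_0^1 e^{-is\mathsf{H}}(\mathsf{H}-\mathsf{H}')e^{-i(1-s)\mathsf{H}'}ds$. Its Frobenius norm is at least $\Abs{\mathsf{H}-\mathsf{H}'}_F(1 - O(\Abs{\mathsf{H}}+\Abs{\mathsf{H}'}))$, since it differs from $\mathsf{H}-\mathsf{H}'$ by conjugations within $O(1/4)$ of the identity in operator norm. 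Combining the phase term in the same way — expanding $e^{i\theta}$ and using tracelessness, which makes $\theta I$ orthogonal to $\mathsf{H}-\mathsf{H}'$ — gives the bound $\geq c'\Abs{\mathsf{H}-\mathsf{H}'}_F$.

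\textbf{Step 3 (conclude).} By \cref{lem:t2f_distance}, the trace distance between the Choi states is at least $2^{-(n+1)/2}\cdot c'\cdot 2\delta$. Choosing $c$ so that this equals $\varepsilon$ finishes the proof with $N = \abs{A} = (\sqrt{2^n}\varepsilon)^{-\Omega(4^n)}$.
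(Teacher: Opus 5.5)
Your proof is correct but takes a different route from the paper's in two places.

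\textbf{How the paper argues.} The paper packs Frobenius-separated Hamiltonians using the sphere-packing bound of \cref{lem:sphere_packing}. It passes to unitaries via the cited inequality $\Abs{e^{-i\mathsf{H}_j}-e^{-i\mathsf{H}_k}}_F\geq\Abs{\mathsf{H}_j-\mathsf{H}_k}_F/3$ from \cite{gosset2024quantumstatepreparationoptimal}. It handles the global phase in \cref{lem:t2f_distance} only indirectly: a discretisation trick shows each Choi state is $\varepsilon$-close to at most $O(1/\varepsilon)$ others, and greedy pruning keeps $\Omega(\varepsilon M)$ of them. That loss is absorbed into the exponent.

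\textbf{How you argue.} You replace sphere packing by the elementary maximal-separated-set volume bound $\abs{A}\geq(2\delta)^{-d}$. This suffices, since only the base of the exponential matters. You prove the exponential's non-contraction yourself via Duhamel's formula. You remove the phase directly through tracelessness, showing $\min_\theta\Abs{e^{-i\mathsf{H}}-e^{i\theta}e^{-i\mathsf{H}'}}_F\geq c'\Abs{\mathsf{H}-\mathsf{H}'}_F$ for \emph{every} pair. The paper also works with traceless operators but does not use this for the phase. Your route is self-contained and needs no pruning. The paper's is shorter on the page but rests on two external facts.

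\textbf{Fixes before this is a finished proof.}

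First, delete the aside claiming \cref{lem:sphere_packing} is ``too lossy'', along with the ``direction-restricted'' remark. It is an arithmetic slip: you dropped the $r_d^d$ coming from $s^d$. In fact $s^d\Vol(\Omega)=(r_d/\delta)^d\cdot r_d^{-d}=\delta^{-d}$, so the lemma gives $\frac{d\ln d}{2^{d+1}}\delta^{-d}$ with no $\sqrt d$ loss. Your later sentence saying it ``yields the same conclusion'' contradicts the aside.

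Second, write the phase step out, since ``expanding $e^{i\theta}$'' is only a gesture. Do the packing in the traceless Frobenius ball of radius $1/4$ from the start, taking $\theta\in(-\pi,\pi]$.
\begin{itemize}
\item For $\abs{\theta}\leq 1$, apply your Duhamel bound $\Abs{e^{-iX}-e^{-iY}}_F\geq\bigl(1-\tfrac12(\Abs{X}+\Abs{Y})\bigr)\Abs{X-Y}_F$ with $X=\mathsf{H}$ and $Y=\mathsf{H}'-\theta I$. Since $\Abs{Y}\leq 5/4$, the prefactor is at least $1/4$. Tracelessness gives $\Abs{\mathsf{H}-\mathsf{H}'+\theta I}_F^2=\Abs{\mathsf{H}-\mathsf{H}'}_F^2+2^n\theta^2\geq\Abs{\mathsf{H}-\mathsf{H}'}_F^2$.
\item For $1\leq\abs{\theta}\leq\pi$, use the triangle inequality and $\Abs{e^{-i\mathsf{H}}-I}_F\leq\Abs{\mathsf{H}}_F$. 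This gives a lower bound of $2\sqrt{2^n}\sin(1/2)-\tfrac12>\tfrac12\geq\Abs{\mathsf{H}-\mathsf{H}'}_F$.
\end{itemize}
So $c'=1/4$ works. Step 3 then goes through with $\delta=2\sqrt2\cdot\sqrt{2^n}\,\varepsilon$ and $N\geq(8\delta)^{-(4^n-1)}=(\sqrt{2^n}\varepsilon)^{-\Omega(4^n)}$ for small $\varepsilon$.
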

\begin{proof}
    We start by choosing $n$-qubit Hermitian operators $\mathsf{H}_1,\ldots,\mathsf{H}_M$ with $\Abs{\mathsf{H}}_F \leq 1/2$ individually and pairwise distance
    $\Abs{e^{-i\mathsf{H}_j} - e^{-i\mathsf{H}_k}}_F \geq 10\sqrt{2^{n+1}}\varepsilon$.
    By \cite[Equation~(71)]{gosset2024quantumstatepreparationoptimal}, we have $\Abs{e^{-i\mathsf{H}_j} - e^{-i\mathsf{H}_k}}_F \geq \Abs{\mathsf{H}_j - \mathsf{H}_k}_F/3$, then it is sufficient to make the pairwise distance
    $\Abs{\mathsf{H}_j - \mathsf{H}_k}_F \geq 30\sqrt{2^{n+1}}\varepsilon$.
    Since $n$-qubit Hermitian matrices are determined by $4^n-1$ independent real parameters and can be embedded into $\RR^{4^n-1}$,
    we treat them as vectors in $\RR^{4^n-1}$ endowed with norm $\fnorm{\cdot}$.
    Then, we apply \cref{lem:sphere_packing} with $d = 4^n-1$ to get a set
    \begin{equation*}
        A \subset B\rbra*{0, \frac{r_{d}}{60\sqrt{2^{n+1}}\varepsilon}}
    \end{equation*}
    such that
    \begin{equation*}
        \abs{A} > \rbra*{1-o(1)}\frac{d\ln\rbra*{d}}{2^{d+1}}\Vol\rbra*{B\rbra*{0, \frac{r_d}{60\sqrt{2^{n+1}}\varepsilon}}} \\
    \end{equation*}
    and $\Abs{\mathsf{H}_j - \mathsf{H}_k} \geq 2r_d$ for distinct $\mathsf{H}_j, \mathsf{H}_k \in A$.
    Scaling with factor $60\sqrt{2^{n+1}}\varepsilon/(2r_d)$, we can get
    \begin{equation*}
        A' \subset B\rbra*{0, \frac{1}{2}} = \cbra*{\mathsf{H} \bigm| \Abs{\mathsf{H}}_F \leq \frac{1}{2}, \mathsf{H}\in \Herm\rbra*{2^n} }
    \end{equation*}
    such that
    \begin{align*}
        \abs{A'} >{}& \rbra*{1-o(1)}\frac{d\ln\rbra*{d}}{2^{d+1}}\Vol\rbra*{B\rbra*{0, \frac{r_d}{60\sqrt{2^{n+1}}\varepsilon}}} \\
        ={}& \rbra*{1-o(1)}\frac{d\ln\rbra*{d}}{2^{d+1}}\Vol\rbra*{B\rbra*{0, r_d}}\rbra*{\frac{1}{60\sqrt{2^{n+1}}\varepsilon}}^{d} \\
        ={}& \rbra*{1-o(1)}\frac{d\ln\rbra*{d}}{2^{d+1}}\rbra*{\frac{1}{60\sqrt{2^{n+1}}\varepsilon}}^{d} \tag{by definition of $r_d$ in \cref{lem:sphere_packing}} \\
        ={}& \Omega\rbra*{\rbra*{\frac{1}{60\sqrt{2^{n+1}}\varepsilon}}^d} \\
        \geq{}& \rbra*{\sqrt{2^n}\varepsilon}^{-\Omega(4^n)} \tag{by $d = 4^n-1$}
    \end{align*}
    and $\Abs{\mathsf{H}_j - \mathsf{H}_k} \geq D$ for distinct $\mathsf{H}_j, \mathsf{H}_k \in A'$.
    Thus, we can choose $M = \rbra*{\sqrt{2^n}\varepsilon}^{-\Omega(4^n)}$ and traceless $n$-qubit Hermitian operators $\mathsf{H}_1,\ldots,\mathsf{H}_M$ with $\Abs{\mathsf{H}}_F \leq 1/2$ individually and pairwise distance
    \begin{equation}\label{eq:unitary_f_distance}
        \Abs{e^{-i\mathsf{H}_j} - e^{-i\mathsf{H}_k}}_F \geq 10\sqrt{2^{n+1}}\varepsilon.
    \end{equation}
    By \cref{lem:t2f_distance}, for distinct $j,k \in \sbra{M}$, we have
    \begin{align*}
        \frac{1}{2}\Abs*{\ketbra{\iota_{e^{-i\mathsf{H}_j}}}{\iota_{e^{-i\mathsf{H}_j}}} - \ketbra{\iota_{e^{-i\mathsf{H}_k}}}{\iota_{e^{-i\mathsf{H}_k}}}}_1 
        \geq{}& \frac{1}{\sqrt{2^{n+1}}}\cdot \min_{\theta\in [0,2\pi)}\Abs{e^{-i\mathsf{H}_j}-e^{i\theta}\cdot e^{-i\mathsf{H}_k}}_F.
    \end{align*}
    Then, by applying the discretisation trick for rotation angle $\theta$ in the proof of~\cite[Fact~4.7]{gosset2024quantumstatepreparationoptimal} with \cref{eq:unitary_f_distance}, each $\ket{\iota_{e^{-i\mathsf{H}_j}}}$ can be $\varepsilon$-close to only $O\rbra*{1/\varepsilon}$ other $\ket{\iota_{e^{-i\mathsf{H}_k}}}$'s in trace distance.
    Hence we can find $N = \Omega\rbra*{\varepsilon M} = $ Choi states $\ket{\iota_{e^{-i{\mathsf{H}}_1'}}}, \ldots,\ket{\iota_{e^{-i{\mathsf{H}}_N'}}}$ from $\ket{\iota_{e^{-i{\mathsf{H}}_1}}}, \ldots,\ket{\iota_{e^{-i{\mathsf{H}}_M}}}$ such that their pairwise trace distance is at least $\varepsilon$, which completes the proof.
\end{proof}

\begin{lemma}
    [Adapted from~{\cite[Proof of Proposition~4.6]{gosset2024quantumstatepreparationoptimal}}]\label{lem:state_count}
    Let $\mathcal{C}$ be a Clifford circuit with $m$ Pauli postselections, $n$ input qubits, and $a$ ancillas.
    Then, there are at most $2^{O\rbra{n^2+t^2}}$ possible $n$-qubit states $\ket{\phi}$ such that $\ket{\phi}\ket{0^{t+a}} = \mathcal{C}\rbra*{\ket{0^n}\ket{T}^{\otimes t}\ket{0^a}}$.
\end{lemma}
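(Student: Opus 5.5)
The plan is to show that the output state $\ket{\phi}$ is fixed by a stabilizer state on only $n+t$ qubits. Then we count such stabilizer states. The ancilla count $a$ and the number of postselections $m$ will drop out, since stabilizer states are closed under every operation the circuit performs.

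\textbf{Step 1 (linearity in the magic input).} Write the postselected circuit as an alternating product $L = \Pi_m C_m \cdots \Pi_1 C_1$. Here each $C_i$ is a Clifford unitary and each $\Pi_i = (I + P_i)/2$ is the projector of a Pauli postselection (a sign in $P_i$ absorbs the outcome). Define the linear map
\[
W \colon (\CC^2)^{\otimes t} \to (\CC^2)^{\otimes n}, \qquad W\ket{\psi} = \rbra*{I_{2^n} \otimes \bra{0^{t+a}}} L \rbra*{\ket{0^n}\ket{\psi}\ket{0^a}}.
\]
By hypothesis, $\ket{\phi} = W\ket{T}^{\otimes t}$. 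So $\ket{\phi}$ is determined by $W$ up to normalization, and we count states only up to this scalar and a global phase.

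\textbf{Step 2 (the Choi vector of $W$ is a stabilizer state).} Let $\ket{\Phi}$ be the unnormalized maximally entangled state on $2t$ qubits, and consider $\ket{\Psi_W} = (W \otimes I_{2^t})\ket{\Phi}$. We obtain $\ket{\Psi_W}$ by the following sequence:
\begin{enumerate}
\item start from the stabilizer state $\ket{0^n}\ket{\Phi}\ket{0^a}$ on $n+2t+a$ qubits;
\item apply the Clifford unitaries $C_i$ and the Pauli projectors $\Pi_i$ to the relevant registers;
\item project $t+a$ qubits onto $\ket{0^{t+a}}$ and discard them.
\end{enumerate}
We use the standard facts that Clifford unitaries map stabilizer states to stabilizer states, and that a Pauli projector $(I+P)/2$ maps a stabilizer state either to zero or to a scalar multiple of a stabilizer state (with $P$ or $-P$ replacing an anticommuting generator). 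Postselecting a qubit on $\ket{0}$ is a $Z$-projection followed by discarding a product factor. Hence $\ket{\Psi_W}$ is either $0$ or proportional to a stabilizer state on $n+t$ qubits. Finally, $\ket{\phi} \propto \rbra*{I \otimes \bra{\overline{T}}^{\otimes t}}\ket{\Psi_W}$, where $\overline{T}$ denotes the complex conjugate of $\ket{T}$. So $\ket{\phi}$ is a fixed function of the normalized stabilizer state underlying $\ket{\Psi_W}$.

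\textbf{Step 3 (counting).} The number of $(n+t)$-qubit stabilizer states is $2^{O((n+t)^2)} = 2^{O(n^2+t^2)}$, for instance by counting stabilizer tableaux together with their sign bits. This bound is independent of $a$ and $m$, so the number of distinct $\ket{\phi}$ is at most $2^{O(n^2+t^2)}$.

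I expect the main care to be needed in Step 2. There we must check closure of stabilizer states under Pauli projections: in the anticommuting case the projection rescales the state by $1/\sqrt{2}$, and in the case where $-P$ is already a stabilizer the result is zero. We must also handle the partial projection onto $\ket{0^{t+a}}$ cleanly. Reducing the qubit-wise postselection to a $Z$-projection followed by removal of a product factor should dispose of both points.
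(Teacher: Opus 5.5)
Your proof is correct. The Choi-vector construction shows that $\ket{\phi}$, as a ray, is a fixed function of an $(n+t)$-qubit stabilizer state, uniformly in $a$ and $m$, and the $2^{O((n+t)^2)}$ count of stabilizer states then gives the bound. The paper gives no argument of its own and only cites the proof of Proposition~4.6 in Gosset--Kothari--Wu; any such proof uses the same two facts you use (stabilizer states are closed under Clifford gates and Pauli postselection, and there are $2^{O((n+t)^2)}$ stabilizer objects on $n+t$ qubits), so your write-up is a self-contained version of the cited argument.
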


We are now ready to prove \cref{prop:lower_postselection}.

\begin{proof}[Proof of \cref{prop:lower_postselection}] 
    The lower bound $\Omega\rbra{\log\rbra{1/\varepsilon}}$ follows from~\cite[Lemma~5.9]{beverland2020lower} for single qubit unitaries.

    By \cref{lem:hamiltonian_count}, there are at least $N = \rbra*{\sqrt{2^n}\varepsilon}^{-\Omega(4^n)}$ $n$-qubit Hamiltonians ${H}_1, \ldots, {H}_N$ with each $\Abs{{H}_j}_F \leq 1$ such that the pairwise trace distance between Choi states $\ket{\iota_{e^{-i{H}_1}}}, \ldots,\ket{\iota_{e^{-i{H}_N}}}$ is greater than $2\varepsilon$.
    
    On the other hand, by \cref{lem:state_count}, to prepare each $2n$-qubit Choi state $\ket{\iota_{e^{-i{H}_j}}}$ up to error $\varepsilon$ in trace distance, we need $2^{O\rbra{n^2+t^2}} \geq \rbra*{\sqrt{2^n}\varepsilon}^{-\Omega(4^n)}$, which gives $t = \Omega\rbra*{2^n\sqrt{\log\rbra*{1/\varepsilon}-n}}$.
\end{proof}

\section*{Acknowledgment}
Large language model tools were used to assist with exposition, literature search, and exploratory discussions of possible approaches to the research problem. 
The authors reviewed and edited the manuscript as needed and take full responsibility for its content.

\addcontentsline{toc}{section}{References}

\bibliographystyle{alphaurl}
\bibliography{main}

\end{document}